\documentclass[11pt]{article}%
\usepackage{amsfonts}
\usepackage{amsmath}
\usepackage{fullpage}
\usepackage{amssymb}
\usepackage{graphicx}
\usepackage[colorlinks=true,linkcolor=blue,citecolor=red,plainpages=false,pdfpagelabels]%
{hyperref}%
\setcounter{MaxMatrixCols}{30}
\providecommand{\U}[1]{\protect\rule{.1in}{.1in}}
\newtheorem{theorem}{Theorem}

\newtheorem{definition}[theorem]{Definition}

\newtheorem{lemma}[theorem]{Lemma}

\newtheorem{proposition}[theorem]{Proposition}
\newtheorem{remark}[theorem]{Remark}

\newenvironment{proof}[1][Proof]{\noindent\textbf{#1.} }{\ \rule{0.5em}{0.5em}}
\numberwithin{equation}{section}
\begin{document}

\title{\textbf{Amortized entanglement of a quantum channel and approximately
teleportation-simulable channels}}
\author{Eneet Kaur\thanks{Department of Physics and Astronomy, Louisiana State
University, Baton Rouge, Louisiana 70803, USA}
\and Mark M. Wilde\footnotemark[1] \thanks{Center for Computation and Technology,
Louisiana State University, Baton Rouge, Louisiana 70803, USA}}
\maketitle

\begin{abstract}
This paper defines the amortized entanglement of a quantum channel as the
largest difference in entanglement between the output and the input of the
channel, where entanglement is quantified by an arbitrary entanglement
measure. We prove that the amortized entanglement of a channel obeys several
desirable properties, and we also consider special cases such as the amortized
relative entropy of entanglement and the amortized Rains relative entropy. These latter quantities are shown to be single-letter upper bounds on the
secret-key-agreement and PPT-assisted quantum capacities of
a quantum channel, respectively.
Of
especial interest is a uniform continuity bound for these latter two special
cases of amortized entanglement, in which the deviation between the amortized
entanglement of two channels is bounded from above by a simple function of the
diamond norm of their difference and the output dimension of the channels. We
then define approximately teleportation- and
positive-partial-transpose-simulable (PPT-simulable) channels as those that
are close in diamond norm to a channel which is either exactly teleportation-
or PPT-simulable, respectively. These results then lead to single-letter upper
bounds on the secret-key-agreement and PPT-assisted quantum capacities of
channels that are approximately teleportation- or PPT-simulable, respectively.
Finally, we generalize many of the concepts in the paper to the setting of
general resource theories, defining the amortized resourcefulness of a channel
and the notion of $\nu$-freely-simulable channels, connecting these concepts
in an operational way as well.

\end{abstract}

\section{Introduction}

Evaluating or determining bounds on the various communication capacities of a
quantum channel is one of the main concerns of quantum information theory
\cite{H13book,W15book}. One can consider supplementing a channel with an
additional resource such as free entanglement
\cite{PhysRevLett.83.3081,ieee2002bennett}\ or classical communication
\cite{BBPSSW96EPP,BDSW96}, and such a consideration leads to different kinds
of capacities. Supplementing a channel with free classical communication, with
the goal being to communicate quantum information or private classical
information reliably, is of particular relevance due to its connection with
the operational setting of quantum key distribution \cite{bb84,E91}. The
former is called the local operations and classical communication (LOCC)
assisted quantum capacity, while the latter is called the secret-key-agreement capacity.

The relevance of these latter capacities is that an upper bound on them can
serve as a benchmark to determine whether one has experimentally implemented a
working quantum repeater \cite{L15}, which is a device needed for the
practical implementation of quantum key distribution. A first result in this
direction, building on earlier developments in \cite{CW04,C06,HHHO05,HHHO09},
is due to \cite{TGW14IEEE,TGW14Nat} (see also \cite{Wilde2016}), in which it
was shown that the squashed entanglement of a quantum channel is an upper
bound on both its LOCC-assisted quantum capacity and its secret-key-agreement
capacity. Some follow-up works \cite{PLOB15,WTB16}\ then considered other
entanglement measures such as relative entropy of entanglement and established
their relevance as bounds on these capacities in certain cases. There has been
an increasing interest in this topic in recent years, with a series of papers
developing it further
\cite{TGW14IEEE,TGW14Nat,STW16,PLOB15,Goodenough2015,TSW16,AML16,WTB16,Christandl2017,Wilde2016,BA17,RGRKVRHWE17,KW17,TSW17,RKBKMA17}%
.

In this paper, we develop this topic even further, in the following ways:

\begin{enumerate}
\item First, we define the amortized entanglement of a quantum channel as the
largest difference in the entanglement between the output and the input of the
channel, with entanglement quantified by some entanglement measure
\cite{H42007}. We note that amortized entanglement is closely related to ideas
put forth in \cite{BHLS03,LHL03,Christandl2017,BGMW17}, which were used
therein to give bounds on the performance of adaptive protocols (see also the
very recent paper \cite{RKBKMA17}\ for related ideas).

\item We then prove several properties of the amortized entanglement, while
considering special cases in which the entanglement measure is set to the
relative entropy of entanglement \cite{VP98}\ or the Rains relative entropy
\cite{R99,R01,AdMVW02}.
These latter quantities are shown to be single-letter upper bounds on the
secret-key-agreement and PPT-assisted quantum capacities of
a quantum channel, respectively.
Another important property that we establish in these
special cases is that the amortized entanglement obeys a uniform continuity
bound of the flavor in \cite{Winter15,S16cont}, with a dependence on the
output dimension of the two channels under consideration and the diamond norm
of their difference \cite{K97}.

\item These latter results lead to upper bounds on the secret-key-agreement
capacity of approximately teleportation-simulable channels (channels that are
close in diamond norm to a teleportation-simulable channel \cite{BDSW96,HHH99,CDP09}%
). Similarly, we find upper bounds on the positive-partial-transpose
(PPT)\ assisted quantum capacity for approximately PPT-simulable channels
(defined later). The main idea behind obtaining these bounds is broadly
similar to the approach of approximately degradable channels put forth in
\cite{SSWR14}.

\item We next showcase the aforementioned bounds for a simple qubit channel,
which is a convex combination of an amplitude damping channel and a
depolarizing channel (note that this channel is considered in the concurrent
work \cite{LKDW17} as well). The main finding here is that the upper bounds
from approximate simulation are reasonably close to lower bounds on the
capacities whenever the noise in the channel is low, and this result is
consistent with that which was found in earlier work \cite{SSWR14,LLS17}.

\item Finally, we discuss how many of the concepts developed in our paper can
be extended to general resource theories \cite{BG15,fritz_2015,RKR15,KR16}. In
particular, we discuss the amortized resourcefulness of a quantum channel and
prove how it leads to an upper bound on the amount of resourcefulness that can
be extracted from multiple calls to a quantum channel by interleaving calls to
it with free channels. We also introduce the notion of a $\nu$%
-freely-simulable channel as a generalization of the concept of a
teleportation-simulable channel.
\end{enumerate}

At the end of the paper, we conclude with a summary and open questions. The
rest of our paper proceeds in the order given above,
Appendix~\ref{sec:supp-lemmas} provides some supplementary lemmas that are
needed to establish the uniform continuity bound mentioned above, and
Appendix~\ref{sec:cov-TP-sim} discusses the relation between approximate
covariance \cite{LKDW17} and approximately teleportation-simulable channels, as well as
showing how to simulate the twirling of a channel \cite{BDSW96} via a generalized
teleportation protocol. Throughout our paper, we use notation and concepts
that are by now standard in quantum information theory, and we point the
reader to \cite{W15book}\ for background.

\section{Amortized entanglement of a quantum channel}

\begin{figure}[ptb]
\begin{center}
\includegraphics[
width=3.5405in
]{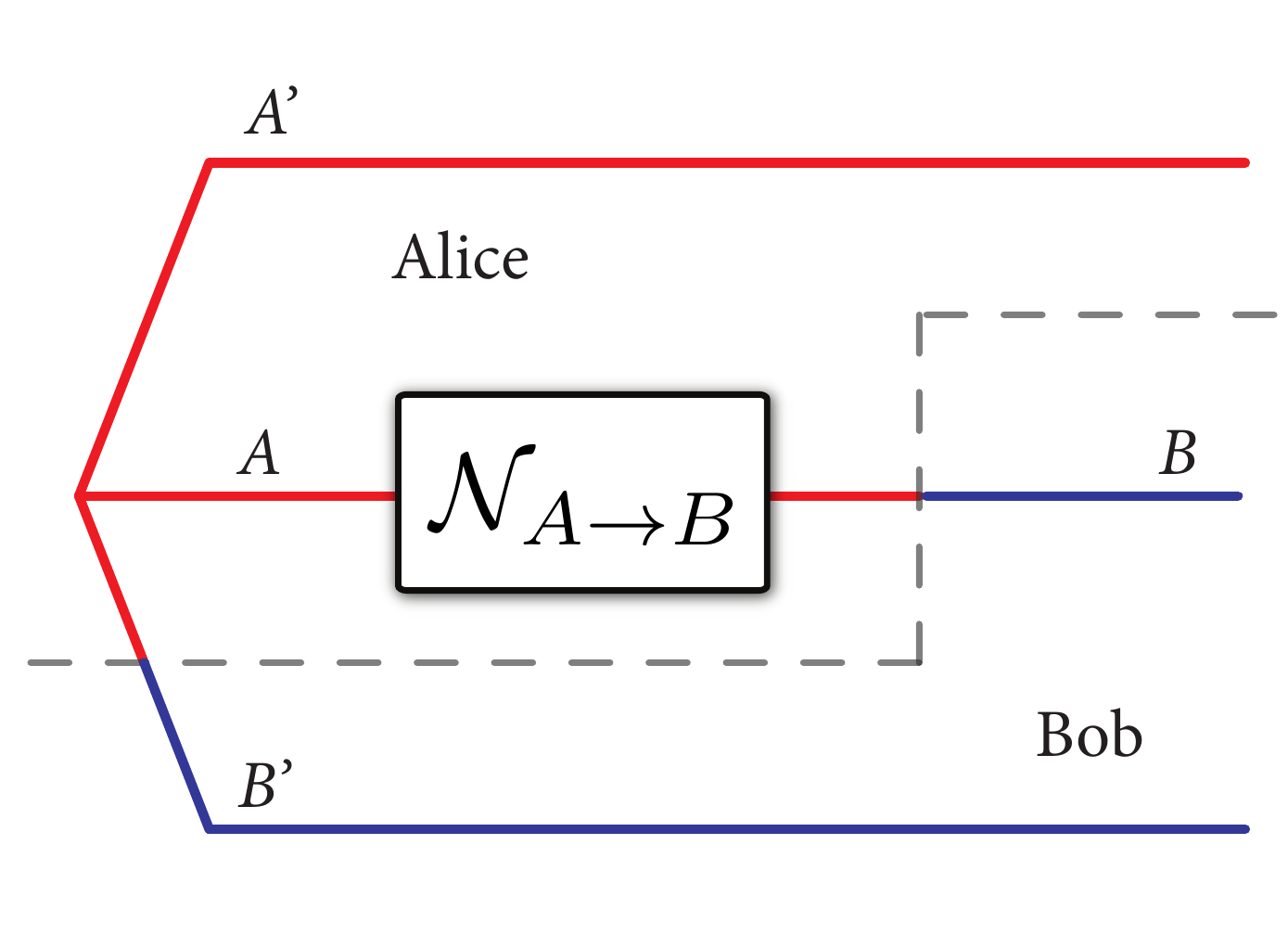}
\end{center}
\caption{The amortized entanglement of a quantum channel $\mathcal{N}_{A\to
B}$ is the largest difference in the entanglement between the output state on
systems $A^{\prime}:BB^{\prime}$ and the input state on systems $A^{\prime
}A:B^{\prime}$, the former of which is generated by the quantum channel
$\mathcal{N}_{A\rightarrow B}$.}%
\label{fig:amortized}%
\end{figure}

We begin by defining the amortized entanglement of a quantum channel as the
largest difference that can be achieved between the entanglement of an output
and input state of a quantum channel (see Figure~\ref{fig:amortized}\ for a
visual illustration of the scenario to which amortized entanglement
corresponds). Definition~\ref{def:amortized-ent} below applies to any
entanglement measure, which, as in \cite{H42007}, we define to be any function
of a bipartite quantum state that is monotone with respect to an LOCC channel,
i.e., a quantum channel that can be implemented by local operations and
classical communication (LOCC). As a minimal requirement, we also take an
entanglement measure to be equal to zero when evaluated on a product state and non-negative in general.

\begin{definition}
[Amortized entanglement of a quantum channel]\label{def:amortized-ent}For a
quantum channel $\mathcal{N}_{A\rightarrow B}$ and an entanglement measure
$E$, we define the channel's amortized entanglement as follows:%
\begin{equation}
E_{A}(\mathcal{N})\equiv\sup_{\rho_{A^{\prime}AB^{\prime}}}E(A^{\prime
};BB^{\prime})_{\theta}-E(A^{\prime}A;B^{\prime})_{\rho},
\end{equation}
where $\theta_{A^{\prime}BB^{\prime}}\equiv\mathcal{N}_{A\rightarrow B}%
(\rho_{A^{\prime}AB^{\prime}})$.
\end{definition}

As we stressed in the introduction, the quantity $E_{A}(\mathcal{N})$ is
closely related to ideas from prior work
\cite{BHLS03,LHL03,Christandl2017,BGMW17}, as well as the very recent
\cite{RKBKMA17}. Intuitively, the amortized entanglement of a channel captures
the largest difference in entanglement that can be generated between the
output and input of the channel.

Recall that the quantum relative entropy $D(\varsigma\Vert\xi)$ for a state
$\varsigma$ and a positive semi-definite operator $\xi$ is defined as
\cite{U62,Lindblad1973}%
\begin{equation}
D(\varsigma\Vert\xi)\equiv\operatorname{Tr}\{\varsigma\left[  \log
_{2}\varsigma-\log_{2}\xi\right]  \},
\end{equation}
whenever $\operatorname{supp}(\varsigma)\subseteq\operatorname{supp}(\xi)$ and
it is equal to $+\infty$ otherwise. We obtain two special cases of amortized
entanglement by considering the relative entropy of entanglement
\cite{VP98}\ and the Rains relative entropy \cite{R99,R01}\ as the underlying
entanglement measures. The former entanglement measure is relevant in the
context of secret-key distillation \cite{HHHO05,HHHO09}\ and the latter in the
context of entanglement distillation \cite{R99,R01}, both tasks performed with
respect to a bipartite state. Sections~\ref{sec:amortized-rel-ent-SKA} and
\ref{sec:amortized-Rains-q-comm} show how the amortized measures given below
are relevant in the context of secret-key-agreement and quantum communication
assisted by classical communication, respectively, both tasks performed with
respect to a quantum channel.

\begin{definition}
[Amortized relative entropy of entanglement]For a quantum channel
$\mathcal{N}_{A\rightarrow B}$, its amortized relative entropy of entanglement
is defined as follows:%
\begin{equation}
E_{AR}(\mathcal{N})\equiv\sup_{\rho_{A^{\prime}AB^{\prime}}}E_{R}(A^{\prime
};BB^{\prime})_{\theta}-E_{R}(A^{\prime}A;B^{\prime})_{\rho},
\end{equation}
where $\theta_{A^{\prime}BB^{\prime}}\equiv\mathcal{N}_{A\rightarrow B}%
(\rho_{A^{\prime}AB^{\prime}})$ and the relative entropy of entanglement
$E_{R}(C;D)_{\tau}$\ of a bipartite state $\tau_{CD}$ is defined as
\cite{VP98}%
\begin{equation}
E_{R}(C;D)_{\tau}\equiv\inf_{\sigma_{CD}\in\operatorname{SEP}(C:D)}D(\tau
_{CD}\Vert\sigma_{CD}),
\end{equation}
with $\operatorname{SEP}$ denoting the set of separable states \cite{W89}.
\end{definition}

\begin{definition}
[Amortized Rains relative entropy]\label{def:amortized-Rains}For a quantum
channel $\mathcal{N}_{A\rightarrow B}$, its amortized Rains relative entropy
is defined as follows:%
\begin{equation}
R_{A}(\mathcal{N})\equiv\sup_{\rho_{A^{\prime}AB^{\prime}}}R(A^{\prime
};BB^{\prime})_{\theta}-R(A^{\prime}A;B^{\prime})_{\rho},
\end{equation}
where $\theta_{A^{\prime}BB^{\prime}}\equiv\mathcal{N}_{A\rightarrow B}%
(\rho_{A^{\prime}AB^{\prime}})$ and the Rains relative entropy $R(C;D)_{\tau}%
$\ of a bipartite state $\tau_{CD}$ is defined as \cite{R99,R01}%
\begin{equation}
R(C;D)_{\tau}\equiv\inf_{\sigma_{CD}\in\operatorname{PPT}^{\prime}(C:D)}%
D(\tau_{CD}\Vert\sigma_{CD}),
\end{equation}
with $\operatorname{PPT}^{\prime}(C\!:\!D)$ denoting the Rains set
\cite{AdMVW02}:%
\begin{equation}
\operatorname{PPT}^{\prime}(C\!:\!D)=\{\sigma_{CD}:\sigma_{CD}\geq
0\wedge\left\Vert T_{D}(\sigma_{CD})\right\Vert _{1}\leq1\},
\end{equation}
and $T_{D}$ denotes the partial transpose of system $D$.
\end{definition}

Observe that $\operatorname{SEP}\subset\operatorname{PPT}^{\prime}$. Also,
note that the quantities $E_{AR}(\mathcal{N})$ and $R_{A}(\mathcal{N})$
involve an optimization over mixed states on systems $A^{\prime}AB^{\prime}$,
and we do not have an upper bound on the dimension of the $A^{\prime}$ or
$B^{\prime}$ systems. So these quantities could be difficult to calculate in
general. One of the main contributions of our paper (see
Section~\ref{sec:approx-TP-bound}) is to show how this quantity can be
approximated well in certain cases.

\subsection{Amortized entanglement versus the entanglement of a channel}

For any entanglement measure $E$, the entanglement of the channel is defined
as \cite{TGW14IEEE,TGW14Nat,TWW14,PLOB15,RKBKMA17}%
\begin{equation}
E(\mathcal{N})\equiv\sup_{\psi_{A^{\prime}A}}E(A^{\prime};B)_{\theta},
\end{equation}
where%
\begin{equation}
\theta_{A^{\prime}B}\equiv\mathcal{N}_{A\rightarrow B}(\psi_{A^{\prime}A}),
\end{equation}
and $\psi_{A^{\prime}A}$ is an arbitrary pure bipartite state with system $A'$ isomorphic to the channel input system~$A$. It suffices to optimize over pure states of the above form instead of general mixed states, due to purification, Schmidt decomposition, and monotonicity of the entanglement measure $E$ with respect to local operations (one of which is partial trace). Particular measures of interest are a channel's 
relative entropy of entanglement and the Rains relative entropy:
\begin{align}
E_{R}(\mathcal{N})  &  =\sup_{\psi_{A^{\prime}A}}E_{R}(A^{\prime}%
;B)_{\theta},\\
R(\mathcal{N})  &  =\sup_{\psi_{A^{\prime}A}}R(A^{\prime};B)_{\theta%
}.
\end{align}

The amortized entanglement of a channel is never smaller than that channel's
entanglement. That is, we always have the following inequality:%
\begin{equation}
E_{A}(\mathcal{N})\geq E(\mathcal{N}), \label{eq:amortized->=usual}%
\end{equation}
by taking $B^{\prime}$ to be a trivial system in
Definition~\ref{def:amortized-ent}.

The squashed entanglement $E_{\operatorname{sq}}$\ is a special entanglement
measure that obeys many desirable properties \cite{CW04,KW04,C06,BCY11,LW14}
(see also the various discussions in \cite{T99,T02}). One can also define the
dynamic version of this entanglement measure as the squashed entanglement of a
channel \cite{TGW14IEEE}, denoted as $E_{\operatorname{sq}}(\mathcal{N})$. A
particular property of squashed entanglement was established as
\cite[Theorem~7]{TGW14IEEE}. We remark here (briefly) that \cite[Theorem~7]%
{TGW14IEEE} implies the following inequality for the amortized version of
squashed entanglement%
\begin{equation}
E_{A,\operatorname{sq}}(\mathcal{N})\leq E_{\operatorname{sq}}(\mathcal{N}),
\end{equation}
which by \eqref{eq:amortized->=usual}, implies the following equality for
squashed entanglement:%
\begin{equation}
E_{A,\operatorname{sq}}(\mathcal{N})=E_{\operatorname{sq}}(\mathcal{N}).
\end{equation}
Thus, the squashed entanglement is rather special, in the sense that
amortization does not enhance its value.

\subsection{Convexity of a channel's amortized entaglement}

An entanglement measure $E$ is convex with respect to states \cite{H42007} if for all bipartite states $\rho^{0}_{CD}$ and $\rho^{1}_{CD}$ and $\lambda \in [0,1]$, the following equality holds:
\begin{equation}
E(C;D)_{\rho^\lambda} \leq 
\lambda E(C;D)_{\rho^0} + 
(1-\lambda) E(C;D)_{\rho^1},
\end{equation}
where $\rho^\lambda_{CD} = 
\lambda \rho^0_{CD} + 
(1-\lambda) \rho^1_{CD}$.
As the following proposition states, this property extends to amortized entanglement:

\begin{proposition}
[Convexity] Let $E$ be an  entanglement measure that is convex with respect to states. Then the amortized entanglement $E_{A}$\ of a channel is convex with respect to channels, in the sense that the following inequality holds for all quantum channels $\mathcal{N}^0$ and $\mathcal{N}^1$ and
$\lambda \in [0,1]$: 
\begin{equation}
E_{A}(\mathcal{N}^\lambda)\leq \lambda E_A(\mathcal{N}^0)+(1-\lambda)E_A(\mathcal{N}^1),
\end{equation}
where $\mathcal{N}^\lambda = \lambda \mathcal{N}^0+(1-\lambda)\mathcal{N}^1$.

\end{proposition}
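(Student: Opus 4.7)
The plan is to reduce the convexity claim for the channel measure $E_A$ directly to the assumed convexity of the underlying state measure $E$, exploiting the fact that the input state appearing in the definition of amortized entanglement does not depend on the channel.

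First, I would fix an arbitrary input state $\rho_{A'AB'}$ and consider the three output states $\theta^0_{A'BB'} \equiv \mathcal{N}^0_{A\to B}(\rho_{A'AB'})$, $\theta^1_{A'BB'} \equiv \mathcal{N}^1_{A\to B}(\rho_{A'AB'})$, and $\theta^\lambda_{A'BB'} \equiv \mathcal{N}^\lambda_{A\to B}(\rho_{A'AB'})$. The key observation is that linearity of channel action gives
\begin{equation}
\theta^\lambda_{A'BB'} = \lambda\, \theta^0_{A'BB'} + (1-\lambda)\, \theta^1_{A'BB'}.
\end{equation}
Applying convexity of $E$ across the cut $A'\!:\!BB'$ then yields
\begin{equation}
E(A';BB')_{\theta^\lambda} \leq \lambda E(A';BB')_{\theta^0} + (1-\lambda) E(A';BB')_{\theta^1}.
\end{equation}

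Next, I would subtract $E(A'A;B')_\rho$ from both sides. Because $\rho_{A'AB'}$ is a single fixed input independent of the channel, I can rewrite $E(A'A;B')_\rho = \lambda E(A'A;B')_\rho + (1-\lambda) E(A'A;B')_\rho$ and redistribute, obtaining
\begin{equation}
E(A';BB')_{\theta^\lambda} - E(A'A;B')_\rho \leq \lambda\bigl[E(A';BB')_{\theta^0} - E(A'A;B')_\rho\bigr] + (1-\lambda)\bigl[E(A';BB')_{\theta^1} - E(A'A;B')_\rho\bigr].
\end{equation}
Each bracketed quantity is, by Definition~\ref{def:amortized-ent}, bounded above by $E_A(\mathcal{N}^0)$ and $E_A(\mathcal{N}^1)$ respectively, since $\rho_{A'AB'}$ is a feasible input for the corresponding suprema.

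Finally, taking the supremum over $\rho_{A'AB'}$ on the left-hand side gives $E_A(\mathcal{N}^\lambda) \leq \lambda E_A(\mathcal{N}^0) + (1-\lambda) E_A(\mathcal{N}^1)$, as desired. There is no real obstacle here; the only subtlety worth flagging is the trick of using the same input $\rho$ for both channels, which is what makes the otherwise-subtracted entanglement term cancel cleanly across the convex combination. The argument requires nothing beyond convexity of $E$ in states and linearity of quantum channels.
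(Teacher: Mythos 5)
Your proof is correct and follows essentially the same route as the paper's: fix a common input $\rho_{A'AB'}$, use linearity of the channels together with convexity of $E$ on the output state, split the subtracted input-entanglement term as a convex combination of itself, bound each resulting difference by the corresponding amortized entanglement, and take the supremum over $\rho$. No gaps to report.
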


\begin{proof}
Let $\rho_{A'AB'}$ be a state and set $\tau_{A'BB'}=\lambda \mathcal{N}^0_{A\rightarrow B}(\rho_{A'AB'})+(1-\lambda)\mathcal{N}^1(\rho_{A'AB'})$. Then consider that
\begin{align}
&E(A';BB')_{\tau}-E(A'A;B')_{\rho}
\nonumber
\\&= E(A';BB')_{\tau}-\lambda E(A'A;B')_{\rho}-(1-\lambda)E(A'A;B')_{\rho}  \\
&\leq \lambda E(A';BB')_{\mathcal{N}^0(\rho)}+(1-\lambda)E(A';BB')_{\mathcal{N}^1(\rho)}-\lambda E(A'A;B')_{\rho}-(1-\lambda)E(A'A;B')_{\rho}\\
&= \lambda (E(A';BB')_{\mathcal{N}^0(\rho)}-E(A'A;B')_{\rho})+(1-\lambda) (E(A';BB')_{\mathcal{N}^1(\rho)}-E(A'A;B')_{\rho})\\
&\leq \lambda  E_A(\mathcal{N}^0)+(1-\lambda)E_A(\mathcal{N}^1)\label{eq:convex_ineq}
\end{align}
The first equality follows from expanding the second term. The first inequality follows from the convexity of the entanglement measure $E$. The second equality from rearrangement of the terms. The second inequality follows from taking the supremum over all states $\rho_{A'AB'}$. Since the inequality in $\eqref{eq:convex_ineq}$ holds for all states $\rho_{A'AB'}$, the proof is complete.
\end{proof}

\subsection{Faithfulness of a channel's amortized entanglement}

An entanglement measure $E$ is faithful if it is equal to zero if and only if the state on which it is evaluated is a separable state. A quantum channel $\mathcal{N}$ is entanglement-breaking
\cite{HSR03}
if for all input states $\rho_{RA}$, the output state $(\operatorname{id}_R \otimes \mathcal{N}_{A\to B})(\rho_{RA})$ is a separable state. The following proposition extends the faithfulness property of entanglement measures to amortized entanglement and entanglement-breaking channels:

\begin{proposition}
[Faithfulness] Let $E$ be an  entanglement measure that is equal to zero for all separable states. If a channel $\mathcal{N}$ is entanglement-breaking, then its amortized entanglement $E_A(\mathcal{N})$ is equal to zero. If the entanglement measure $E$ is faithful and the amortized entanglement $E_{A}(\mathcal{N})$ of a channel $\mathcal{N}$ is equal to zero, then the channel $\mathcal{N}$ is entanglement-breaking.
\end{proposition}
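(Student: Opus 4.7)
The plan is to prove the two implications separately, each directly from the LOCC monotonicity and (for the converse) faithfulness of $E$. Before either direction, I would record that $E_{A}(\mathcal{N}) \geq 0$ always: taking $B'$ trivial and any $\rho_{A'A}$ in Definition~\ref{def:amortized-ent}, the input is a product across the cut $A'A : B'$, so $E(A'A;B')_\rho = 0$, while $E(A';B)_\theta \geq 0$ by non-negativity of $E$.

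For the first implication, I would exploit the measure-and-prepare form of an entanglement-breaking channel, $\mathcal{N}(\cdot) = \sum_{k} \operatorname{Tr}[M_{k} (\cdot)]\, \sigma_{k}$. The action of $\mathcal{N}_{A\to B}$ on any input $\rho_{A'AB'}$ can be simulated by an LOCC protocol between Alice (holding $A'A$) and Bob (holding $B'$): Alice performs the POVM $\{M_{k}\}$ on system $A$, classically communicates the outcome $k$ to Bob, and Bob prepares $\sigma_{k}$ in a fresh system $B$. Afterwards Alice holds $A'$, Bob holds $BB'$, and the joint state is exactly $\theta_{A'BB'}=\mathcal{N}_{A\to B}(\rho_{A'AB'})$. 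Since this is an LOCC transformation from the bipartite cut $A'A:B'$ to the cut $A':BB'$, LOCC monotonicity of $E$ gives $E(A';BB')_{\theta} \leq E(A'A;B')_{\rho}$ for every input, so $E_{A}(\mathcal{N}) \leq 0$. Combined with the preceding paragraph, this yields $E_{A}(\mathcal{N}) = 0$.

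For the converse, I would argue by contrapositive. If $\mathcal{N}$ is not entanglement-breaking, then by definition there exists a state $\rho_{RA}$ such that $(\operatorname{id}_{R} \otimes \mathcal{N}_{A\to B})(\rho_{RA})$ is entangled across the cut $R:B$, and faithfulness of $E$ then gives $E(R;B)_{\mathcal{N}(\rho)} > 0$. Plugging $A' = R$, $B'$ trivial, and $\rho_{A'A} = \rho_{RA}$ into Definition~\ref{def:amortized-ent}, the input term evaluates to $E(A'A;B')_{\rho} = 0$ while the output term is strictly positive, so $E_{A}(\mathcal{N}) > 0$, a contradiction. The only subtlety lies in the first direction, namely spelling out the LOCC protocol carefully enough that the monotonicity inequality is legitimately applied across the shifting bipartition; this is standard once the protocol is written down, and I anticipate no serious obstacle beyond it.
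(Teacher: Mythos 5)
Your proposal is correct and follows essentially the same route as the paper: the forward direction uses the measure-and-prepare (LOCC) realization of an entanglement-breaking channel together with LOCC monotonicity and the non-negativity of $E_A$, and your contrapositive for the converse is just a logical rearrangement of the paper's argument via $E_{A}(\mathcal{N})\geq E(\mathcal{N})$ and faithfulness. No gaps.
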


\begin{proof}
We begin by proving the first statement above: for an arbitrary entanglement measure $E$, if a channel $\mathcal{N}$ is entanglement breaking, then its amortized entanglement is equal to zero. 
Let $\rho_{A'AB'}$ be an arbitrary input state to the channel, and let $\sigma_{A'BB'}=\mathcal{N}(\rho_{A'AB'})$ denote the output of the channel.
Recall that any entanglement-breaking channel can be represented as a measurement of the input system, followed by the preparation of a state on the output system, conditioned on the outcome of the measurement \cite{HSR03}. As such, the channel itself can be implemented by LOCC from the sender to the receiver. Then consider that
\begin{equation}
E(A';BB')_{\sigma}-E(A'A;B')_{\rho}
\leq E(A'A;B')_{\rho}-E(A'A;B')_{\rho}=0.
\end{equation}
The inequality follows from the fact that $E$ is an entanglement measure and is thus monotone with respect to LOCC. Given that we always have $E_{A}(\mathcal{N}) \geq 0$, we conclude that $E_{A}(\mathcal{N}) = 0$. 


Now we prove the second statement above: if the entanglement measure $E$ is faithful and the amortized entanglement $E_{A}(\mathcal{N})$ of a channel $\mathcal{N}$ is equal to zero, then the channel $\mathcal{N}$ is entanglement-breaking.  From \eqref{eq:amortized->=usual}, we have that $E_A(\mathcal{N})\geq E(\mathcal{N})$, which in turn implies that  $E(\mathcal{N})=0$. 
Since the underlying entanglement measure $E$ is faithful and
$E(A';B)_{\mathcal{N}(\rho)} \leq E(\mathcal{N})$ for all mixed input states $\rho_{A'A}$, we conclude that for all mixed input states  $ \rho_{A'A}$, the output state $\mathcal{N}_{A \to B}(\rho_{A'A})$ is separable. Thus, $\mathcal{N}$ is entanglement breaking.  
\end{proof}

\subsection{(Sub)additivity of a channel's amortized entanglement}

\begin{proposition}
[Subadditivity]\label{prop:amortized-subadditive}For any entanglement
measurement $E$, the amortized entanglement $E_{A}$\ of a channel is a
subadditive function of quantum channels, in the sense that the following
inequality holds for quantum channels $\mathcal{N}$ and $\mathcal{M}$:%
\begin{equation}
E_{A}(\mathcal{N}\otimes\mathcal{M})\leq E_{A}(\mathcal{N})+E_{A}%
(\mathcal{M}). \label{eq:amortized-subadd-gen}%
\end{equation}

\end{proposition}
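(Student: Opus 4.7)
The plan is to treat the parallel channel $\mathcal{N}\otimes\mathcal{M}$ as a sequential application of $\mathcal{N}$ followed by $\mathcal{M}$, insert an intermediate state, and telescope the entanglement difference so that each piece is individually controlled by one of the amortized entanglements.

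Concretely, let $\mathcal{N}:A_{1}\to B_{1}$ and $\mathcal{M}:A_{2}\to B_{2}$, and take an arbitrary input state $\rho_{A'A_{1}A_{2}B'}$ to $\mathcal{N}\otimes\mathcal{M}$. Introduce the intermediate state
\begin{equation}
\omega_{A'B_{1}A_{2}B'} \equiv \mathcal{N}_{A_{1}\to B_{1}}(\rho_{A'A_{1}A_{2}B'}),
\end{equation}
so that $\theta_{A'B_{1}B_{2}B'} = \mathcal{M}_{A_{2}\to B_{2}}(\omega_{A'B_{1}A_{2}B'})$. I would then add and subtract $E(A'A_{2};B_{1}B')_{\omega}$ inside the difference $E(A';B_{1}B_{2}B')_{\theta}-E(A'A_{1}A_{2};B')_{\rho}$ to obtain
\begin{align}
& E(A';B_{1}B_{2}B')_{\theta} - E(A'A_{1}A_{2};B')_{\rho} \nonumber\\
&\quad = \bigl[ E(A';B_{1}B_{2}B')_{\theta} - E(A'A_{2};B_{1}B')_{\omega} \bigr] + \bigl[ E(A'A_{2};B_{1}B')_{\omega} - E(A'A_{1}A_{2};B')_{\rho} \bigr].
\end{align}

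The first bracket is exactly of the form bounding $E_{A}(\mathcal{M})$ after identifying the ``reference'' system with $A'$ and the ``Bob-side'' system with $B_{1}B'$, with $\omega$ playing the role of the input state to $\mathcal{M}$. The second bracket is similarly of the form bounding $E_{A}(\mathcal{N})$ after identifying the ``reference'' with $A'A_{2}$ and the ``Bob-side'' with $B'$, with $\rho$ playing the role of the input state to $\mathcal{N}$. Hence each bracket is at most $E_{A}(\mathcal{M})$ and $E_{A}(\mathcal{N})$ respectively, so the overall difference is bounded by $E_{A}(\mathcal{N})+E_{A}(\mathcal{M})$. Taking the supremum over $\rho_{A'A_{1}A_{2}B'}$ gives \eqref{eq:amortized-subadd-gen}.

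The only subtle point, and the step I would double-check most carefully, is the bookkeeping of the bipartite cuts in the telescoping identity: one has to verify that the cut after inserting $E(A'A_{2};B_{1}B')_{\omega}$ really matches the output cut required for $E_{A}(\mathcal{N})$ and the input cut required for $E_{A}(\mathcal{M})$. Because amortized entanglement is defined over all mixed states on arbitrary $A'$ and $B'$ systems of unrestricted size, the regrouping $\tilde{A}'\leftarrow A'A_{2}$ (for the $\mathcal{N}$ step) and $\tilde{B}'\leftarrow B_{1}B'$ (for the $\mathcal{M}$ step) is permitted, and the argument goes through. No other property of the entanglement measure $E$ beyond its role in the definition of $E_{A}$ is needed, so the result applies for every entanglement measure.
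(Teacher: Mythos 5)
Your proof is correct and follows essentially the same telescoping argument as the paper: insert an intermediate state obtained by applying one channel first, add and subtract its entanglement across the appropriate cut, and bound each bracket by one amortized entanglement via a regrouping of the auxiliary systems. The only cosmetic difference is that you apply $\mathcal{N}$ first whereas the paper applies $\mathcal{M}$ first; the system identifications you flag as the subtle point are exactly the ones the paper spells out, and they check out.
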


\begin{proof}
Let $A_{1}$ and $B_{1}$ denote the respective input and output systems for
quantum channel$~\mathcal{N}$, and let $A_{2}$ and $B_{2}$ denote the
respective input and output quantum systems for quantum channel$~\mathcal{M}%
$.\ Let $\rho_{A^{\prime}A_{1}A_{2}B^{\prime}}$ denote a state to consider at
the input of $\mathcal{N}\otimes\mathcal{M}$, when optimizing the amortized
entanglement. Let $\theta_{A^{\prime}B_{1}B_{2}B^{\prime}}=(\mathcal{N}%
_{A_{1}\rightarrow B_{1}}\otimes\mathcal{M}_{A_{2}\rightarrow B_{2}}%
)(\rho_{A^{\prime}A_{1}A_{2}B^{\prime}})$, which is the state at the output of
the channel $\mathcal{N}\otimes\mathcal{M}$ when inputting $\rho_{A^{\prime
}A_{1}A_{2}B^{\prime}}$. Define the intermediary state $\tau_{A^{\prime}%
A_{1}B_{2}B^{\prime}}=\mathcal{M}_{A_{2}\rightarrow B_{2}}(\rho_{A^{\prime
}A_{1}A_{2}B^{\prime}})$. Then consider that%
\begin{align}
&  E(A^{\prime};B_{1}B_{2}B^{\prime})_{\theta}-E(A^{\prime}A_{1}%
A_{2};B^{\prime})_{\rho}\nonumber\\
&  =E(A^{\prime};B_{1}B_{2}B^{\prime})_{\theta}-E(A^{\prime}A_{1}%
;B_{2}B^{\prime})_{\tau}+E(A^{\prime}A_{1};B_{2}B^{\prime})_{\tau}%
-E(A^{\prime}A_{1}A_{2};B^{\prime})_{\rho}\\
&  \leq E_{A}(\mathcal{N})+E_{A}(\mathcal{M}).
\label{eq:amortized-subadd-critical-ineq}%
\end{align}
The first equality follows by adding and subtracting $E(A^{\prime}A_{1}%
;B_{2}B^{\prime})_{\tau}$. The second inequality follows because the states
$\tau_{A^{\prime}A_{1}B_{2}B^{\prime}}$ and $\theta_{A^{\prime}B_{1}%
B_{2}B^{\prime}}$ are particular states to consider at the respective input
and output for the amortized entanglement of the channel $\mathcal{N}$, by
making the identifications $A^{\prime}\leftrightarrow A^{\prime}$, $B^{\prime
}\leftrightarrow B^{\prime}B_{2}$, $B\leftrightarrow B_{1}$, and
$A\leftrightarrow A_{1}$, while the states $\rho_{A^{\prime}A_{1}%
A_{2}B^{\prime}}$\ and$\ \tau_{A^{\prime}A_{1}B_{2}B^{\prime}}$ are particular
states to consider at the respective input and output for the amortized
entanglement of the channel $\mathcal{M}$, by making the identifications
$A^{\prime}\leftrightarrow A^{\prime}A_{1}$, $B^{\prime}\leftrightarrow
B^{\prime}$, $B\leftrightarrow B_{2}$, and $A\leftrightarrow A_{2}$. Since the
inequality in \eqref{eq:amortized-subadd-critical-ineq} holds for all states
$\rho_{A^{\prime}A_{1}A_{2}B^{\prime}}$, we can conclude the inequality in \eqref{eq:amortized-subadd-gen}.
\end{proof}

\bigskip 

An immediate consequence of Proposition~\ref{prop:amortized-subadditive}\ is
the following inequality:%
\begin{equation}
\sup_{\mathcal{M}}\left[  E_{A}(\mathcal{N}\otimes\mathcal{M})-E_{A}%
(\mathcal{M})\right]  \leq E_{A}(\mathcal{N}),
\end{equation}
where the supremum is with respect to a quantum channel $\mathcal{M}$. This
inequality demonstrates that no other channel can help to enhance the
amortized entanglement of a quantum channel. See \cite{SSW08,WY16}\ for
related notions, i.e., potential capacity.

An entanglement measure $E$ is additive with respect to states \cite{H42007}%
\ if the following equality holds%
\begin{equation}
E(C_{1}C_{2};D_{1}D_{2})_{\tau}=E(C_{1};D_{1})_{\xi}+E(C_{2};D_{2})_{\zeta},
\end{equation}
where $\tau_{C_{1}C_{2}D_{1}D_{2}}=\xi_{C_{1}D_{1}}\otimes\zeta_{C_{2}D_{2}}$
and $\xi_{C_{1}D_{1}}$ and $\zeta_{C_{2}D_{2}}$ are bipartite states. It is
subadditive if%
\begin{equation}
E(C_{1}C_{2};D_{1}D_{2})_{\tau}\leq E(C_{1};D_{1})_{\xi}+E(C_{2};D_{2}%
)_{\zeta},
\end{equation}
and this latter property holds for both the relative entropy of entanglement
and the Rains relative entropy \cite{H42007}. The following proposition states
that amortized entanglement is additive if the underlying entanglement measure
is additive:

\begin{proposition}
[Additivity]\label{prop:amortized-additive}For any entanglement measurement
$E$ that is additive with respect to states, the amortized entanglement
$E_{A}$\ of a channel is an additive function of quantum channels, in the
sense that the following equality holds for quantum channels $\mathcal{N}$ and
$\mathcal{M}$:%
\begin{equation}
E_{A}(\mathcal{N}\otimes\mathcal{M})=E_{A}(\mathcal{N})+E_{A}(\mathcal{M}).
\label{eq:additivity-amortized}%
\end{equation}

\end{proposition}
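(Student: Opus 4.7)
The plan is to combine the subadditivity inequality from Proposition~\ref{prop:amortized-subadditive} with its converse, namely
\begin{equation}
E_{A}(\mathcal{N}\otimes\mathcal{M})\geq E_{A}(\mathcal{N})+E_{A}(\mathcal{M}),
\end{equation}
which then yields the stated equality. For the converse, the natural strategy is to use product test states at the input of $\mathcal{N}\otimes\mathcal{M}$ and exploit the hypothesis that $E$ is additive with respect to states.

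More concretely, first I would let $\varepsilon>0$ and pick states $\rho^{1}_{A_{1}^{\prime}A_{1}B_{1}^{\prime}}$ and $\rho^{2}_{A_{2}^{\prime}A_{2}B_{2}^{\prime}}$ that are $\varepsilon$-near-optimal for $E_{A}(\mathcal{N})$ and $E_{A}(\mathcal{M})$ respectively (this avoids having to assume the suprema are attained). Then I would form the product state $\rho\equiv\rho^{1}\otimes\rho^{2}$, regarded as an input to $\mathcal{N}\otimes\mathcal{M}$ with the identifications $A^{\prime}\leftrightarrow A_{1}^{\prime}A_{2}^{\prime}$, $A\leftrightarrow A_{1}A_{2}$, and $B^{\prime}\leftrightarrow B_{1}^{\prime}B_{2}^{\prime}$. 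The output is then the product state $\theta\equiv\theta^{1}\otimes\theta^{2}$ with $\theta^{i}=\mathcal{N}_{A_{i}\to B_{i}}^{(i)}(\rho^{i})$.

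Next, applying additivity of $E$ across the two tensor factors gives
\begin{align}
E(A_{1}^{\prime}A_{2}^{\prime};B_{1}B_{2}B_{1}^{\prime}B_{2}^{\prime})_{\theta} &= E(A_{1}^{\prime};B_{1}B_{1}^{\prime})_{\theta^{1}}+E(A_{2}^{\prime};B_{2}B_{2}^{\prime})_{\theta^{2}},\\
E(A_{1}^{\prime}A_{1}A_{2}^{\prime}A_{2};B_{1}^{\prime}B_{2}^{\prime})_{\rho} &= E(A_{1}^{\prime}A_{1};B_{1}^{\prime})_{\rho^{1}}+E(A_{2}^{\prime}A_{2};B_{2}^{\prime})_{\rho^{2}},
\end{align}
so that the defining difference for $E_{A}(\mathcal{N}\otimes\mathcal{M})$ at the state $\rho^{1}\otimes\rho^{2}$ decomposes as the sum of the individual differences for $\mathcal{N}$ and $\mathcal{M}$. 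By the choice of $\rho^{1}$ and $\rho^{2}$ this sum is at least $E_{A}(\mathcal{N})+E_{A}(\mathcal{M})-2\varepsilon$, and since $\rho^{1}\otimes\rho^{2}$ is just one allowed input in the optimization defining $E_{A}(\mathcal{N}\otimes\mathcal{M})$, taking $\varepsilon\to 0$ yields the desired lower bound. Combined with Proposition~\ref{prop:amortized-subadditive} this gives \eqref{eq:additivity-amortized}.

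There is no serious obstacle: the argument is a straightforward bookkeeping exercise once one recognizes that state additivity of $E$ is precisely what is needed to split both entanglement terms in the same way across the bipartitions. The only small subtlety is the use of approximately optimal states to avoid assuming the suprema in the definitions of $E_{A}(\mathcal{N})$ and $E_{A}(\mathcal{M})$ are achieved.
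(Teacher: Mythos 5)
Your proposal is correct and follows essentially the same route as the paper: the upper bound from Proposition~\ref{prop:amortized-subadditive} combined with a lower bound obtained by feeding product input states into $\mathcal{N}\otimes\mathcal{M}$ and invoking additivity of $E$ with respect to states to split both the output and input entanglement terms. The use of $\varepsilon$-near-optimal states is a minor stylistic variant of the paper's taking a supremum over arbitrary product inputs at the end.
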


\begin{proof}
The inequality $\leq$ holds for all channels as shown in
Proposition~\ref{prop:amortized-subadditive}. To see the other inequality, let
$\rho_{A_{1}^{\prime}A_{1}B_{1}^{\prime}}\otimes\kappa_{A_{2}^{\prime}%
A_{2}B_{2}^{\prime}}$ be an arbitrary state to consider for $E_{A}%
(\mathcal{N}\otimes\mathcal{M})$, and let%
\begin{equation}
\theta_{A_{1}^{\prime}B_{1}B_{1}^{\prime}A_{2}^{\prime}B_{2}B_{2}^{\prime}%
}=\mathcal{N}_{A_{1}\rightarrow B_{1}}(\rho_{A_{1}^{\prime}A_{1}B_{1}^{\prime
}})\otimes\mathcal{M}_{A_{2}\rightarrow B_{2}}(\kappa_{A_{2}^{\prime}%
A_{2}B_{2}^{\prime}}).
\end{equation}
Then%
\begin{align}
E_{A}(\mathcal{N}\otimes\mathcal{M})  &  \geq E(A_{1}^{\prime}A_{2}^{\prime
};B_{1}B_{2}B_{1}^{\prime}B_{2}^{\prime})_{\theta}-E(A_{1}^{\prime}%
A_{2}^{\prime}A_{1}A_{2};B_{1}^{\prime}B_{2}^{\prime})_{\rho\otimes\kappa}\\
&  =E(A_{1}^{\prime};B_{1}B_{1}^{\prime})_{\theta}-E(A_{1}^{\prime}A_{1}%
;B_{1}^{\prime})_{\rho}+E(A_{2}^{\prime};B_{2}B_{2}^{\prime})_{\theta}%
-E(A_{2}^{\prime}A_{2};B_{2}^{\prime})_{\kappa}%
\end{align}
The equality follows from the assumption that the underlying entanglement
measure is additive with respect to states. Since the above inequality holds
for all input states $\rho_{A_{1}^{\prime}A_{1}B_{1}^{\prime}}$ and
$\kappa_{A_{2}^{\prime}A_{2}B_{2}^{\prime}}$, we can conclude
\eqref{eq:additivity-amortized} after applying
Definition~\ref{def:amortized-ent}.
\end{proof}

\subsection{Amortized entanglement and teleportation simulation}

\label{sec:TP-simulation-def}Teleportation simulation of a quantum channel is
one of the earliest and most central insights in quantum information theory
\cite{BDSW96}, and it is a key tool used to establish upper bounds on
capacities of quantum channels assisted by local operations and classical
communication (LOCC) \cite{BDSW96,WPG07,NFC09,Mul12}. The basic idea behind
this tool is that a quantum channel can be simulated by the action of a
teleportation protocol \cite{PhysRevLett.70.1895,prl1998braunstein,Werner01}%
\ on a resource state $\omega_{RB}$\ shared between the sender $A$\ and
receiver $B$. More generally, a channel $\mathcal{N}_{A\rightarrow B}$ with
input system $A$ and output system $B$ is defined to be
teleportation-simulable with associated resource state $\omega_{RB}$ if the
following equality holds for all input states $\rho_{A}$ \cite[Eq.(11)]{HHH99}:%
\begin{equation}
\mathcal{N}_{A\rightarrow B}(\rho_{A})=\mathcal{L}_{ARB\rightarrow B}(\rho
_{A}\otimes\omega_{RB}),
\end{equation}
where $\mathcal{L}_{ARB\rightarrow B}$ is a quantum channel consisting of
LOCC\ between the sender, who has systems $A$ and $R$, and the receiver, who
has system $B$ ($\mathcal{L}_{ARB\rightarrow B}$ can also be considered a
generalized teleportation protocol, as in \cite{Werner01}).

Whenever the underlying entanglement measure is subadditive with respect to
quantum states, then one can easily bound the amortized entanglement
$E_{A}(\mathcal{N})$\ from above for channels that are teleportation-simulable:

\begin{proposition}
\label{prop:tp-upper-bound}Let $E_{S}$ be an entanglement measure that is
subadditive with respect to states, and let $E_{AS}$ denote its amortized
version. If a channel $\mathcal{N}_{A\rightarrow B}$\ is
teleportation-simulable with associated state $\omega_{RB}$, then the
following bound holds%
\begin{equation}
E_{AS}(\mathcal{N})\leq E_{S}(R;B)_{\omega},
\end{equation}
where $E_{AS}(\mathcal{N})$ denotes the amortized entanglement defined through
$E_{S}$ and Definition~\ref{def:amortized-ent}.
\end{proposition}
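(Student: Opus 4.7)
The plan is to combine the teleportation-simulation identity with the two ingredients that the hypothesis hands us: monotonicity of $E_S$ under LOCC (which is built into the definition of an entanglement measure) and subadditivity of $E_S$ under tensor products. The target inequality $E_{AS}(\mathcal{N}) \leq E_S(R;B)_\omega$ should drop out after a single application of each, provided the systems are grouped correctly across the Alice/Bob cut.

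Concretely, I would fix an arbitrary state $\rho_{A'AB'}$ and let $\theta_{A'BB'} = \mathcal{N}_{A\to B}(\rho_{A'AB'})$. Using the simulation hypothesis, I rewrite
\begin{equation}
\theta_{A'BB'} = \mathcal{L}_{ARB\to B}\bigl(\rho_{A'AB'}\otimes \omega_{RB}\bigr).
\end{equation}
With respect to the bipartition in which Alice holds $A'AR$ and Bob holds $B'B$, the channel $\mathrm{id}_{A'}\otimes \mathcal{L}_{ARB\to B}\otimes \mathrm{id}_{B'}$ is LOCC, and it produces $\theta_{A'BB'}$ across the bipartition $A':BB'$. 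LOCC monotonicity of $E_S$ then gives
\begin{equation}
E_S(A';BB')_\theta \leq E_S(A'AR;BB')_{\rho\otimes \omega}.
\end{equation}
Since the input is a tensor product whose factors $\rho_{A'AB'}$ and $\omega_{RB}$ are compatibly split by the cuts $A'A:B'$ and $R:B$, subadditivity of $E_S$ yields
\begin{equation}
E_S(A'AR;BB')_{\rho\otimes \omega} \leq E_S(A'A;B')_\rho + E_S(R;B)_\omega.
\end{equation}
Rearranging and taking the supremum over all input states $\rho_{A'AB'}$ delivers the claim.

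The only real obstacle is the systems bookkeeping: one has to verify that the simulating LOCC $\mathcal{L}_{ARB\to B}$, which is LOCC across the original $AR : B$ cut of the resource, remains LOCC once the passive references $A'$ and $B'$ are appended to Alice's and Bob's sides, respectively, so that LOCC monotonicity of $E_S$ applies across the bipartition $A'AR : B'B$. Beyond this, the argument is a two-line chain of inequalities, and the roles of monotonicity and subadditivity are each used exactly once.
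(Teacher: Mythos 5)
Your proposal is correct and follows essentially the same route as the paper's proof: one application of LOCC monotonicity across the cut $A'AR:B'B$ (absorbing the passive references $A'$ and $B'$ into Alice's and Bob's sides, respectively), followed by one application of subadditivity to split $\rho_{A'AB'}\otimes\omega_{RB}$ into $E_S(A'A;B')_\rho + E_S(R;B)_\omega$, and then a supremum over inputs. The bookkeeping point you flag is exactly the one implicit in the paper's first inequality, and your handling of it is correct.
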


\begin{proof}
By the definition of a teleportation-simulable channel, we have that%
\begin{equation}
\mathcal{N}_{A\rightarrow B}(\rho_{A})=\mathcal{L}_{ARB\rightarrow B}(\rho
_{A}\otimes\omega_{RB}),
\end{equation}
where $\mathcal{L}_{ARB\rightarrow B}$ is an LOCC\ channel. Then for any input
state $\rho_{A^{\prime}B^{\prime}A}$, we have that%
\begin{align}
E_{S}(A^{\prime};BB^{\prime})_{\mathcal{L}(\rho\otimes\omega)}-E_{S}%
(A^{\prime}A;B^{\prime})_{\rho}  &  \leq E_{S}(A^{\prime}AR;B^{\prime}%
B)_{\rho\otimes\omega}-E_{S}(A^{\prime}A;B^{\prime})_{\rho}\\
&  \leq E_{S}(A^{\prime}A;B^{\prime})_{\rho}+E_{S}(R;B)_{\omega}%
-E_{S}(A^{\prime}A;B^{\prime})_{\rho}\\
&  =E_{S}(R;B)_{\omega}.
\end{align}
The first inequality follows from monotonicity of $E_{S}$ with respect to LOCC
channels (the fact that $E_{S}$ is an entanglement measure). The second
inequality follows from the assumption that $E_{S}$ is subadditive.
\end{proof}

Proposition~\ref{prop:tp-upper-bound}\ implies that the amortized entanglement
of a channel never exceeds the entanglement of the maximally entangled state,
whenever the underlying entanglement measure is subadditive. This follows
because any channel can be simulated by teleportation using the maximally
entangled state as the resource state, along with local processing. In
particular, Alice could apply the channel locally to her system and then
teleport it to Bob; also, she could first teleport to Bob and then he could
perform the local processing. So this leads to the following upper bound on
amortized entanglement in this case:

\begin{proposition}[Dimension bound]
\label{prop:tp-dim-upper-bound}Let $E_{S}$ be an entanglement measure that is
subadditive with respect to states, and let $E_{AS}$ denote its amortized
version. Let $\mathcal{N}_{A\rightarrow B}$ be a quantum channel. The
following bound holds%
\begin{equation}
E_{AS}(\mathcal{N})\leq\min\{E_{S}(A;\bar{A})_{\Phi},E_{S}(B;\bar{B})_{\Phi
}\},
\end{equation}
where $E_{AS}(\mathcal{N})$ denotes the amortized entanglement defined through
$E_{S}$ and Definition~\ref{def:amortized-ent}, $\bar{A}$ is a system
isomorphic to the channel input system $A$, $\bar{B}$ is a system isomorphic
to the channel output system $B$, and $\Phi$ denotes the maximally entangled
state. For the amortized relative entropy of entanglement and the amortized
Rains relative entropy, the above implies that%
\begin{equation}
E_{AR}(\mathcal{N}),\ R_{A}(\mathcal{N})\leq\log_{2}\min\{|A|,|B|\},
\end{equation}
because these underlying entanglement measures are equal to $\log_{2}d$ when
evaluated on a maximally entangled state of Schmidt rank $d$ \cite{H42007}.
\end{proposition}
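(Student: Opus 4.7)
The plan is to invoke Proposition~\ref{prop:tp-upper-bound} twice, using two different teleportation-based simulations of the channel $\mathcal{N}_{A\to B}$. The key observation is that any channel admits a trivial teleportation-simulation in which a maximally entangled state plus local processing is used as the resource, and we have the freedom to choose whether to teleport the input and then apply the channel at the receiver, or apply the channel at the sender and then teleport the output.

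For the first simulation, I would let $\Phi_{A\bar A}$ be a maximally entangled state of Schmidt rank $|A|$ held between the sender and receiver. The sender teleports her input system to the receiver's side of $\Phi_{A\bar A}$ using standard teleportation (an LOCC\ protocol), producing a state on the $\bar A$ system, and the receiver then applies $\mathcal{N}_{\bar A \to B}$ locally. The whole procedure is of the form $\mathcal{L}_{ARB \to B}(\rho_A \otimes \omega_{RB})$ with $\omega_{RB} = \Phi_{A\bar A}$, so $\mathcal{N}$ is teleportation-simulable with this resource state. Proposition~\ref{prop:tp-upper-bound} then gives $E_{AS}(\mathcal{N}) \leq E_S(A;\bar A)_\Phi$. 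For the second simulation, I would instead have the sender first apply $\mathcal{N}_{A\to B}$ locally and then teleport the resulting $B$ system to the receiver through a maximally entangled state $\Phi_{B\bar B}$ of Schmidt rank $|B|$. This is again a valid teleportation-simulation, so Proposition~\ref{prop:tp-upper-bound} yields $E_{AS}(\mathcal{N}) \leq E_S(B;\bar B)_\Phi$. Taking the minimum of the two bounds gives the first inequality in the proposition.

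For the second part, I would specialize the general bound to $E_S = E_R$ and $E_S = R$, noting (as the proposition's statement already reminds us, citing \cite{H42007}) that both the relative entropy of entanglement and the Rains relative entropy evaluate to $\log_2 d$ on a maximally entangled state of Schmidt rank $d$. Since $\Phi_{A\bar A}$ has Schmidt rank $|A|$ and $\Phi_{B\bar B}$ has Schmidt rank $|B|$, the minimum becomes $\log_2 \min\{|A|, |B|\}$, yielding the claimed bounds on $E_{AR}(\mathcal{N})$ and $R_A(\mathcal{N})$.

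There is no real obstacle here: the content of the proposition is entirely in correctly identifying the two simulation strategies and confirming that each is a genuine LOCC teleportation-simulation so that Proposition~\ref{prop:tp-upper-bound} applies. The only mild care needed is to make explicit that the ``local processing'' step (applying $\mathcal{N}$ at one end) is a local operation and can therefore be absorbed into the LOCC channel $\mathcal{L}_{ARB \to B}$ appearing in the definition of teleportation-simulability; once that is observed, both inequalities follow without further computation and the subadditivity hypothesis on $E_S$ is used only implicitly through the invocation of Proposition~\ref{prop:tp-upper-bound}.
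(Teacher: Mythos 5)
Your proposal is correct and follows essentially the same route as the paper, which justifies this proposition in the paragraph immediately preceding it: simulate $\mathcal{N}$ either by teleporting the input through $\Phi_{A\bar A}$ and applying the channel locally at the receiver, or by applying the channel locally at the sender and teleporting the output through $\Phi_{B\bar B}$, then invoke Proposition~\ref{prop:tp-upper-bound} in each case and take the minimum. Your added remark that the local application of $\mathcal{N}$ is absorbed into the LOCC channel $\mathcal{L}_{ARB\to B}$ is exactly the point the paper leaves implicit.
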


In certain cases, the inequality in Proposition~\ref{prop:tp-upper-bound}\ is
actually an equality:

\begin{proposition}
\label{prop:TP-sim-amortized-equal}Let $E_{S}$ be an entanglement measure that
is subadditive with respect to states, and let $E_{AS}$ denote its amortized
version. If a channel $\mathcal{N}_{A\rightarrow B}$\ is
teleportation-simulable with associated state $\omega_{RB}=\mathcal{N}%
_{A\rightarrow B}(\rho_{RA})$ for some input state $\rho_{RA}$, then the
following equality holds%
\begin{equation}
E_{AS}(\mathcal{N})=E_{S}(R;B)_{\omega}.
\end{equation}

\end{proposition}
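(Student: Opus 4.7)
The plan is to establish the matching lower bound $E_{AS}(\mathcal{N}) \geq E_S(R;B)_\omega$, since Proposition~\ref{prop:tp-upper-bound} already supplies the upper bound $E_{AS}(\mathcal{N}) \leq E_S(R;B)_\omega$ under the stated hypotheses. Once both directions are in hand, equality is immediate.

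For the lower bound, I would exploit the hypothesis that the resource state is itself generated by the channel, namely $\omega_{RB} = \mathcal{N}_{A\to B}(\rho_{RA})$, by plugging a carefully chosen trial state into the supremum defining $E_{AS}(\mathcal{N})$. Specifically, in Definition~\ref{def:amortized-ent}, I would take $A' \equiv R$, let $B'$ be a trivial one-dimensional system, and choose the input state $\rho_{A'AB'} := \rho_{RA} \otimes 1_{B'}$ (i.e., just $\rho_{RA}$). Then the corresponding output is
\begin{equation}
\theta_{A'BB'} = \mathcal{N}_{A \to B}(\rho_{RA}) \otimes 1_{B'} = \omega_{RB},
\end{equation}
so that $E_S(A';BB')_\theta = E_S(R;B)_\omega$. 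Meanwhile, because $B'$ is trivial, the state $\rho_{A'AB'}$ is a product state across the $A'A:B'$ cut, and hence $E_S(A'A;B')_\rho = 0$ by the minimal requirement that the entanglement measure vanishes on product states. Substituting these values into the defining supremum yields
\begin{equation}
E_{AS}(\mathcal{N}) \geq E_S(A';BB')_\theta - E_S(A'A;B')_\rho = E_S(R;B)_\omega,
\end{equation}
which is the desired lower bound.

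Combining with Proposition~\ref{prop:tp-upper-bound} gives the claimed equality $E_{AS}(\mathcal{N}) = E_S(R;B)_\omega$. There is no real obstacle here; the only subtle point is recognizing that the hypothesis $\omega_{RB} = \mathcal{N}(\rho_{RA})$ is exactly what one needs to realize the upper bound as the value at a concrete feasible point of the amortization supremum, so no limiting or compactness argument is required.
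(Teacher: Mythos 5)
Your proposal is correct and follows essentially the same route as the paper: invoke Proposition~\ref{prop:tp-upper-bound} for the upper bound, then realize the lower bound by choosing $A'\leftrightarrow R$, a trivial $B'$, and the input state $\rho_{RA}$, so that the output is $\omega_{RB}$ and the subtracted term vanishes. Your explicit remark that $E_S(A'A;B')_\rho=0$ because the entanglement measure vanishes on product states is a point the paper leaves implicit, but the argument is the same.
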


\begin{proof}
From Proposition~\ref{prop:tp-upper-bound}, we have that $E_{AS}%
(\mathcal{N})\leq E_{S}(R;B)_{\omega}$. The other inequality follows by
picking $\rho_{A^{\prime}B^{\prime}A}=\rho_{RA}$, with the identification
$A^{\prime}\leftrightarrow R$ and $B^{\prime}\leftrightarrow\emptyset$ (i.e.,
$B^{\prime}$ is a trivial system), and then we find that%
\begin{equation}
E_{AS}(\mathcal{N})=\sup_{\rho_{A^{\prime}AB^{\prime}}}E_{S}(A^{\prime
};BB^{\prime})_{\theta}-E_{S}(A^{\prime}A;B^{\prime})_{\rho}\geq
E_{S}(R;B)_{\omega}.
\end{equation}
This concludes the proof.
\end{proof}

\begin{remark}
For several channels with sufficient symmetry, such as covariant channels, one
can pick the input state $\rho_{RA}$ in
Proposition~\ref{prop:TP-sim-amortized-equal}\ to be the maximally entangled
state~$\Phi_{RA}$ \cite[Section~7]{CDP09}.
\end{remark}

\subsection{Uniform continuity of amortized relative entropy of entanglement
and amortized Rains relative entropy}

The following theorem establishes that both the amortized relative entropy of
entanglement and the amortized Rains relative entropy obey a uniform
continuity bound. This bound will play a central role in bounding the
respective secret-key-agreement and LOCC-assisted quantum capacities of
approximately teleportation-simulable channels (see
Sections~\ref{sec:amortized-rel-ent-SKA} and \ref{sec:amortized-Rains-q-comm}).

Before we state the theorem, we recall that the diamond norm of the difference
of two quantum channels $\mathcal{N}_{A\rightarrow B}$ and $\mathcal{M}%
_{A\rightarrow B}$ is defined as \cite{K97}%
\begin{align}
\left\Vert \mathcal{N}_{A\rightarrow B}-\mathcal{M}_{A\rightarrow
B}\right\Vert _{\Diamond}  &  =\sup_{\rho_{RA}}\left\Vert \left[
\operatorname{id}_{R}\otimes(\mathcal{N}_{A\rightarrow B}-\mathcal{M}%
_{A\rightarrow B})\right]  (\rho_{RA})\right\Vert _{1}\\
&  =\max_{\psi_{RA}}\left\Vert \left[  \operatorname{id}_{R}\otimes
(\mathcal{N}_{A\rightarrow B}-\mathcal{M}_{A\rightarrow B})\right]  (\psi
_{RA})\right\Vert _{1},
\end{align}
with $\left\Vert X\right\Vert _{1}=\operatorname{Tr}\{\sqrt{X^{\dag}X}\}$ and
the second equality, with an optimization restricted to pure states $\psi
_{RA}$ with $\left\vert R\right\vert =\left\vert A\right\vert $, follows from
the convexity of the trace norm and the Schmidt decomposition. The diamond
norm is a well established and operationally meaningful measure of the
distinguishability of two quantum channels.

\begin{theorem}
\label{thm:continuity}Let $\varepsilon\in\lbrack0,1]$. Let $E$ refer to either
the relative entropy of entanglement or the Rains relative entropy, and let
$E_{A}$ refer to their amortized versions. For channels $\mathcal{N}%
_{A\rightarrow B}$ and $\mathcal{M}_{A\rightarrow B}$ such that%
\begin{equation}
\frac{1}{2}\left\Vert \mathcal{N}_{A\rightarrow B}-\mathcal{M}_{A\rightarrow
B}\right\Vert _{\Diamond}\leq\varepsilon, \label{eq:diamond-norm-bound}%
\end{equation}
the following bound holds%
\begin{equation}
\left\vert E_{A}(\mathcal{N})-E_{A}(\mathcal{M})\right\vert \leq
2\varepsilon\log_{2}\left\vert B\right\vert +g(\varepsilon),
\end{equation}
where $\left\vert B\right\vert $ is the dimension of the channel output system
$B$ and $g(\varepsilon)\equiv\left(  \varepsilon+1\right)  \log_{2}%
(\varepsilon+1)-\varepsilon\log_{2}\varepsilon$.
\end{theorem}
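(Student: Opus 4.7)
The plan is to exploit the fact that the input-entanglement term $E(A'A;B')_\rho$ in the definition of $E_A$ depends only on the input state $\rho$ and not on the channel. Fixing a near-optimizer of one amortized entanglement and using it as a feasible point for the other causes the input-entanglement contributions to cancel exactly, reducing everything to a continuity bound on the underlying entanglement measure evaluated on the two channel output states.

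More precisely, I would assume without loss of generality that $E_{A}(\mathcal{N}) \geq E_{A}(\mathcal{M})$, since the hypothesis in \eqref{eq:diamond-norm-bound} is symmetric in the two channels. For any $\delta>0$, pick $\rho_{A'AB'}$ with
\begin{equation}
E_{A}(\mathcal{N}) - \delta \leq E(A';BB')_{\theta^{\mathcal{N}}} - E(A'A;B')_{\rho},
\end{equation}
where $\theta^{\mathcal{N}}_{A'BB'} \equiv \mathcal{N}_{A\to B}(\rho_{A'AB'})$ and $\theta^{\mathcal{M}}_{A'BB'} \equiv \mathcal{M}_{A\to B}(\rho_{A'AB'})$. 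Using the very same $\rho$ as a feasible choice for $E_{A}(\mathcal{M})$ and subtracting, the input-entanglement terms cancel to give
\begin{equation}
E_{A}(\mathcal{N}) - E_{A}(\mathcal{M}) \leq E(A';BB')_{\theta^{\mathcal{N}}} - E(A';BB')_{\theta^{\mathcal{M}}} + \delta.
\end{equation}
Two structural features of these output states are decisive: the diamond-norm hypothesis gives $\tfrac{1}{2}\|\theta^{\mathcal{N}} - \theta^{\mathcal{M}}\|_1 \leq \varepsilon$, and the two states share the marginal on $A'B'$, namely $\theta^{\mathcal{N}}_{A'B'} = \theta^{\mathcal{M}}_{A'B'} = \rho_{A'B'}$, because $\operatorname{Tr}_{B} \circ \mathcal{N}_{A\to B}$ and $\operatorname{Tr}_{B} \circ \mathcal{M}_{A\to B}$ both coincide with $\operatorname{Tr}_{A}$. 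Sending $\delta \to 0$ and then swapping $\mathcal{N} \leftrightarrow \mathcal{M}$ produces the absolute value.

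The hard step is then to prove a continuity bound of the form
\begin{equation}
\left|\, E(A';BB')_{\theta^{\mathcal{N}}} - E(A';BB')_{\theta^{\mathcal{M}}} \,\right| \leq 2\varepsilon \log_{2}|B| + g(\varepsilon)
\end{equation}
for the relative entropy of entanglement and the Rains relative entropy, in which only the channel output dimension $|B|$ appears -- not $|A'|$ or $|B'|$, both of which are unbounded in the optimization defining $E_{A}$. A direct application of the Winter-type continuity bounds for $E_R$ or $R$ would carry a dimensional factor of $\log_{2}\min\{|A'|,|BB'|\}$, which is unusable. The role of the supplementary lemmas in Appendix~\ref{sec:supp-lemmas} is precisely to deliver such a bound by exploiting the equality of the $A'B'$ marginals: one can run an Alicki-Fannes-Winter-style interpolation argument in which the perturbing states can be arranged to act nontrivially only on the $B$ register, so that the dimensional contribution collapses from $|BB'|$ to $|B|$. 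The factor of $2$ is characteristic of continuity bounds for conditional-entropy-like quantities, which suggests that the supplementary lemmas recast the difference in terms of an optimization (over separable or Rains-set states with the correct $A'B'$ marginal) to which the standard Alicki-Fannes-Winter continuity of a conditional entropy on the $B$ system applies directly.
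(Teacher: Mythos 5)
Your proposal is correct and follows essentially the same route as the paper: the input-entanglement terms cancel for a fixed input state, the problem reduces to a continuity bound on $E(A';BB')$ for two output states that are $\varepsilon$-close in trace distance and share the same $A'B'$ marginal, and the Winter-style interpolation then yields a dimension factor of only $2\log_2|B|$ precisely because the two perturbing states in the interpolation have identical reduced states on $A'B'$ --- which is exactly what Lemmas~\ref{lem:dim-bound} and \ref{lem:rel-ent-mixture-ent-bound} are used for. The only point where your sketch is looser than the paper is the final step: rather than invoking AFW continuity of a conditional entropy directly, the paper applies $E(A';BB')\le E(A';B')+2\log_2|B|$ together with LOCC monotonicity to the pair of perturbing states, but the mechanism you describe is the right one.
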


\begin{proof}
Our proof follows the general approach from \cite{Winter15}, but it has some
additional observations needed for our context. For a state $\rho_{A^{\prime
}AB^{\prime}}$, let us define%
\begin{equation}
E_{A}(\rho,\mathcal{N})\equiv E(A^{\prime};BB^{\prime})_{\theta^{\mathcal{N}}%
}-E(A^{\prime}A;B^{\prime})_{\rho},
\end{equation}
where $\theta_{A^{\prime}BB^{\prime}}^{\mathcal{N}}\equiv\mathcal{N}%
_{A\rightarrow B}(\rho_{A^{\prime}AB^{\prime}})$. Then consider that%
\begin{equation}
\left\vert E_{A}(\rho,\mathcal{N})-E_{A}(\rho,\mathcal{M})\right\vert
=\left\vert E(A^{\prime};BB^{\prime})_{\theta^{\mathcal{N}}}-E(A^{\prime
};BB^{\prime})_{\theta^{\mathcal{M}}}\right\vert ,
\end{equation}
where $\theta_{A^{\prime}BB^{\prime}}^{\mathcal{M}}\equiv\mathcal{M}%
_{A\rightarrow B}(\rho_{A^{\prime}AB^{\prime}})$. Our intent now is to prove
that the following bound holds for all states $\rho_{A^{\prime}AB^{\prime}}$%
\begin{equation}
\left\vert E_{A}(\rho,\mathcal{N})-E_{A}(\rho,\mathcal{M})\right\vert
\leq2\varepsilon\log_{2}\left\vert B\right\vert +g(\varepsilon).
\label{eq:bound-for-each-state}%
\end{equation}
Since the bound in \eqref{eq:diamond-norm-bound} holds, we can conclude that%
\begin{equation}
\frac{1}{2}\left\Vert \theta_{A^{\prime}BB^{\prime}}^{\mathcal{N}}%
-\theta_{A^{\prime}BB^{\prime}}^{\mathcal{M}}\right\Vert _{1}\equiv
\varepsilon_{0}\leq\varepsilon.
\end{equation}
Let us suppose that $\varepsilon_{0}>0$. Otherwise, the bound in
\eqref{eq:bound-for-each-state}\ trivially holds. Let us define the states
$\Omega_{A^{\prime}BB^{\prime}}$ as%
\begin{equation}
\Omega_{A^{\prime}BB^{\prime}}=\frac{\left[  \theta_{A^{\prime}BB^{\prime}%
}^{\mathcal{N}}-\theta_{A^{\prime}BB^{\prime}}^{\mathcal{M}}\right]  _{+}%
}{\operatorname{Tr}\{\left[  \theta_{A^{\prime}BB^{\prime}}^{\mathcal{N}%
}-\theta_{A^{\prime}BB^{\prime}}^{\mathcal{M}}\right]  _{+}\}}=\frac
{1}{\varepsilon_{0}}\left[  \theta_{A^{\prime}BB^{\prime}}^{\mathcal{N}%
}-\theta_{A^{\prime}BB^{\prime}}^{\mathcal{M}}\right]  _{+},
\end{equation}
where $\left[  \cdot\right]  _{+}$ denotes the positive part of an operator. Note that the equality
$\operatorname{Tr}\{\left[  \theta_{A^{\prime}BB^{\prime}}^{\mathcal{N}%
}-\theta_{A^{\prime}BB^{\prime}}^{\mathcal{M}}\right]  _{+}\} = \varepsilon_{0}$ follows from the fact that
$\theta_{A^{\prime}BB^{\prime}}^{\mathcal{N}%
}$ and $\theta_{A^{\prime}BB^{\prime}}^{\mathcal{M}}$ are states (see \cite{W15book} for more details).
Since%
\begin{align}
\theta_{A^{\prime}BB^{\prime}}^{\mathcal{N}}  &  =\theta_{A^{\prime}%
BB^{\prime}}^{\mathcal{N}}-\theta_{A^{\prime}BB^{\prime}}^{\mathcal{M}}%
+\theta_{A^{\prime}BB^{\prime}}^{\mathcal{M}}\\
&  \leq\left[  \theta_{A^{\prime}BB^{\prime}}^{\mathcal{N}}-\theta_{A^{\prime
}BB^{\prime}}^{\mathcal{M}}\right]  _{+}+\theta_{A^{\prime}BB^{\prime}%
}^{\mathcal{M}}\\
&  =\left(  1+\varepsilon_{0}\right)  \left(  \frac{1}{1+\varepsilon_{0}%
}\left[  \theta_{A^{\prime}BB^{\prime}}^{\mathcal{N}}-\theta_{A^{\prime
}BB^{\prime}}^{\mathcal{M}}\right]  _{+}+\frac{1}{1+\varepsilon_{0}}%
\theta_{A^{\prime}BB^{\prime}}^{\mathcal{M}}\right) \\
&  =\left(  1+\varepsilon_{0}\right)  \left(  \frac{\varepsilon_{0}%
}{1+\varepsilon_{0}}\Omega_{A^{\prime}BB^{\prime}}+\frac{1}{1+\varepsilon_{0}%
}\theta_{A^{\prime}BB^{\prime}}^{\mathcal{M}}\right)  ,
\end{align}
we can define%
\begin{equation}
\xi_{A^{\prime}BB^{\prime}}\equiv\frac{\varepsilon_{0}}{1+\varepsilon_{0}%
}\Omega_{A^{\prime}BB^{\prime}}+\frac{1}{1+\varepsilon_{0}}\theta_{A^{\prime
}BB^{\prime}}^{\mathcal{M}},
\end{equation}
and it follows that%
\begin{equation}
\xi_{A^{\prime}BB^{\prime}}=\frac{\varepsilon_{0}}{1+\varepsilon_{0}}%
\Omega_{A^{\prime}BB^{\prime}}^{\prime}+\frac{1}{1+\varepsilon_{0}}%
\theta_{A^{\prime}BB^{\prime}}^{\mathcal{N}},
\end{equation}
where the state $\Omega_{A^{\prime}BB^{\prime}}^{\prime}$ is defined as%
\begin{equation}
\Omega_{A^{\prime}BB^{\prime}}^{\prime}\equiv\frac{1}{\varepsilon_{0}}\left[
\left(  1+\varepsilon_{0}\right)  \xi_{A^{\prime}BB^{\prime}}-\theta
_{A^{\prime}BB^{\prime}}^{\mathcal{N}}\right]  .
\end{equation}
Consider that%
\begin{equation}
\Omega_{A^{\prime}B^{\prime}}=\Omega_{A^{\prime}B^{\prime}}^{\prime}
\label{eq:critical-omega}%
\end{equation}
because%
\begin{align}
\operatorname{Tr}_{B}\left\{  \frac{\varepsilon_{0}}{1+\varepsilon_{0}}%
\Omega_{A^{\prime}BB^{\prime}}+\frac{1}{1+\varepsilon_{0}}\theta_{A^{\prime
}BB^{\prime}}^{\mathcal{M}}\right\}   &  =\frac{\varepsilon_{0}}%
{1+\varepsilon_{0}}\Omega_{A^{\prime}B^{\prime}}+\frac{1}{1+\varepsilon_{0}%
}\theta_{A^{\prime}B^{\prime}}^{\mathcal{M}}\\
&  =\frac{\varepsilon_{0}}{1+\varepsilon_{0}}\Omega_{A^{\prime}B^{\prime}%
}+\frac{1}{1+\varepsilon_{0}}\rho_{A^{\prime}B^{\prime}},\\
\operatorname{Tr}_{B}\left\{  \frac{\varepsilon_{0}}{1+\varepsilon_{0}}%
\Omega_{A^{\prime}BB^{\prime}}^{\prime}+\frac{1}{1+\varepsilon_{0}}%
\theta_{A^{\prime}BB^{\prime}}^{\mathcal{N}}\right\}   &  =\frac
{\varepsilon_{0}}{1+\varepsilon_{0}}\Omega_{A^{\prime}B^{\prime}}^{\prime
}+\frac{1}{1+\varepsilon_{0}}\theta_{A^{\prime}B^{\prime}}^{\mathcal{N}}\\
&  =\frac{\varepsilon_{0}}{1+\varepsilon_{0}}\Omega_{A^{\prime}B^{\prime}%
}^{\prime}+\frac{1}{1+\varepsilon_{0}}\rho_{A^{\prime}B^{\prime}},
\end{align}
and so%
\begin{equation}
\frac{\varepsilon_{0}}{1+\varepsilon_{0}}\Omega_{A^{\prime}B^{\prime}}%
+\frac{1}{1+\varepsilon_{0}}\rho_{A^{\prime}B^{\prime}}=\frac{\varepsilon_{0}%
}{1+\varepsilon_{0}}\Omega_{A^{\prime}B^{\prime}}^{\prime}+\frac
{1}{1+\varepsilon_{0}}\rho_{A^{\prime}B^{\prime}},
\end{equation}
from which we can conclude \eqref{eq:critical-omega} since $\varepsilon_{0}%
>0$. By convexity of relative entropy of entanglement and the Rains relative
entropy \cite{H42007}, we have that%
\begin{equation}
E(A^{\prime};BB^{\prime})_{\xi}\leq\frac{1}{1+\varepsilon_{0}}E(A^{\prime
};BB^{\prime})_{\theta^{\mathcal{M}}}+\frac{\varepsilon_{0}}{1+\varepsilon
_{0}}E(A^{\prime};BB^{\prime})_{\Omega},
\end{equation}
and from Lemma~\ref{lem:rel-ent-mixture-ent-bound}, we have that%
\begin{equation}
\frac{1}{1+\varepsilon_{0}}E(A^{\prime};BB^{\prime})_{\theta^{\mathcal{N}}%
}+\frac{\varepsilon_{0}}{1+\varepsilon_{0}}E(A^{\prime};BB^{\prime}%
)_{\Omega^{\prime}}\leq E(A^{\prime};BB^{\prime})_{\xi}+h_{2}\!\left(
\frac{\varepsilon_{0}}{1+\varepsilon_{0}}\right)  ,
\end{equation}
So this means that%
\begin{align}
E(A^{\prime};BB^{\prime})_{\theta^{\mathcal{N}}}  &  \leq\left(
1+\varepsilon_{0}\right)  E(A^{\prime};BB^{\prime})_{\xi}+\left(
1+\varepsilon_{0}\right)  h_{2}\!\left(  \frac{\varepsilon_{0}}{1+\varepsilon
_{0}}\right)  -\varepsilon_{0}E(A^{\prime};BB^{\prime})_{\Omega^{\prime}}\\
&  \leq E(A^{\prime};BB^{\prime})_{\theta^{\mathcal{M}}}+\varepsilon
_{0}E(A^{\prime};BB^{\prime})_{\Omega}+g(\varepsilon_{0})-\varepsilon
_{0}E(A^{\prime};BB^{\prime})_{\Omega^{\prime}}\\
&  =E(A^{\prime};BB^{\prime})_{\theta^{\mathcal{M}}}+g(\varepsilon
_{0})+\varepsilon_{0}\left[  E(A^{\prime};BB^{\prime})_{\Omega}-E(A^{\prime
};BB^{\prime})_{\Omega^{\prime}}\right]  ,
\end{align}
where we used that $\left(  1+\varepsilon_{0}\right)  h_{2}\!\left(
\frac{\varepsilon_{0}}{1+\varepsilon_{0}}\right)  =g(\varepsilon_{0})$.
Finally, consider that%
\begin{align}
E(A^{\prime};BB^{\prime})_{\Omega}-E(A^{\prime};BB^{\prime})_{\Omega^{\prime
}}  &  \leq2\log_{2}\left\vert B\right\vert +E(A^{\prime};B^{\prime})_{\Omega
}-E(A^{\prime};B^{\prime})_{\Omega^{\prime}}\\
&  =2\log_{2}\left\vert B\right\vert .
\end{align}
The first inequality follows from Lemma~\ref{lem:dim-bound}, the
LOCC\ monotonicity of relative entropy of entanglement and the Rains relative
entropy, and \eqref{eq:critical-omega}. So this implies that%
\begin{align}
E(A^{\prime};BB^{\prime})_{\theta^{\mathcal{N}}}-E(A^{\prime};BB^{\prime
})_{\theta^{\mathcal{M}}}  &  \leq g(\varepsilon_{0})+2\varepsilon_{0}\log
_{2}\left\vert B\right\vert \\
&  \leq g(\varepsilon)+2\varepsilon\log_{2}\left\vert B\right\vert ,
\end{align}
the latter inequality holding because $g(\cdot)$ is a monotone increasing
function. The other inequality%
\begin{equation}
E(A^{\prime};BB^{\prime})_{\theta^{\mathcal{M}}}-E(A^{\prime};BB^{\prime
})_{\theta^{\mathcal{N}}}\leq g(\varepsilon)+2\varepsilon\log_{2}\left\vert
B\right\vert
\end{equation}
follows immediately using similar steps, and this now establishes
\eqref{eq:bound-for-each-state}. Then we have that the following bound holds
for all input states $\rho_{A^{\prime}B^{\prime}A}$:%
\begin{align}
E_{A}(\rho,\mathcal{N})  &  \leq\sup_{\rho}E_{A}(\rho,\mathcal{M}%
)+2\varepsilon\log_{2}\left\vert B\right\vert +g(\varepsilon)\\
&  =E_{A}(\mathcal{M})+2\varepsilon\log_{2}\left\vert B\right\vert
+g(\varepsilon).
\end{align}
Since the bound holds for all input states $\rho_{A^{\prime}B^{\prime}A}$, we
can conclude that%
\begin{equation}
E_{A}(\mathcal{N})\leq E_{A}(\mathcal{M})+2\varepsilon\log_{2}\left\vert
B\right\vert +g(\varepsilon).
\end{equation}
In a similar way, we obtain the opposite inequality%
\begin{equation}
E_{A}(\mathcal{M})\leq E_{A}(\mathcal{N})+2\varepsilon\log_{2}\left\vert
B\right\vert +g(\varepsilon),
\end{equation}
and this completes the proof.
\end{proof}

\section{Amortized relative entropy of entanglement and secret key agreement}

\label{sec:amortized-rel-ent-SKA}In this section, we prove that the amortized
relative entropy of entanglement is an upper bound on the secret-key-agreement
capacity of a quantum channel. We begin by reviewing the structure of a
secret-key-agreement protocol \cite{TGW14IEEE,TGW14Nat}, how such a protocol
can be purified along the lines observed in \cite{HHHO05,HHHO09}, the critical
performance parameters for such a protocol, and then we finally give a proof
for the aforementioned claim. Note that the proof bears some similarities with
proofs in prior works \cite{WTB16,Christandl2017,BGMW17}, as well as an argument that
appeared recently in \cite{DW17} in a different context.

\subsection{Protocol for secret key agreement}

\label{sec:SKA-protocol}Here we review the structure of a secret-key-agreement
protocol, along the lines discussed in \cite{TGW14IEEE,TGW14Nat}:

A sender Alice and a receiver Bob are spatially separated and are connected by
a quantum channel $\mathcal{N}_{A\rightarrow B}$. They begin by performing an
LOCC channel $\mathcal{L}_{\emptyset\rightarrow A_{1}^{\prime}A_{1}%
B_{1}^{\prime}}^{(1)}$, which leads to a separable state $\rho_{A_{1}^{\prime
}A_{1}B_{1}^{\prime}}^{(1)}$, where $A_{1}^{\prime}$ and $B_{1}^{\prime}$ are
systems that are finite-dimensional but arbitrarily large and $A_{1}$ is a
system that can be fed into the first channel use. Alice transmits system
$A_{1}$ into the first channel, leading to a state $\sigma_{A_{1}^{\prime
}B_{1}B_{1}^{\prime}}^{(1)}\equiv\mathcal{N}_{A_{1}\rightarrow B_{1}}%
(\rho_{A_{1}^{\prime}A_{1}B_{1}^{\prime}}^{(1)})$. They then perform the LOCC
channel $\mathcal{L}_{A_{1}^{\prime}B_{1}B_{1}^{\prime}\rightarrow
A_{2}^{\prime}A_{2}B_{2}^{\prime}}^{(2)}$, which leads to the state%
\begin{equation}
\rho_{A_{2}^{\prime}A_{2}B_{2}^{\prime}}^{(2)}\equiv\mathcal{L}_{A_{1}%
^{\prime}B_{1}B_{1}^{\prime}\rightarrow A_{2}^{\prime}A_{2}B_{2}^{\prime}%
}^{(2)}(\sigma_{A_{1}^{\prime}B_{1}B_{1}^{\prime}}^{(1)}).
\end{equation}
Alice feeds in the system $A_{2}$ to the second channel use $\mathcal{N}%
_{A_{2}\rightarrow B_{2}}$, leading to the state $\sigma_{A_{2}^{\prime}%
B_{2}B_{2}^{\prime}}^{(2)}\equiv\mathcal{N}_{A_{2}\rightarrow B_{2}}%
(\rho_{A_{2}^{\prime}A_{2}B_{2}^{\prime}}^{(1)})$. This process
continues:\ the protocol uses the channel $n$ times. In general, we have the
following states for all $i\in\{2,\ldots,n\}$:%
\begin{align}
\rho_{A_{i}^{\prime}A_{i}B_{i}^{\prime}}^{(i)}  &  \equiv\mathcal{L}%
_{A_{i-1}^{\prime}B_{i-1}B_{i-1}^{\prime}\rightarrow A_{i}^{\prime}A_{i}%
B_{i}^{\prime}}^{(i)}(\sigma_{A_{i-1}^{\prime}B_{i-1}B_{i-1}^{\prime}}%
^{(i-1)}),\\
\sigma_{A_{i}^{\prime}B_{i}B_{i}^{\prime}}^{(i)}  &  \equiv\mathcal{N}%
_{A_{i}\rightarrow B_{i}}(\rho_{A_{i}^{\prime}A_{i}B_{i}^{\prime}}^{(i)}),
\end{align}
where $\mathcal{L}_{A_{i-1}^{\prime}B_{i-1}B_{i-1}^{\prime}\rightarrow
A_{i}^{\prime}A_{i}B_{i}^{\prime}}^{(i)}$ is an LOCC\ channel. The final step
of the protocol consists of an LOCC channel $\mathcal{L}_{A_{n}^{\prime}%
B_{n}B_{n}^{\prime}\rightarrow K_{A}K_{B}}^{(n+1)}$, which produces the key
systems $K_{A}$ and $K_{B}$ for Alice and Bob, respectively. The final state
of the protocol is then as follows:%
\begin{equation}
\omega_{K_{A}K_{B}}\equiv\mathcal{L}_{A_{n}^{\prime}B_{n}B_{n}^{\prime
}\rightarrow K_{A}K_{B}}^{(n+1)}(\sigma_{A_{n}^{\prime}B_{n}B_{n}^{\prime}%
}^{(n)}).
\end{equation}

The goal of the protocol is that the final state $\omega_{K_{A}K_{B}}$ is
close to a secret-key state. Figure~\ref{fig:private-code}\ depicts such a
protocol. It may not yet be clear exactly what we mean by \textquotedblleft
close to a secret-key state,\textquotedblright\ but our approach is standard
and we clarify this point in the following two sections.

\begin{figure}[ptb]
\begin{center}
\includegraphics[
width=6.5638in
]{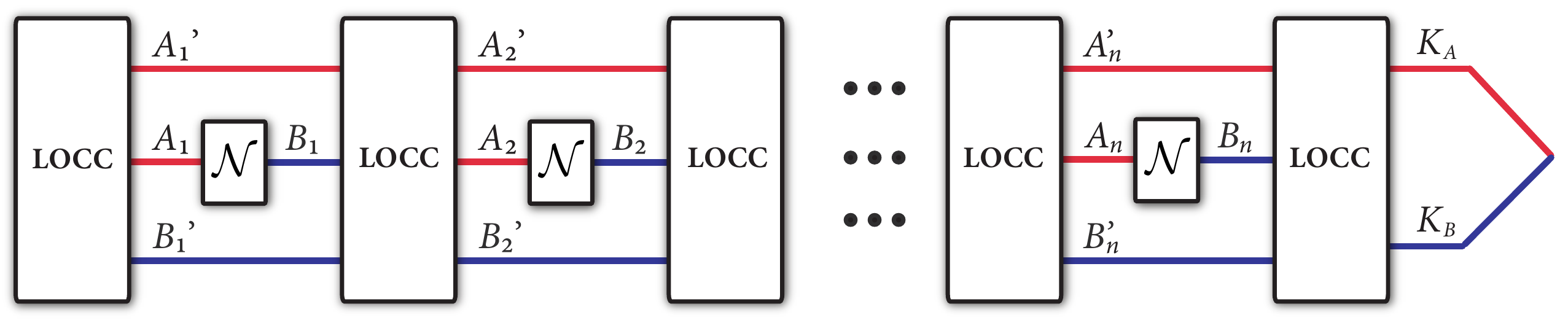}
\end{center}
\caption{A protocol for secret key agreement over a quantum channel.}%
\label{fig:private-code}%
\end{figure}

\subsection{Purifying a secret-key-agreement protocol}

Related to the observations in \cite{HHHO05,HHHO09}, any protocol of the above
form can be purified in the following sense. The initial state $\rho
_{A_{1}^{\prime}A_{1}B_{1}^{\prime}}^{(1)}$ is a separable state of the
following form:%
\begin{equation}
\rho_{A_{1}^{\prime}A_{1}B_{1}^{\prime}}^{(1)}\equiv\sum_{y_{1}}p_{Y_{1}%
}(y_{1})\tau_{A_{1}^{\prime}A_{1}}^{y_{1}}\otimes\zeta_{B_{1}}^{y_{1}}.
\end{equation}
The classical random variable $Y_{1}$ corresponds to a message exchanged
between Alice and Bob to establish this state. It can be purified in the
following way:%
\begin{equation}
|\psi^{(1)}\rangle_{A_{1}^{\prime}A_{1}S_{A_{1}}B_{1}^{\prime}S_{B_{1}}Y_{1}%
}\equiv\sum_{y_{1}}\sqrt{p_{Y_{1}}(y_{1})}|\tau^{y_{1}}\rangle_{A_{1}^{\prime
}A_{1}S_{A_{1}}}\otimes|\zeta^{y_{1}}\rangle_{B_{1}S_{B_{1}}}\otimes
|y_{1}\rangle_{Y_{1}},
\end{equation}
where $S_{A_{1}}$ and $S_{B_{1}}$ are local \textquotedblleft
shield\textquotedblright\ systems that in principle could be held by Alice and
Bob, respectively, $|\tau^{y_{1}}\rangle_{A_{1}^{\prime}A_{1}S_{A_{1}}}$ and
$|\zeta^{y_{1}}\rangle_{B_{1}S_{B_{1}}}$ purify $\tau_{A_{1}^{\prime}A_{1}%
}^{y_{1}}$ and $\zeta_{B_{1}}^{y_{1}}$, respectively, and Eve possesses system
$Y_{1}$, which contains a coherent classical copy of the classical data
exchanged. Each LOCC\ channel $\mathcal{L}_{A_{i-1}^{\prime}B_{i-1}%
B_{i-1}^{\prime}\rightarrow A_{i}^{\prime}A_{i}B_{i}^{\prime}}^{(i)}$\ can be
written in the following form \cite{Wat15}, for all $i\in\{2,\ldots,n\}$:%
\begin{equation}
\mathcal{L}_{A_{i-1}^{\prime}B_{i-1}B_{i-1}^{\prime}\rightarrow A_{i}^{\prime
}A_{i}B_{i}^{\prime}}^{(i)}\equiv\sum_{y_{i}}\mathcal{E}_{A_{i-1}^{\prime
}\rightarrow A_{i}^{\prime}A_{i}}^{y_{i}}\otimes\mathcal{F}_{B_{i-1}%
B_{i-1}^{\prime}\rightarrow B_{i}^{\prime}}^{y_{i}},
\label{eq:LOCC-as-separable}%
\end{equation}
where $\{\mathcal{E}_{A_{i-1}^{\prime}\rightarrow A_{i}^{\prime}A_{i}}^{y_{i}%
}\}_{y_{i}}$ and $\{\mathcal{F}_{B_{i-1}B_{i-1}^{\prime}\rightarrow
B_{i}^{\prime}}^{y_{i}}\}_{y_{i}}$ are collections of completely positive,
trace non-increasing maps such that the map in \eqref{eq:LOCC-as-separable} is
trace preserving. Such an LOCC\ channel can be purified to an isometry in the
following way:%
\begin{equation}
U_{A_{i-1}^{\prime}B_{i-1}B_{i-1}^{\prime}\rightarrow A_{i}^{\prime}%
A_{i}S_{A_{i}}B_{i}^{\prime}S_{B_{i}}Y_{i}}^{\mathcal{L}^{(i)}}\equiv
\sum_{y_{i}}U_{A_{i-1}^{\prime}\rightarrow A_{i}^{\prime}A_{i}S_{A_{i}}%
}^{\mathcal{E}^{y_{i}}}\otimes U_{B_{i-1}B_{i-1}^{\prime}\rightarrow
B_{i}^{\prime}S_{B_{i}}}^{\mathcal{F}^{y_{i}}}\otimes|y_{i}\rangle_{Y_{i}},
\label{eq:LOCC-as-separable-iso-ext}%
\end{equation}
where $\{U_{A_{i-1}^{\prime}\rightarrow A_{i}^{\prime}A_{i}S_{A_{i}}%
}^{\mathcal{E}^{y_{i}}}\}_{y_{i}}$ and $\{U_{B_{i-1}B_{i-1}^{\prime
}\rightarrow B_{i}^{\prime}S_{B_{i}}}^{\mathcal{F}^{y_{i}}}\}_{y_{i}}$ are
collections of linear operators (each of which is a contraction, i.e.,
$\left\Vert U_{A_{i-1}^{\prime}\rightarrow A_{i}^{\prime}A_{i}S_{A_{i}}%
}^{\mathcal{E}^{y_{i}}}\right\Vert _{\infty},\left\Vert U_{B_{i-1}%
B_{i-1}^{\prime}\rightarrow B_{i}^{\prime}S_{B_{i}}}^{\mathcal{F}^{y_{i}}%
}\right\Vert _{\infty}\leq1$) such that the linear operator in
\eqref{eq:LOCC-as-separable-iso-ext} is an isometry, and $Y_{i}$ is a system
containing a coherent classical copy of the classical data exchanged in this
round, the system $Y_{i}$\ being held by Eve. The final LOCC\ channel can be
written similarly as%
\begin{equation}
\mathcal{L}_{A_{n}^{\prime}B_{n}B_{n}^{\prime}\rightarrow K_{A}K_{B}}%
^{(n+1)}\equiv\sum_{y_{n+1}}\mathcal{E}_{A_{n}^{\prime}\rightarrow K_{A}%
}^{y_{n+1}}\otimes\mathcal{F}_{B_{n}B_{n}^{\prime}\rightarrow K_{B}}^{y_{n+1}%
},
\end{equation}
and it can be purified to an isometry similarly as%
\begin{equation}
U_{A_{n}^{\prime}B_{n}B_{n}^{\prime}\rightarrow K_{A}S_{A_{n+1}}%
K_{B}S_{B_{n+1}}}^{\mathcal{L}^{(n+1)}}\equiv\sum_{y_{n+1}}U_{A_{n}^{\prime
}\rightarrow K_{A}S_{A_{n+1}}}^{\mathcal{E}^{y_{n+1}}}\otimes U_{B_{n}%
B_{n}^{\prime}\rightarrow K_{B}S_{B_{n+1}}}^{\mathcal{F}^{y_{n+1}}}%
\otimes|y_{n+1}\rangle_{Y_{n+1}}.
\end{equation}
Furthermore, each channel use $\mathcal{N}_{A_{i}\rightarrow B_{i}}$, for all
$i\in\{1,\ldots,n\}$ is purified by an isometry $U_{A_{i}\rightarrow
B_{i}E_{i}}^{\mathcal{N}}$, such that Eve possesses the environment system
$E_{i}$.

\subsection{Performance of a secret-key-agreement protocol}

\label{sec:secret-key-performance}At the end of the purified protocol, Alice
possesses the key system $K_{A}$ and the shield systems $S_{A}\equiv S_{A_{1}%
}\cdots S_{A_{n+1}}$, Bob possesses the key system $K_{B}$ and the shield
systems $S_{B}\equiv S_{B_{1}}\cdots S_{B_{n+1}}$, and Eve possesses the
environment systems $E^{n}\equiv E_{1}\cdots E_{n}$ as well as the coherent
copies $Y^{n+1}\equiv Y_{1}\cdots Y_{n+1}$ of the classical data exchanged.
The state at the end of the purified protocol is a pure state $|\omega
\rangle_{K_{A}S_{A}K_{B}S_{B}E^{n}Y^{n+1}}$. Fix $n,K\in\mathbb{N}$ and
$\varepsilon\in\lbrack0,1]$. The original protocol is an $(n,K,\varepsilon)$
protocol if%
\begin{equation}
F(\omega_{K_{A}K_{B}E^{n}Y^{n+1}},\overline{\Phi}_{K_{A}K_{B}}\otimes
\xi_{E^{n}Y^{n+1}})\geq1-\varepsilon, \label{eq:fidelity-tripartite-setting}%
\end{equation}
where the fidelity $F(\tau,\kappa)\equiv\left\Vert \sqrt{\tau}\sqrt{\kappa
}\right\Vert _{1}^{2}$ \cite{U76}, the maximally correlated state
$\overline{\Phi}_{K_{A}K_{B}}$ is defined as%
\begin{equation}
\overline{\Phi}_{K_{A}K_{B}}\equiv\frac{1}{K}\sum_{k=1}^{K}|k\rangle\langle
k|_{K_{A}}\otimes|k\rangle\langle k|_{K_{B}},
\end{equation}
and $\xi_{E^{n}Y^{n+1}}$ is an arbitrary state.

By the observations of \cite{HHHO05,HHHO09}\ (understood as a clever
application of Uhlmann's theorem for fidelity \cite{U76}), rather than
focusing on the tripartite scenario, one can focus on the bipartite scenario,
in which the goal is to produce an approximate private state of Alice and
Bob's systems. The criterion in \eqref{eq:fidelity-tripartite-setting} is
fully equivalent to%
\begin{equation}
F(\omega_{K_{A}S_{A}K_{B}S_{B}},\gamma_{K_{A}S_{A}K_{B}S_{B}})\geq
1-\varepsilon,
\end{equation}
where $\gamma_{K_{A}S_{A}K_{B}S_{B}}$ is a private state \cite{HHHO05,HHHO09}
of the following form:%
\begin{equation}
U_{K_{A}S_{A}K_{B}S_{B}}(\Phi_{K_{A}K_{B}}\otimes\theta_{S_{A}S_{B}}%
)U_{K_{A}S_{A}K_{B}S_{B}}^{\dag},
\end{equation}
with $U_{K_{A}S_{A}K_{B}S_{B}}$ a twisting unitary of the form $U_{K_{A}%
S_{A}K_{B}S_{B}}=\sum_{i,j=1}^{K}|i\rangle\langle i|_{K_{A}}\otimes
|j\rangle\langle j|_{K_{B}}\otimes U_{S_{A}S_{B}}^{ij}$, $\Phi_{K_{A}K_{B}}$
is a maximally entangled state of the form%
\begin{equation}
\Phi_{K_{A}K_{B}}\equiv\frac{1}{K}\sum_{i,j=1}^{K}|i\rangle\langle j|_{K_{A}%
}\otimes|i\rangle\langle j|_{K_{B}}, \label{eq:MES}%
\end{equation}
and $\theta_{S_{A}S_{B}}$ is an arbitrary state of the shield systems. The
main idea in this latter picture is that we can use the techniques of
entanglement theory to understand private communication protocols
\cite{HHHO05,HHHO09}.

A rate $R$ is achievable for secret key agreement if for all $\varepsilon
\in(0,1)$, $\delta>0$, and sufficiently large$~n$, there exists an
$(n,2^{n\left(  R-\delta\right)  },\varepsilon)$ protocol. The
secret-key-agreement capacity of $\mathcal{N}_{A\rightarrow B}$, denoted as
$P^{\leftrightarrow}(\mathcal{N}_{A\rightarrow B})$, is equal to the supremum
of all achievable rates.

\subsection{Teleportation-simulable channels and reduction by teleportation}

\label{sec:TP-sim-reduct}An implication of channel simulation via
teleportation, as discussed in Section~\ref{sec:TP-simulation-def}, is that
the performance of a general protocol that uses the channel $n$ times, with
each use interleaved by local operations and classical communication (LOCC),
can be bounded from above by the performance of a protocol with a much simpler
form:\ the simplified protocol consists of a single round of LOCC acting on
$n$ copies of the resource state $\omega_{RB}$ \cite{BDSW96,NFC09,Mul12}. This
is called reduction by teleportation. Note that reduction by teleportation is
a very general procedure and clearly can be used more generally in any
LOCC-assisted protocol trying to accomplish an arbitrary
information-processing task. Of course, a secret-key-agreement protocol is one
particular kind of protocol of the above form, as considered in the follow-up
works \cite{PLOB15,WTB16}, and so the general reduction method of
\cite{BDSW96,NFC09,Mul12}\ applies to this particular case.

\subsection{Amortized relative entropy of entanglement as a bound for
secret-key-agreement protocols}

The main goal of this section is to show that the amortized relative entropy
of entanglement is an upper bound on the rate of secret key that can be
extracted by a secret-key-agreement protocol. We begin by establishing the
following theorem:

\begin{proposition}
\label{prop:weak-converse-bound}The following weak-converse bound holds for an
$(n,K,\varepsilon)$ secret-key-agreement protocol conducted over a quantum
channel $\mathcal{N}$:%
\begin{equation}
(1-\varepsilon)\frac{\log_{2}K}{n}\leq E_{AR}(\mathcal{N})+\frac{1}{n}%
h_{2}(\varepsilon), \label{eq:amortized-weak-converse-bound}%
\end{equation}
where $E_{AR}(\mathcal{N})$ is the amortized relative entropy of entanglement and $h_{2}(\varepsilon)\equiv-\varepsilon\log_{2}\varepsilon-\left(1-\varepsilon
\right)\log_{2}\!\left(1-\varepsilon\right)$ denotes the binary entropy.
\end{proposition}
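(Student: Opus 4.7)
The plan is to prove \eqref{eq:amortized-weak-converse-bound} by combining two essentially independent ingredients: a lower bound on the relative entropy of entanglement of the final protocol state coming from its approximate-private-state structure, and a matching telescoping upper bound on that same quantity in terms of $n\,E_{AR}(\mathcal{N})$.

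First I would pass to the purified bipartite picture of Section~\ref{sec:secret-key-performance}, so that it suffices to work on the Alice--Bob cut $K_A S_A : K_B S_B$ of a final state $\omega_{K_A S_A K_B S_B}$ that satisfies $F(\omega, \gamma) \geq 1-\varepsilon$ for some bipartite private state $\gamma$ of $\log_2 K$ key bits. I would then invoke the standard approximate private state lower bound, established along the lines of \cite{HHHO05,HHHO09,WTB16},
\begin{equation}
(1-\varepsilon)\log_2 K - h_2(\varepsilon) \leq E_R(K_A S_A; K_B S_B)_{\omega}.
\end{equation}
The proof of this bound uses a privacy test: untwist by $U^\dagger$ and measure the key registers in the computational basis, so that any separable state passes with probability at most $1/K$ while $\omega$ passes with probability at least $1-\varepsilon$; data processing of the relative entropy under this binary measurement then yields the displayed inequality.

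Next I would upper-bound $E_R(K_A S_A; K_B S_B)_{\omega}$ by tracking entanglement along the protocol trajectory $\rho^{(1)} \to \sigma^{(1)} \to \rho^{(2)} \to \cdots \to \sigma^{(n)} \to \omega$. Three facts combine: (i) the initial state $\rho^{(1)}_{A_1'A_1 B_1'}$ is separable across $A_1'A_1 : B_1'$, hence $E_R(A_1'A_1; B_1')_{\rho^{(1)}}=0$; (ii) each intermediate LOCC channel $\mathcal{L}^{(i+1)}$ and the final LOCC channel $\mathcal{L}^{(n+1)}$ cannot increase $E_R$ across the natural Alice--Bob cut, by LOCC monotonicity of $E_R$; (iii) each channel application $\rho^{(i)}_{A_i'A_i B_i'} \mapsto \sigma^{(i)}_{A_i'B_i B_i'}$ contributes at most $E_{AR}(\mathcal{N})$, directly from Definition~\ref{def:amortized-ent} with the identifications $A' \leftrightarrow A_i'$, $A \leftrightarrow A_i$, $B' \leftrightarrow B_i'$, $B \leftrightarrow B_i$. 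Telescoping by adding and subtracting $E_R$ at each intermediate state then gives
\begin{equation}
E_R(K_A S_A; K_B S_B)_{\omega} \leq \sum_{i=1}^{n}\bigl[E_R(A_i'; B_i B_i')_{\sigma^{(i)}} - E_R(A_i' A_i; B_i')_{\rho^{(i)}}\bigr] \leq n\,E_{AR}(\mathcal{N}).
\end{equation}

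Combining the two bounds and dividing by $n$ yields \eqref{eq:amortized-weak-converse-bound}. The main obstacle is the first ingredient: one must arrive at the clean form $(1-\varepsilon)\log_2 K - h_2(\varepsilon)$ rather than a weaker bound of the kind obtained by first converting the fidelity $1-\varepsilon$ to a trace distance via Fuchs--van de Graaf and then invoking an Alicki--Fannes--Winter-style continuity of $E_R$, since the latter route would inject a dependence on the shield dimensions $|S_A|,|S_B|$, which are arbitrarily large in this setting. Bypassing continuity through the privacy-test / data-processing route is precisely what makes the argument work. The telescoping step itself is essentially bookkeeping, as long as one is careful that the bipartite cuts align consistently through each channel use and each LOCC round.
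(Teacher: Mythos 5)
Your overall strategy is exactly the paper's: a privacy-test lower bound $(1-\varepsilon)\log_2 K - h_2(\varepsilon) \le E_R(K_AS_A;K_BS_B)_{\omega}$ on the final state, followed by LOCC monotonicity and a telescoping sum that charges each channel use at most $E_{AR}(\mathcal{N})$, using separability of the initial state to kill the boundary term. Your remark about why one must bypass Alicki--Fannes--Winter continuity (the shield systems have unbounded dimension) is also correct and is precisely the point of the privacy-test route.

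There is, however, one concrete error in your description of the first ingredient. You define the privacy test as ``untwist by $U^\dagger$ and measure the key registers in the computational basis,'' and then assert that any separable state passes with probability at most $1/K$. That is false for a computational-basis agreement test: the classically maximally correlated state $\overline{\Phi}_{K_AK_B} = \frac{1}{K}\sum_k |k\rangle\langle k|_{K_A}\otimes|k\rangle\langle k|_{K_B}$ is separable and passes such a test with probability one. The test must instead project \emph{coherently} onto the maximally entangled state $\Phi_{K_AK_B}$ of the key systems (tensored with the identity on the shields) after untwisting; it is for this projection that $\operatorname{Tr}\{\Phi_{K_AK_B}\sigma_{K_AK_B}\}\le 1/K$ holds for separable $\sigma$, which is the content of \cite[Lemma~10]{WTB16} invoked in the paper. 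This distinction is not cosmetic --- it is exactly what separates secret key from mere classical correlation --- so as literally written your step (ii) fails, even though the corrected test is standard and the remainder of your argument (the binary data-processing inequality $D(\{1-\varepsilon,\varepsilon\}\Vert\{1/K,1-1/K\})\le D(\omega\Vert\sigma)$, taken over all separable $\sigma$, and the telescoping bound) goes through verbatim as in the paper. One further minor omission: the comparison of the two binary distributions requires $1-\varepsilon\ge 1/K$, which the paper notes can be assumed without loss of generality.
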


\begin{proof}
To see how the amortized relative entropy of entanglement gives an upper bound
on the performance of a secret-key-agreement protocol, consider the following
steps. We suppose that we are dealing with an $(n,K,\varepsilon)$
secret-key-agreement protocol as described previously. First, at the end of
the protocol one can perform a privacy test \cite{WTB16}\ (see also
\cite{HHHO09,HHHLO08,PhysRevLett.100.110502}), which untwists the twisting
unitary of the ideal target private state and projects onto the maximally
entangled state of the key systems (cf.,
Section~\ref{sec:secret-key-performance}). Let $F^{\ast}$ denote the
probability of the actual state $\omega_{K_{A}S_{A}K_{B}S_{B}}$\ at the end of
the protocol passing this test. By \cite[Lemma~9]{WTB16}, this probability is
larger than $1-\varepsilon$. Let $p_{\operatorname{SEP}}$ denote the
probability that a given separable state $\sigma_{K_{A}S_{A}K_{B}S_{B}}$\ of
systems $K_{A}S_{A}K_{B}S_{B}$ passes the test. This probability is no larger
than $1/K$, following from results of \cite{HHHO09} (reviewed as
\cite[Lemma~10]{WTB16}). Then we find that%
\begin{align}
-h_{2}(\varepsilon)+(1-\varepsilon)\log_{2}K  &  \leq D(\{1-\varepsilon
,\varepsilon\}\Vert\{1/K,1-1/K\})\\
&  \leq D(\{F^{\ast},1-F^{\ast}\}\Vert\{p_{\operatorname{SEP}}%
,1-p_{\operatorname{SEP}}\})\\
&  \leq D(\omega_{K_{A}S_{A}K_{B}S_{B}}\Vert\sigma_{K_{A}S_{A}K_{B}S_{B}}).
\end{align}
The first inequality follows because%
\begin{align}
D(\{1-\varepsilon,\varepsilon\}\Vert\{1/K,1-1/K\})  &  =(1-\varepsilon
)\log_{2}\left(  \frac{1-\varepsilon}{1/K}\right)  +\varepsilon\log_{2}\left(
\frac{\varepsilon}{1-1/K}\right) \\
&  =-h_{2}(\varepsilon)+\left(  1-\varepsilon\right)  \log_{2}K+\varepsilon
\log_{2}\left(  \frac{K}{K-1}\right) \\
&  \geq-h_{2}(\varepsilon)+\left(  1-\varepsilon\right)  \log_{2}K.
\end{align}
with the last inequality above following because $K\geq1$ (note that the
singular case of $K=1$ is not particularly interesting because the rate is
zero and the protocol is thus trivial). The second inequality follows because
the distributions $\{F^{\ast},1-F^{\ast}\}$ and $\{p_{\operatorname{SEP}%
},1-p_{\operatorname{SEP}}\}$ are more distinguishable than the distributions
$\{1-\varepsilon,\varepsilon\}$ and $\{1/K,1-1/K\}$, due to the conditions
$F^{\ast}\geq1-\varepsilon$ and $p_{\operatorname{SEP}}\leq1/K$ (we are
assuming without loss of generality that $1-\varepsilon\geq1/K$ for this
statement; if it is not the case, then the code is in a rather sad state of
affairs with poor performance and there is no need to give a converse bound in
this case---the bound would simply be $\log_{2}K<-\log_{2}(1-\varepsilon)$).
The final inequality follows from monotonicity of quantum relative entropy
with respect to the privacy test (understood as a measurement channel). Since
the above chain of inequalities holds for all separable states $\sigma
_{K_{A}S_{A}K_{B}S_{B}}$, we find that%
\begin{equation}
-h_{2}(\varepsilon)+(1-\varepsilon)\log_{2}K\leq E_{R}(K_{A}S_{A};K_{B}%
S_{B})_{\omega},
\end{equation}
which can be rewritten as\footnote{Alternatively, the bound in
\eqref{eq:first-bound-final-state} can be established for all $\varepsilon
\in(0,1)$ and $K\geq1$ by using several methods from
\cite{WR12,MW12,WTB16,KW17}. In more detail, we could make use of the
$\varepsilon$-relative entropy of entanglement \cite{BD11}\ as an intermediary
step to get that $\log_{2}K\leq E_{R}^{\varepsilon}(K_{A}S_{A};K_{B}%
S_{B})_{\omega}\leq\frac{1}{1-\varepsilon}\left[  E_{R}(K_{A}S_{A};K_{B}%
S_{B})_{\omega}+h_{2}(\varepsilon)\right]  $, with the first bound following
from \cite[Theorem~11]{WTB16} and the second from \cite{WR12,MW12,KW17}.}%
\begin{equation}
(1-\varepsilon)\log_{2}K\leq E_{R}(K_{A}S_{A};K_{B}S_{B})_{\omega}%
+h_{2}(\varepsilon). \label{eq:first-bound-final-state}%
\end{equation}
From the monotonicity of the relative entropy of entanglement with respect to
LOCC \cite{VP98}, we find that%
\begin{align}
E_{R}(K_{A}S_{A};K_{B}S_{B})_{\omega}  &  \leq E_{R}(A_{n}^{\prime};B_{n}%
B_{n}^{\prime})_{\sigma^{(n)}}\\
&  =E_{R}(A_{n}^{\prime};B_{n}B_{n}^{\prime})_{\sigma^{(n)}}-E_{R}%
(A_{1}^{\prime}A_{1};B_{1}^{\prime})_{\rho^{(1)}}\\
&  =E_{R}(A_{n}^{\prime};B_{n}B_{n}^{\prime})_{\sigma^{(n)}}+\left[
\sum_{i=2}^{n}E_{R}(A_{i}^{\prime}A_{i};B_{i}^{\prime})_{\rho^{(i)}}%
-E_{R}(A_{i}^{\prime}A_{i};B_{i}^{\prime})_{\rho^{(i)}}\right] \nonumber\\
&  \qquad-E_{R}(A_{1}^{\prime}A_{1};B_{1}^{\prime})_{\rho^{(1)}}\\
&  \leq\sum_{i=1}^{n}E_{R}(A_{i}^{\prime};B_{i}B_{i}^{\prime})_{\sigma^{(i)}%
}-E_{R}(A_{i}^{\prime}A_{i};B_{i}^{\prime})_{\rho^{(i)}}\\
&  \leq nE_{AR}(\mathcal{N}). \label{eq:last-bound-amortized-proof}%
\end{align}
The first equality follows because the state $\rho_{A_{1}^{\prime}A_{1}%
B_{1}^{\prime}}^{(1)}$ is a separable state with vanishing relative entropy of
entanglement. The second equality follows by adding and subtracting terms. The
second inequality follows because $E_{R}(A_{i}^{\prime}A_{i};B_{i}^{\prime
})_{\rho^{(i)}}\leq E_{R}(A_{i-1}^{\prime};B_{i-1}B_{i-1}^{\prime}%
)_{\sigma^{(i-1)}}$ for all $i\in\{2,\ldots,n\}$, due to monotonicity of the
relative entropy of entanglement with respect to LOCC. The final inequality
follows because each term $E_{R}(A_{n}^{\prime};B_{n}B_{n}^{\prime}%
)_{\sigma^{(i)}}-E_{R}(A_{i}^{\prime}A_{i};B_{i}^{\prime})_{\rho^{(i)}}$ is of
the form in the amortized relative entropy of entanglement, so that optimizing
over all inputs of the form $\rho^{(i)}$\ cannot exceed $E_{AR}(\mathcal{N})$.
Combining \eqref{eq:first-bound-final-state}\ and
\eqref{eq:last-bound-amortized-proof}, we arrive at the inequality in \eqref{eq:amortized-weak-converse-bound}.
\end{proof}

\bigskip

Taking the limit in Proposition~\ref{prop:weak-converse-bound}\ as
$n\rightarrow\infty$ and then as $\varepsilon\rightarrow0$ leads to the
following asymptotic statement:

\begin{theorem}
\label{thm:amortized-rel-ent-bound}The secret-key-agreement capacity
$P^{\leftrightarrow}(\mathcal{N}_{A\rightarrow B})$ of a quantum channel
$\mathcal{N}_{A\rightarrow B}$ cannot exceed its amortized relative entropy of
entanglement:%
\begin{equation}
P^{\leftrightarrow}(\mathcal{N})\leq E_{AR}(\mathcal{N}).
\end{equation}

\end{theorem}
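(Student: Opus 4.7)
The theorem is an asymptotic, single-letter reformulation of the non-asymptotic weak-converse bound already established in Proposition~\ref{prop:weak-converse-bound}, so the plan is simply to take limits in the correct order. All of the structural work (the privacy test, the chain telescoping through the amortized entanglement) is already packaged into that proposition; only the definition of achievable rate and capacity needs to be applied here.

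First, I would fix an arbitrary achievable rate $R$ for secret key agreement over $\mathcal{N}_{A\to B}$. By the definition given in Section~\ref{sec:secret-key-performance}, for every $\varepsilon \in (0,1)$ and $\delta > 0$, there exist an $(n, 2^{n(R-\delta)}, \varepsilon)$ protocol for all sufficiently large $n$. Substituting $K = 2^{n(R-\delta)}$ into the bound \eqref{eq:amortized-weak-converse-bound} of Proposition~\ref{prop:weak-converse-bound} gives
\begin{equation}
(1-\varepsilon)(R - \delta) \leq E_{AR}(\mathcal{N}) + \frac{h_2(\varepsilon)}{n}.
\end{equation}

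Next, since the right-hand side depends on $n$ only through $h_2(\varepsilon)/n$ (and $E_{AR}(\mathcal{N})$ is a property of the channel that does not grow with $n$), I would take $n \to \infty$ with $\varepsilon$ and $\delta$ fixed. Because $h_2(\varepsilon) \leq 1$, the second term vanishes, yielding $(1-\varepsilon)(R-\delta) \leq E_{AR}(\mathcal{N})$. Then I would take $\delta \to 0$ followed by $\varepsilon \to 0$ to obtain $R \leq E_{AR}(\mathcal{N})$. Since $R$ was an arbitrary achievable rate and $P^{\leftrightarrow}(\mathcal{N})$ is the supremum of all achievable rates, the desired inequality $P^{\leftrightarrow}(\mathcal{N}) \leq E_{AR}(\mathcal{N})$ follows.

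There is really no obstacle here, just bookkeeping: one only needs to make sure the limits are taken in the order $n \to \infty$ first (so that $h_2(\varepsilon)/n$ disappears regardless of $\varepsilon$) and then $\varepsilon,\delta \to 0$. All nontrivial content — the reduction to a privacy test, the use of $E_R$ as the distinguishability measure, and the telescoping sum that bounds $E_R$ of the final state by $n$ times $E_{AR}(\mathcal{N})$ — has already been handled in Proposition~\ref{prop:weak-converse-bound}.
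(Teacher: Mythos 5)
Your proposal is correct and matches the paper's argument exactly: the paper also obtains Theorem~\ref{thm:amortized-rel-ent-bound} by substituting $K = 2^{n(R-\delta)}$ into the weak-converse bound of Proposition~\ref{prop:weak-converse-bound} and taking the limits $n\to\infty$ first and then $\varepsilon,\delta\to 0$. The bookkeeping you supply (in particular noting that $h_2(\varepsilon)/n$ vanishes and that the order of limits matters) is precisely the content the paper leaves implicit.
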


\begin{remark}
Interestingly, a similar approach using the sandwiched R\'{e}nyi relative
entropy \cite{MDSFT13,WWY13}\ gives the following upper bound for all
$\alpha>1$%
\begin{equation}
\frac{\alpha}{\alpha-1}\log_{2}(1-\varepsilon)\leq n\widetilde{E}_{AR}%
^{\alpha}(\mathcal{N})-\log_{2}K, \label{eq:alpha-rel-ent-bound-1}%
\end{equation}
where%
\begin{equation}
\widetilde{E}_{AR}^{\alpha}(\mathcal{N})=\sup_{\rho_{A^{\prime}AB^{\prime}}%
}\widetilde{E}_{R}^{\alpha}(A^{\prime};BB^{\prime})_{\omega}-\widetilde{E}%
_{R}^{\alpha}(A^{\prime}A;B^{\prime})_{\rho}, \label{eq:alpha-rel-ent-bound-2}%
\end{equation}
and $\widetilde{E}_{R}^{\alpha}$ denotes the sandwiched R\'{e}nyi relative
entropy of entanglement \cite{WTB16},\ defined from the sandwiched R\'{e}nyi
relative entropy \cite{MDSFT13,WWY13}. One of the main results of
\cite{Christandl2017} is the bound $\widetilde{E}_{AR}^{\alpha}(\mathcal{N}%
)\leq E_{\max}(\mathcal{N})$, where $E_{\max}(\mathcal{N})$ denotes the
channel's max-relative entropy of entanglement (cf., \cite{D09}).\ The authors
of \cite{Christandl2017} observed that this latter bound in turn implies the
following bound for all $\alpha>1$:%
\begin{equation}
\frac{\alpha}{\alpha-1}\log_{2}(1-\varepsilon)\leq nE_{\max}(\mathcal{N}%
)-\log_{2}K.
\end{equation}

\end{remark}

\section{Amortized Rains relative entropy and PPT-assisted quantum
communication}

\label{sec:amortized-Rains-q-comm}The main goal of this section is to prove
that the amortized Rains relative entropy from
Definition~\ref{def:amortized-Rains} is an upper bound on a channel's
PPT-assisted quantum capacity. By this, we mean that a sender and receiver are
allowed to use a channel many times, and between every channel use, they are
allowed free usage of channels that are positive-partial-transpose
(PPT)\ preserving. In what follows, we detail these concepts, and then we
state the main theorem (Theorem~\ref{thm:amortized-rel-ent-bound-PPT}).

\subsection{Positive-partial-transpose preserving quantum channels}

A quantum channel $\mathcal{P}$ is a positive-partial-transpose (PPT)
preserving channel from systems $A\!:\!B$ to systems $A^{\prime}%
\!:\!B^{\prime}$ if the map $T_{B^{\prime}}\circ\mathcal{N}_{AB\rightarrow
A^{\prime}B^{\prime}}\circ T_{B}$ is completely positive and trace
preserving~\cite{R01}, where $T_{B}$ and $T_{B^{\prime}}$ denote the partial
transpose map. For a given basis $\{|i\rangle\}_{i}$, the transpose map is a
positive map, specified by $\rho\rightarrow\sum_{i,j}|i\rangle\langle
j|\rho|i\rangle\langle j|$. In what follows, we call PPT-preserving channels
\textquotedblleft PPT\ channels\textquotedblright\ as an abbreviation. It has
been known for a long time that PPT\ channels contain the set of LOCC channels
\cite{R01}, and so an immediate operational consequence of this containment is
that any general upper bound on the performance of a PPT-assisted protocol
serves as an upper bound on the performance of an LOCC-assisted protocol
\cite{R01}. This fact and the fact that PPT\ channels are simpler to analyze
mathematically than LOCC were some of the main motivations for introducing
this class of channels \cite{R01}.

PPT\ channels preserve the set $\operatorname{PPT}^{\prime}$ discussed in
Definition~\ref{def:amortized-Rains} \cite{R01,AdMVW02}. For this reason and
since the relative entropy is monotone with respect to quantum channels
\cite{Lindblad1975}, it follows that the Rains relative entropy is monotone
with respect to PPT\ channels \cite{R01,AdMVW02}, in the sense that%
\begin{equation}
R(A;B)_{\rho}\geq R(A^{\prime};B^{\prime})_{\mathcal{P}(\rho)},
\end{equation}
where $\rho_{AB}$ is a bipartite state and $\mathcal{P}_{AB\rightarrow
A^{\prime}B^{\prime}}$ is a PPT\ channel.

The notion of PPT\ channels then leads to a more general notion of the
teleportation simulation of a quantum channel:

\begin{definition}
[$\omega$-PPT-simulable channel]\label{def:PPT-sim-chan}A channel
$\mathcal{N}_{A\rightarrow B}$ with input system $A$ and output system $B$ is
defined to be PPT-simulable with associated resource state $\omega_{RB}$
($\omega$-PPT-simulable for short) if the following equality holds for all
input states $\rho_{A}$:%
\begin{equation}
\mathcal{N}_{A\rightarrow B}(\rho_{A})=\mathcal{P}_{ARB\rightarrow B}(\rho
_{A}\otimes\omega_{RB}),
\end{equation}
where $\mathcal{P}_{ARB\rightarrow B}$ is a PPT quantum channel with respect
to the bipartite cut $AR|B$ at the input. Note that every
teleportation-simulable channel with associated resource state $\omega_{RB}$
is PPT-simulable with associated resource state $\omega_{RB}$.
\end{definition}

\subsection{Protocols for PPT-assisted quantum communication and their
performance}

\label{sec:PPT-protocols}The structure of an $(n,M,\varepsilon)$ protocol for
PPT-assisted quantum communication is quite similar to that for an
$(n,K,\varepsilon)$ protocol for secret key agreement, which we discussed
previously in Section~\ref{sec:SKA-protocol}. In fact, such a PPT-assisted
protocol is exactly as outlined in Section~\ref{sec:SKA-protocol} and
Figure~\ref{fig:private-code}, but each LOCC\ channel is replaced with a
PPT\ channel. Let us denote the final state of the protocol by $\omega
_{M_{A}M_{B}}$ instead of $\omega_{K_{A}K_{B}}$. Fixing $n,M\in\mathbb{N}$ and
$\varepsilon\in\lbrack0,1]$, the protocol is an $(n,M,\varepsilon)$
PPT-assisted quantum communication protocol if%
\begin{equation}
F(\omega_{M_{A}M_{B}},\Phi_{M_{A}M_{B}})\geq1-\varepsilon,
\end{equation}
where the maximally entangled state $\Phi_{M_{A}M_{B}}$ is defined in \eqref{eq:MES}.

A rate $R$ is achievable for PPT-assisted quantum communication if for all
$\varepsilon\in(0,1)$, $\delta>0$, and sufficiently large $n$, there exists an
$(n,2^{n\left(  R-\delta\right)  },\varepsilon)$ protocol. The PPT-assisted
quantum capacity of $\mathcal{N}_{A\rightarrow B}$, denoted as
$Q^{\mathbf{PPT},\leftrightarrow}(\mathcal{N}_{A\rightarrow B})$, is equal to
the supremum of all achievable rates.

We can also consider the whole development above when we only allow the
assistance of LOCC\ channels instead of PPT\ channels. In this case, we have
similar notions as above, and then we arrive at the LOCC-assisted quantum
capacity $Q^{\leftrightarrow}(\mathcal{N}_{A\rightarrow B})$. It then
immediately follows that%
\begin{equation}
Q^{\leftrightarrow}(\mathcal{N}_{A\rightarrow B})\leq Q^{\mathbf{PPT}%
,\leftrightarrow}(\mathcal{N}_{A\rightarrow B})
\end{equation}
because every LOCC\ channel is a PPT\ channel. We also have the following
bound%
\begin{equation}
Q^{\leftrightarrow}(\mathcal{N}_{A\rightarrow B})\leq P^{\leftrightarrow
}(\mathcal{N}_{A\rightarrow B}),
\end{equation}
as observed in \cite{TWW14}, because a maximally entangled state, the target
state of an LOCC-assisted quantum communication protocol, is a particular kind
of private state.

For channels that are $\omega$-PPT-simulable, as in
Definition~\ref{def:PPT-sim-chan}, PPT-assisted protocols simplify immensely,
just as was the case for teleportation-simulable channels. Indeed, in this
case, PPT-assisted protocols can be reduced to the action of a single
PPT\ channel on $n$ copies of the resource state $\omega_{RB}$, and this
reduction is helpful in bounding the performance of PPT-assisted protocols
conducted over such channels.

\subsection{Amortized Rains relative entropy as a bound for PPT-assisted
quantum communication protocols}

We can employ an argument nearly identical to that given in the proof of
Proposition~\ref{prop:weak-converse-bound}\ in order to establish that the
amortized Rains relative entropy is an upper bound on the rate at which
maximal entanglement can be extracted by a PPT-assisted quantum communication
protocol. Indeed, we simply replace the privacy test therein by a maximal
entanglement test (i.e., a measurement specified by a projection onto the
maximally entangled state or its complement). By the definition of an
$(n,M,\varepsilon)$ protocol and the fidelity, the probability for the final
state $\omega_{M_{A}M_{B}}$ of the protocol to pass this test is larger than
$1-\varepsilon$. Furthermore, due to \cite[Lemma~2]{R99}, the bound
$\operatorname{Tr}\{\Phi_{M_{A}M_{B}}\sigma_{M_{A}M_{B}}\}\leq1/M$ holds for
all states $\sigma_{M_{A}M_{B}}\in\operatorname{PPT}^{\prime}(M_{A}%
\!:\!M_{B})$. These bounds and the same reasoning employed in the proof of
Proposition~\ref{prop:weak-converse-bound}\ allow us to conclude the following
weak-converse bound for any PPT-assisted quantum communication protocol:

\begin{proposition}
\label{prop:weak-converse-bound-PPT}The following weak-converse bound holds
for an $(n,M,\varepsilon)$ PPT-assisted quantum communication protocol
conducted over a quantum channel $\mathcal{N}$:%
\begin{equation}
(1-\varepsilon)\frac{\log_{2}M}{n}\leq R_{A}(\mathcal{N})+\frac{1}{n}%
h_{2}(\varepsilon),
\end{equation}
where $R_{A}(\mathcal{N})$ denotes the amortized Rains relative entropy from
Definition~\ref{def:amortized-Rains}.
\end{proposition}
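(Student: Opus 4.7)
The plan is to mimic, essentially verbatim, the proof of Proposition~\ref{prop:weak-converse-bound}, with two substitutions: replace the privacy test by a maximal entanglement test (the binary measurement $\{\Phi_{M_AM_B}, I - \Phi_{M_AM_B}\}$) and replace the set $\operatorname{SEP}$ by the Rains set $\operatorname{PPT}^{\prime}$. Since LOCC channels are PPT channels and the Rains relative entropy is monotone under PPT channels (as already noted in the excerpt), every application of LOCC-monotonicity in the proof of Proposition~\ref{prop:weak-converse-bound} can be replaced by PPT-monotonicity, which is exactly what is needed to handle a general PPT-assisted protocol.

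First, I would apply the entanglement test to the final state $\omega_{M_AM_B}$. By hypothesis and the operational interpretation of fidelity, the pass probability is $F^{\ast} = \operatorname{Tr}\{\Phi_{M_AM_B}\omega_{M_AM_B}\} \geq 1-\varepsilon$. For any $\sigma_{M_AM_B}\in\operatorname{PPT}^{\prime}(M_A\!:\!M_B)$, the cited bound $\operatorname{Tr}\{\Phi_{M_AM_B}\sigma_{M_AM_B}\}\leq 1/M$ from \cite[Lemma~2]{R99} yields a pass probability $p_{\operatorname{PPT}^{\prime}}\leq 1/M$. Following the same chain of inequalities used in Proposition~\ref{prop:weak-converse-bound} (direct computation of the classical relative entropy $D(\{1-\varepsilon,\varepsilon\}\Vert\{1/M,1-1/M\})$, ordering of distributions by distinguishability, and data processing under the test), and optimizing over $\sigma_{M_AM_B}\in\operatorname{PPT}^{\prime}$, I arrive at
\begin{equation}
(1-\varepsilon)\log_{2}M \leq R(M_A;M_B)_{\omega} + h_{2}(\varepsilon).
\end{equation}

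Next, I would reduce $R(M_A;M_B)_{\omega}$ to the amortized Rains relative entropy of the channel by the standard telescoping trick. Monotonicity of $R$ under PPT channels gives $R(M_A;M_B)_{\omega}\leq R(A_n^{\prime};B_nB_n^{\prime})_{\sigma^{(n)}}$ and, more generally, $R(A_i^{\prime}A_i;B_i^{\prime})_{\rho^{(i)}}\leq R(A_{i-1}^{\prime};B_{i-1}B_{i-1}^{\prime})_{\sigma^{(i-1)}}$ for each $i\in\{2,\dots,n\}$. Using that the initial state $\rho^{(1)}_{A_1^{\prime}A_1B_1^{\prime}}$ is separable and hence lies in $\operatorname{PPT}^{\prime}$ (since $\operatorname{SEP}\subset\operatorname{PPT}^{\prime}$), so that $R(A_1^{\prime}A_1;B_1^{\prime})_{\rho^{(1)}}=0$, and adding and subtracting intermediate terms yields
\begin{equation}
R(M_A;M_B)_{\omega}\leq \sum_{i=1}^{n}\!\left[R(A_i^{\prime};B_iB_i^{\prime})_{\sigma^{(i)}} - R(A_i^{\prime}A_i;B_i^{\prime})_{\rho^{(i)}}\right] \leq n\, R_A(\mathcal{N}),
\end{equation}
where the last inequality uses Definition~\ref{def:amortized-Rains}. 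Combining the two displays and dividing by $n$ gives the claimed bound.

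I do not anticipate any genuine obstacle; the argument is a direct adaptation and the two key ingredients (the $1/M$ bound on overlap of maximally entangled states with $\operatorname{PPT}^{\prime}$ states, and PPT-monotonicity of $R$) are exactly what lets the secret-key argument transfer to the PPT-quantum setting. The only point worth double-checking is that the initial separable state has zero Rains relative entropy, which is immediate from $\operatorname{SEP}\subset\operatorname{PPT}^{\prime}$ together with $D(\rho\Vert\rho)=0$.
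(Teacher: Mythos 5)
Your proposal is correct and follows essentially the same route as the paper, which itself describes this proposition as "an argument nearly identical" to Proposition~\ref{prop:weak-converse-bound} with the privacy test replaced by the maximal entanglement test, the $\operatorname{Tr}\{\Phi_{M_AM_B}\sigma_{M_AM_B}\}\leq 1/M$ bound from \cite[Lemma~2]{R99} in place of the separable-state bound, and PPT-monotonicity of the Rains relative entropy driving the telescoping step. All the details you supply, including the vanishing of the Rains relative entropy on the initial separable state, match the intended argument.
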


Taking the limit in Proposition~\ref{prop:weak-converse-bound-PPT}\ as
$n\rightarrow\infty$ and then as $\varepsilon\rightarrow0$ leads to the
following asymptotic statement:

\begin{theorem}
\label{thm:amortized-rel-ent-bound-PPT}The PPT-assisted quantum capacity
$Q^{\mathbf{PPT},\leftrightarrow}(\mathcal{N}_{A\rightarrow B})$ of a quantum
channel $\mathcal{N}_{A\rightarrow B}$ cannot exceed its amortized Rains
relative entropy:%
\begin{equation}
Q^{\mathbf{PPT},\leftrightarrow}(\mathcal{N})\leq R_{A}(\mathcal{N}).
\end{equation}

\end{theorem}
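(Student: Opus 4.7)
The plan is to derive the asymptotic statement directly from the non-asymptotic weak-converse bound of Proposition~\ref{prop:weak-converse-bound-PPT}, since essentially all of the work has already been done there. The proof is a standard asymptotic unpacking of the rate definition.

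First, I would fix an arbitrary achievable rate $R$ for PPT-assisted quantum communication. By the definition given in Section~\ref{sec:PPT-protocols}, for every $\varepsilon\in(0,1)$ and every $\delta>0$, there exists, for all sufficiently large $n$, an $(n,M_n,\varepsilon)$ PPT-assisted protocol with $M_n \geq 2^{n(R-\delta)}$. Substituting $\log_2 M_n \geq n(R-\delta)$ into the bound of Proposition~\ref{prop:weak-converse-bound-PPT} yields
\begin{equation}
(1-\varepsilon)(R-\delta) \leq R_A(\mathcal{N}) + \frac{1}{n} h_2(\varepsilon).
\end{equation}

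Next, I would take the limit $n \to \infty$, which eliminates the $h_2(\varepsilon)/n$ term (since $h_2(\varepsilon)\leq 1$), to obtain $(1-\varepsilon)(R-\delta) \leq R_A(\mathcal{N})$. Then I would take $\varepsilon \to 0$ followed by $\delta \to 0$ to conclude that $R \leq R_A(\mathcal{N})$. Since this holds for every achievable rate $R$, taking the supremum on the left-hand side gives $Q^{\mathbf{PPT},\leftrightarrow}(\mathcal{N}) \leq R_A(\mathcal{N})$.

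There is no real obstacle at this step; the genuine technical content lies in Proposition~\ref{prop:weak-converse-bound-PPT} (the amortization telescoping chain combined with the PPT-analog of the privacy test coming from \cite[Lemma~2]{R99}). The only minor thing to be careful about is ensuring that the inequality $(1-\varepsilon)(R-\delta) \leq R_A(\mathcal{N})$ survives the iterated limits in the correct order; taking $n \to \infty$ first (for each fixed $\varepsilon,\delta$) and then $\varepsilon \to 0$ and $\delta \to 0$ is the standard and safe choice, and this suffices to complete the argument.
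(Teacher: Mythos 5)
Your proposal is correct and follows exactly the paper's route: the paper likewise obtains this theorem directly from Proposition~\ref{prop:weak-converse-bound-PPT} by taking the limit $n\rightarrow\infty$ and then $\varepsilon\rightarrow 0$, and your write-up simply makes explicit the standard unpacking of the achievable-rate definition (including the $\delta\rightarrow 0$ step) that the paper leaves implicit.
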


Furthermore, we obtain a bound for PPT-assisted quantum communication similar
to that stated in
\eqref{eq:alpha-rel-ent-bound-1}--\eqref{eq:alpha-rel-ent-bound-2}\ by
employing similar reasoning.

\section{Approximately teleportation- and PPT-simulable channels}

\label{sec:approx-TP-bound}

We now define approximately teleportation- and PPT-simulable channels:

\begin{definition}
[Approximately teleportation- and PPT-simulable channels]%
\label{def:approx-TP-sim}A quantum channel $\mathcal{N}_{A\rightarrow B}$ is
$\varepsilon$-approximately teleportation-simulable with associated resource
state $\omega_{RB}$ if there exists a channel $\mathcal{M}_{A\rightarrow B}$
that is exactly teleportation-simulable with associated resource state
$\omega_{RB}$ such that%
\begin{equation}
\frac{1}{2}\left\Vert \mathcal{N}_{A\rightarrow B}-\mathcal{M}_{A\rightarrow
B}\right\Vert _{\Diamond}\leq\varepsilon.
\end{equation}
For short, we say that $\mathcal{N}_{A\rightarrow B}$ is $\left(
\varepsilon,\omega_{RB}\right)  $-approximately teleportation-simulable. The
same definition applies for an $\left(  \varepsilon,\omega_{RB}\right)
$-approximately PPT-simulable channel, but the difference is that
$\mathcal{M}_{A\rightarrow B}$ is exactly PPT-simulable with associated
resource state $\omega_{RB}$. Also, if a channel is $\left(  \varepsilon
,\omega_{RB}\right)  $-approximately teleportation-simulable, then it is
$\left(  \varepsilon,\omega_{RB}\right)  $-approximately PPT-simulable.
\end{definition}

In Appendix~\ref{sec:cov-TP-sim}, we discuss the relation between the notion of
approximately teleportation-simulable channels and the recently introduced
notion of approximately covariant channels \cite{LKDW17}. Therein, we also
discuss channel twirling and how to simulate this procedure via a generalized
teleportation protocol.

The following theorem is an immediate consequence of
Proposition~\ref{prop:tp-upper-bound}, Theorem~\ref{thm:continuity}, and
Theorem~\ref{thm:amortized-rel-ent-bound}, and it constitutes one of the main
results of our paper:

\begin{theorem}
\label{thm:approx-TP-sim-bound}If a channel $\mathcal{N}_{A\rightarrow B}$ is
$\left(  \varepsilon,\omega_{RB}\right)  $-approximately
teleportation-simulable, then its secret-key-agreement capacity
$P^{\leftrightarrow}(\mathcal{N}_{A\rightarrow B})$ is bounded from above as%
\begin{equation}
P^{\leftrightarrow}(\mathcal{N}_{A\rightarrow B})\leq E_{R}(R;B)_{\omega
}+2\varepsilon\log_{2}\left\vert B\right\vert +g(\varepsilon).
\end{equation}

\end{theorem}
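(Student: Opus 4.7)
The plan is to chain the three cited results into a short estimate on the channel's amortized relative entropy of entanglement. The strategy is essentially a triangle-style bound: approximate teleportation simulability lets one replace $\mathcal{N}$ by a nearby exactly teleportation-simulable channel $\mathcal{M}$, at a continuity cost controlled by Theorem~\ref{thm:continuity}, and then $E_{AR}(\mathcal{M})$ is controlled by the resource state via Proposition~\ref{prop:tp-upper-bound}.

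First I would invoke Theorem~\ref{thm:amortized-rel-ent-bound} to reduce the capacity bound to an entropic bound on the channel, namely $P^{\leftrightarrow}(\mathcal{N}) \leq E_{AR}(\mathcal{N})$. Next, by the hypothesis together with Definition~\ref{def:approx-TP-sim}, there exists a channel $\mathcal{M}_{A\to B}$ that is exactly teleportation-simulable with associated resource state $\omega_{RB}$ and satisfies $\tfrac{1}{2}\|\mathcal{N}_{A\to B} - \mathcal{M}_{A\to B}\|_{\Diamond} \leq \varepsilon$. Applying Theorem~\ref{thm:continuity} to this pair yields
\begin{equation}
E_{AR}(\mathcal{N}) \leq E_{AR}(\mathcal{M}) + 2\varepsilon \log_2 |B| + g(\varepsilon).
\end{equation}
Finally, since $\mathcal{M}$ is exactly teleportation-simulable with resource state $\omega_{RB}$, and the relative entropy of entanglement is subadditive with respect to tensor products of states, Proposition~\ref{prop:tp-upper-bound} (applied with $E_S = E_R$) gives $E_{AR}(\mathcal{M}) \leq E_R(R;B)_\omega$. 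Concatenating the three inequalities yields the stated upper bound.

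There is no substantive obstacle here, since every ingredient has already been assembled in earlier sections. The only things to verify are bookkeeping matters: that the input and output systems $A, B$ line up consistently across the three invocations (which Definition~\ref{def:approx-TP-sim} guarantees by requiring $\mathcal{M}$ to share input $A$ and output $B$ with $\mathcal{N}$), and that the dimension $|B|$ appearing in the continuity bound is the common output dimension of both channels. The proof is a one-line chain of three inequalities.
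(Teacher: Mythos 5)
Your proof is correct and follows exactly the route the paper takes: the paper states this theorem as an immediate consequence of Proposition~\ref{prop:tp-upper-bound}, Theorem~\ref{thm:continuity}, and Theorem~\ref{thm:amortized-rel-ent-bound}, which is precisely the three-inequality chain you assemble. Your additional bookkeeping checks (matching input/output systems and the subadditivity of $E_R$ needed for Proposition~\ref{prop:tp-upper-bound}) are the right things to verify and are all satisfied.
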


Similarly, the following theorem is an immediate consequence of
Proposition~\ref{prop:tp-upper-bound}, Theorem~\ref{thm:continuity}, and
Theorem~\ref{thm:amortized-rel-ent-bound-PPT}:

\begin{theorem}
\label{thm:approx-TP-sim-bound-PPT}If a channel $\mathcal{N}_{A\rightarrow B}$
is $\left(  \varepsilon,\omega_{RB}\right)  $-approximately PPT-simulable,
then its PPT-assisted quantum capacity $Q^{\mathbf{PPT},\leftrightarrow
}(\mathcal{N}_{A\rightarrow B})$ is bounded from above as%
\begin{equation}
Q^{\mathbf{PPT},\leftrightarrow}(\mathcal{N}_{A\rightarrow B})\leq
R(R;B)_{\omega}+2\varepsilon\log_{2}\left\vert B\right\vert +g(\varepsilon).
\end{equation}

\end{theorem}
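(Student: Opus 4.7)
The proof is a straightforward chaining of three earlier results, with one small adaptation. My plan is first to use the hypothesis that $\mathcal{N}_{A\to B}$ is $(\varepsilon,\omega_{RB})$-approximately PPT-simulable to produce a channel $\mathcal{M}_{A\to B}$ which is exactly PPT-simulable with resource $\omega_{RB}$ and satisfies $\tfrac{1}{2}\|\mathcal{N}-\mathcal{M}\|_\Diamond \le \varepsilon$. I will then bound $R_A(\mathcal{M})$ in terms of $R(R;B)_\omega$, use Theorem~\ref{thm:continuity} to transfer the bound from $\mathcal{M}$ to $\mathcal{N}$ at the cost of $2\varepsilon\log_2|B|+g(\varepsilon)$, and finally invoke Theorem~\ref{thm:amortized-rel-ent-bound-PPT} to pass from the amortized Rains relative entropy to the PPT-assisted quantum capacity.

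The only step that is not a literal citation is the bound $R_A(\mathcal{M}) \le R(R;B)_\omega$. This is the exact PPT analogue of Proposition~\ref{prop:tp-upper-bound}, and it is obtained by repeating that proof with LOCC replaced by PPT channels throughout: the Rains relative entropy is monotone under PPT channels (as recalled in Section~\ref{sec:amortized-Rains-q-comm}) and is subadditive on tensor-product states, so writing $\mathcal{M}_{A\to B}(\rho_A)=\mathcal{P}_{ARB\to B}(\rho_A\otimes\omega_{RB})$ and applying monotonicity followed by subadditivity to $\rho_{A'AB'}\otimes\omega_{RB}$ yields the same chain of inequalities as in Proposition~\ref{prop:tp-upper-bound}, terminating at $R(R;B)_\omega$.

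Combining these ingredients, for any $(\varepsilon,\omega_{RB})$-approximately PPT-simulable $\mathcal{N}$, I obtain
\begin{align}
Q^{\mathbf{PPT},\leftrightarrow}(\mathcal{N})
&\le R_A(\mathcal{N}) \\
&\le R_A(\mathcal{M}) + 2\varepsilon\log_2|B| + g(\varepsilon) \\
&\le R(R;B)_\omega + 2\varepsilon\log_2|B| + g(\varepsilon),
\end{align}
where the first inequality is Theorem~\ref{thm:amortized-rel-ent-bound-PPT}, the second is Theorem~\ref{thm:continuity} applied to the Rains relative entropy (whose hypotheses are met thanks to the diamond-norm bound between $\mathcal{N}$ and $\mathcal{M}$), and the third is the PPT analogue of Proposition~\ref{prop:tp-upper-bound} discussed above. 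Since $\mathcal{M}$ is arbitrary subject to the simulation condition, no optimization is necessary beyond the one already implicit in the definition of approximate PPT-simulability.

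The only thing that could be called an obstacle is verifying that Proposition~\ref{prop:tp-upper-bound} really does transplant verbatim to the PPT setting, but both required properties of $R$ (PPT-monotonicity and subadditivity) are standard and already used elsewhere in the paper, so no new technical work is needed. Everything else is a direct citation.
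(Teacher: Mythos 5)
Your proposal is correct and follows essentially the same route as the paper, which presents this theorem as an immediate consequence of Proposition~\ref{prop:tp-upper-bound}, Theorem~\ref{thm:continuity}, and Theorem~\ref{thm:amortized-rel-ent-bound-PPT}. Your explicit remark that Proposition~\ref{prop:tp-upper-bound} must be transplanted to the PPT setting---replacing LOCC-monotonicity with the PPT-monotonicity of the Rains relative entropy recalled in Section~\ref{sec:amortized-Rains-q-comm}---is a point the paper glosses over, and you handle it correctly.
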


In the next section, we apply the bounds from
Theorems~\ref{thm:approx-TP-sim-bound}\ and \ref{thm:approx-TP-sim-bound-PPT}%
\ to an example qubit channel.

\section{Bounds on the assisted capacities of a particular qubit channel}

\label{sec:examples}

In this section, we apply the bounds from
Theorems~\ref{thm:approx-TP-sim-bound}\ and \ref{thm:approx-TP-sim-bound-PPT}%
\ to a particular qubit channel $\mathcal{N}_{p}$, which we define to be a
convex mixture of an amplitude damping channel and a depolarizing channel:%
\begin{equation}
\mathcal{N}_{p}(\rho)=p\mathcal{A}_{p}(\rho)+(1-p)\mathcal{D}_{p}(\rho),
\label{eqn:channel}%
\end{equation}
where $p\in\lbrack0,1]$ and $\rho$ is an input qubit density operator (this is
the same channel considered in concurrent work \cite{LKDW17}). The amplitude
damping channel $\mathcal{A}_{p}$ is defined as
\begin{align}
\mathcal{A}_{p}(\rho)  &  =K_{1}\rho K_{1}^{\dagger}+K_{2}\rho K_{2}^{\dagger
},\\
K_{1}  &  =|0\rangle\langle0|+\sqrt{1-p}|1\rangle\langle1|,\\
K_{2}  &  =\sqrt{p}|0\rangle\langle1|.
\end{align}
Also, $\mathcal{D}_{p}$ denotes the qubit depolarizing channel:%
\begin{equation}
\mathcal{D}_{p}(\rho)=(1-p)\rho+\frac{p}{3}(X\rho X+Y\rho Y+Z\rho Z),
\end{equation}
where $X$, $Y$, and $Z$ are the Pauli operators.

Let $\Phi(\mathcal{M})$ denote the Choi state associated with a channel
$\mathcal{M}$, i.e., the state that results from sending one share of a
maximally entangled state through the channel. It has been known for many
years now \cite{BDSW96,HHH99,CDP09}\ that the depolarizing channel is
teleportation-simulable with associated resource state $\Phi(\mathcal{D}_{p}%
)$. Thus, for small values of $p$, we should expect for a convex mixture of
the depolarizing channel and the amplitude channel to be approximately
teleportation-simulable with associated resource state $\Phi(\mathcal{D}_{p}%
)$, given that $\mathcal{N}_{p}$ is intuitively close to $\mathcal{D}_{p}$ for
small values of $p$. Indeed, it follows from the results of \cite[Section~4.2]{LKDW17} and
the discussion in Appendix~\ref{sec:cov-TP-sim} that the channel
$\mathcal{N}_{p}$ is $(p^{2}/2,\Phi(\overline{\mathcal{N}}_{p}))$%
-approximately teleportation-simulable, where $\overline{\mathcal{N}}_{p}$
denotes the following teleportation-simulable channel:%
\begin{equation}
\overline{\mathcal{N}}_{p}(\rho)=\frac{1}{2}\left[  \mathcal{N}_{p}%
(\rho)+X\mathcal{N}_{p}(X\rho X)X\right]  .
\end{equation}
We can thus apply Theorems~\ref{thm:approx-TP-sim-bound} and
\ref{thm:approx-TP-sim-bound-PPT} to arrive at the following bounds on the
secret-key-agreement capacity $P^{\leftrightarrow}(\mathcal{N}_{p})$ and the
PPT-assisted quantum capacity $Q^{\mathbf{PPT},\leftrightarrow}(\mathcal{N}%
_{p})$:%
\begin{align}
P^{\leftrightarrow}(\mathcal{N}_{p}) &  \leq E_{R}(A;B)_{\Phi(\overline
{\mathcal{N}}_{p})}+p^{2}+g(p^{2}/2),\\
Q^{\mathbf{PPT},\leftrightarrow}(\mathcal{N}_{p}) &  \leq R(A;B)_{\Phi
(\overline{\mathcal{N}}_{p})}+p^{2}+g(p^{2}/2).
\end{align}
By noting that the set $\operatorname{PPT}^{\prime}$ contains the set of PPT
states and applying the well known result that PPT\ states are equal to
separable states for $2\times2$ systems \cite{P96,Horodecki19961}, we can
conclude that%
\begin{equation}
P^{\leftrightarrow}(\mathcal{N}_{p}),\ Q^{\mathbf{PPT},\leftrightarrow
}(\mathcal{N}_{p})\leq E_{\mathbf{PPT}}(A;B)_{\Phi(\overline{\mathcal{N}}%
_{p})}+p^{2}+g(p^{2}/2),\label{eq:qubit-chan-upper-bound}%
\end{equation}
where $E_{\mathbf{PPT}}$ denotes the relative entropy to PPT\ states.

\begin{figure}[ptb]
\begin{center}
\includegraphics[width=10cm]{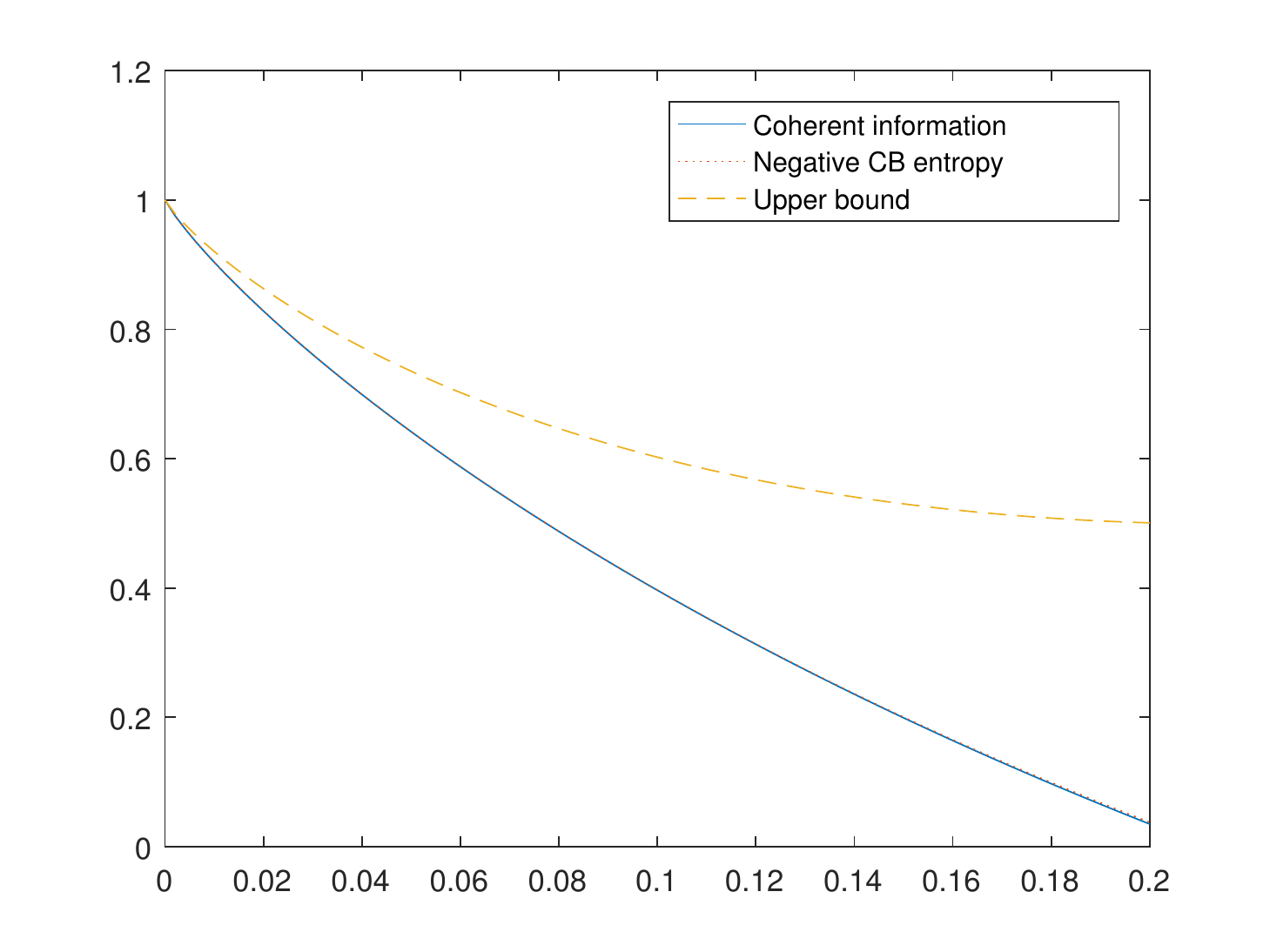}
\end{center}
\caption{Upper and lower bounds on the secret-key-agreement capacity
$P^{\leftrightarrow}(\mathcal{N}_{p})$ and the PPT-assisted quantum capacity
$Q^{\mathbf{PPT},\leftrightarrow}(\mathcal{N}_{p})$\ of the channel defined in
(\ref{eqn:channel}). The vertical axis represents rate in either private bits
or qubits per channel use. The horizontal axis represents the value of the
channel parameter $p$.}%
\label{fig:plots}%
\end{figure}

The upper bound in \eqref{eq:qubit-chan-upper-bound} is plotted in
Figure~\ref{fig:plots}. To calculate $E_{\mathbf{PPT}}(A;B)_{\Phi
(\overline{\mathcal{N}}_{p})}$, we have made use of the relative entropy
optimization techniques put forward recently in \cite{FF17}.

We also consider lower bounds on the assisted capacities. Note that both
$P^{\leftrightarrow}(\mathcal{N}_{p})$ and $Q^{\mathbf{PPT},\leftrightarrow
}(\mathcal{N}_{p})$ are bounded from below by the coherent information
\cite{PhysRevA.54.2629}\ and the negative CB-entropy \cite{DJKR06}\ of the
channel (note that the latter is sometimes called \textquotedblleft reverse
coherent information\textquotedblright). These lower bounds are a direct
consequence of the developments in \cite{DW05}. In Figure~\ref{fig:plots}, we
also plot the coherent information of the channel, with the input state being
the maximally entangled state. For the negative CB-entropy, we optimize over
the input states of the channel and can exploit symmetry to simplify this optimization.

Here we elaborate on how to simplify the calculation of the negative
CB-entropy for our example. As discussed in \cite{DJKR06}, it is possible to
write the negative CB-entropy of a channel $\mathcal{N}_{A\rightarrow B}$ as
the following optimization:%
\begin{equation}
-H_{\operatorname{CB}}(\mathcal{N})=\sup_{\rho}H(B|E)_{\mathcal{U}%
^{\mathcal{N}}(\rho)},
\end{equation}
where $\mathcal{U}_{A\rightarrow BE}^{\mathcal{N}}$ denotes an isometric
channel that extends $\mathcal{N}_{A\rightarrow B}$. Due to the concavity of
conditional entropy \cite{LR73,PhysRevLett.30.434}, it immediately follows
that $H(B|E)_{\mathcal{U}^{\mathcal{N}}(\rho)}$ is concave with respect to the
input density operator $\rho$, and thus the calculation of
$-H_{\operatorname{CB}}(\mathcal{N})$ is a concave optimization problem. For
our example, we can further simplify the calculation of $-H_{\operatorname{CB}%
}(\mathcal{N}_{p})$ by exploiting the symmetry of $\mathcal{N}_{p}$. To see
this symmetry, consider that both the amplitude damping channel and the
depolarizing channel are covariant with respect to $I$ and $Z$, in the sense
that%
\begin{equation}
\mathcal{A}_{p}(U\rho U^{\dag})=U\mathcal{A}_{p}(\rho)U^{\dag},\qquad
\mathcal{D}_{p}(U\rho U^{\dag})=U\mathcal{D}_{p}(\rho)U^{\dag},
\end{equation}
where $U$ can be $I$ or $Z$. This observation then implies that $\mathcal{N}%
_{p}$ is covariant with respect to $I$ and $Z$. By invoking an observation
stated in \cite{Hol06}, it follows that%
\begin{equation}
\mathcal{U}^{\mathcal{N}_{p}}(Z\rho Z)=(Z\otimes\bar{Z})\mathcal{U}%
^{\mathcal{N}_{p}}(\rho)(Z\otimes\bar{Z}),
\end{equation}
where $\mathcal{U}^{\mathcal{N}_{p}}$ is an isometric channel that extends
$\mathcal{N}_{p}$ and $\bar{Z}$ is a unitary representation of $Z$. We can
then exploit the invariance of conditional entropy with respect to local
unitaries and its concavity to find that%
\begin{align}
H(B|E)_{\mathcal{U}^{\mathcal{N}_{p}}(\rho)}  &  =\frac{1}{2}\left[
H(B|E)_{\mathcal{U}^{\mathcal{N}_{p}}(\rho)}+H(B|E)_{(Z\otimes\bar
{Z})\mathcal{U}^{\mathcal{N}_{p}}(\rho)(Z\otimes\bar{Z})}\right] \\
&  =\frac{1}{2}\left[  H(B|E)_{\mathcal{U}^{\mathcal{N}_{p}}(\rho
)}+H(B|E)_{\mathcal{U}^{\mathcal{N}_{p}}(Z\rho Z)}\right] \\
&  \leq H(B|E)_{\mathcal{U}^{\mathcal{N}_{p}}(\frac{1}{2}(\rho+Z\rho Z))}.
\end{align}
Since $\frac{1}{2}(\rho+Z\rho Z)$ has no off-diagonal elements with respect to
the standard basis, the above calculation reduces the optimization of the
negative CB-entropy for $\mathcal{N}_{p}$\ to an optimization over a single parameter.

The optimized negative CB-entropy is plotted in Figure \ref{fig:plots}. Our
findings are consistent with those in earlier works \cite{SSWR14,LLS17}: the
upper bound from \eqref{eq:qubit-chan-upper-bound} is closer to the lower
bounds in the low-noise regime (small values of $p$) than it is in the
high-noise regime.

\section{Generalizations to other resource theories}

In this section, we discuss how to generalize several of the concepts in our
paper to general resource theories \cite{BG15,fritz_2015,RKR15,KR16}. This
generalization has already been considered in the context of the resource
theory of coherence \cite{BGMW17}, and in fact, we note here that the recent
developments in \cite{BGMW17} were what served as the inspiration for our
present paper. In short, a resource theory consists of a few basic
elements.\ There is a set $F$\ of free quantum states, i.e., those that the
players involved are allowed to access without any cost. Related to these,
there is a set of free channels, and they should have the property that a free
state remains free after a free channel acts on it. Once these are defined, it
follows that any state that is not free is considered resourceful, i.e.,
useful in the context of the resource theory. We can also then define a
measure $V$ of the resourcefulness of a quantum state, and some fundamental
properties that it should satisfy are that

\begin{enumerate}
\item it should be monotone non-increasing under the action of a free channel and

\item it should be equal to zero when evaluated on a free state.
\end{enumerate}

\noindent A typical choice of a resourcefulness measure of a state $\rho$
satisfying these requirements is the relative entropy of
resourcefulness:\ $\inf_{\sigma\in F}D(\rho\Vert\sigma)$.

With these basic aspects established and given a measure $V$ of the
resourcefulness of a quantum state, we might be interested in quantifying how
resourceful a channel $\mathcal{N}$\ is. One way of doing so is to define the
amortized resourcefulness of a quantum channel as follows, generalizing the
amortized entanglement from Definition~\ref{def:amortized-ent}:%
\begin{equation}
V_{A}(\mathcal{N})=\sup_{\rho_{RA}}V((\operatorname{id}_{R} \otimes
\mathcal{N}_{A\to B})(\rho_{RA}))-V(\rho_{RA}).
\end{equation}

\begin{figure}[ptb]
\begin{center}
\includegraphics[
width=6.5638in
]{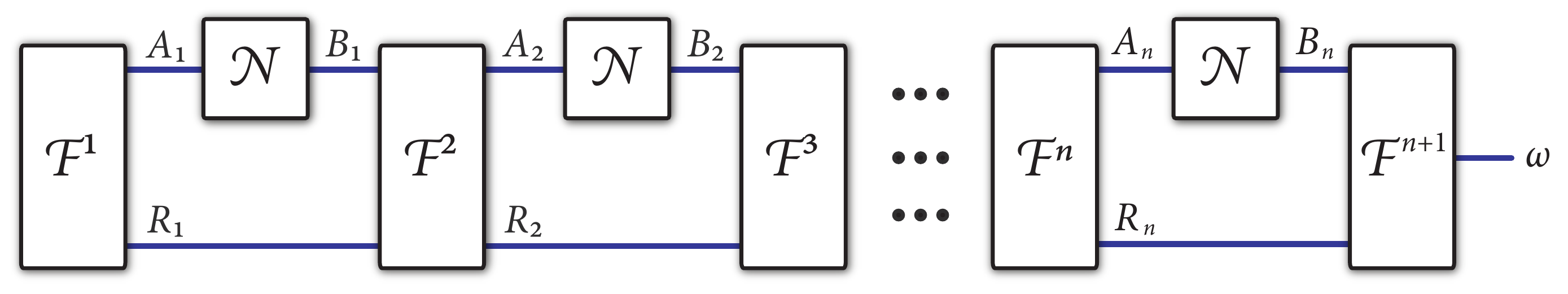}
\end{center}
\caption{A protocol for extracting resourcefulness from a channel
$\mathcal{N}$ by accessing it with free operations between each channel use.}%
\label{fig:resource-theory-protocol}%
\end{figure}

Suppose now that we have a protocol that accesses the channel $\mathcal{N}$ a
total of $n$\ times and between each channel use, we allow for a free channel
to be applied. Such protocols generalize those that we considered in
Sections~\ref{sec:SKA-protocol}\ and \ref{sec:PPT-protocols}. Let $\omega$
denote the final state generated by the protocol, let $\rho_{R_{i}A_{i}}$
denote the state before the $i$th channel use, and let $\sigma_{R_{i}B_{i}}$
denote the state after the $i$th channel use. See
Figure~\ref{fig:resource-theory-protocol} for a depiction of such a protocol.
Then by applying the same reasoning in the proofs of
Propositions~\ref{prop:weak-converse-bound} and
\ref{prop:weak-converse-bound-PPT} (but now using properties 1 and 2 above),
we find the following bound:%
\begin{align}
V(\omega)  &  \leq V(\sigma_{R_{n}B_{n}})\\
&  =V(\sigma_{R_{n}B_{n}})-V(\rho_{R_{1}A_{1}})\\
&  =V((\operatorname{id}_{R}\otimes\mathcal{N}_{A_{n}\rightarrow B_{n}}%
)(\rho_{R_{n}A_{n}}))-V(\rho_{R_{1}A_{1}})+\sum_{i=2}^{n}V(\rho_{R_{i}A_{i}%
})-V(\rho_{R_{i}A_{i}})\\
&  \leq V((\operatorname{id}_{R}\otimes\mathcal{N}_{A_{n}\rightarrow B_{n}%
})(\rho_{R_{n}A_{n}}))-V(\rho_{R_{1}A_{1}})\nonumber\\
&  \qquad+\sum_{i=2}^{n}V((\operatorname{id}_{R}\otimes\mathcal{N}%
_{A_{i-1}\rightarrow B_{i-1}})(\rho_{R_{i-1}A_{i-1}}))-V(\rho_{R_{i}A_{i}})\\
&  =\sum_{i=1}^{n}V((\operatorname{id}_{R}\otimes\mathcal{N}_{A_{i}\rightarrow
B_{i}})(\rho_{R_{i}A_{i}}))-V(\rho_{R_{i}A_{i}})\\
&  \leq nV_{A}(\mathcal{N}), \label{eq:amortized-bound-resource-theory}%
\end{align}
which serves as a limitation on how much of the resource we can extract by
invoking the channel $n$ times in such a way. If $V(\omega)$ can be connected
to meaningful operational parameters such as the closeness of the final state
$\omega$ to a desired target state and the number of basic units of a
resource, as was the case in Propositions~\ref{prop:weak-converse-bound} and
\ref{prop:weak-converse-bound-PPT}, then the above bound would be even more
interesting in the context of a given resource theory.

We can also define $\nu$-freely-simulable channels as a generalization of the
teleportation-simulable channels of \cite{BDSW96,HHH99} and the $\omega
$-PPT-simulable channels introduced in Definition~\ref{def:PPT-sim-chan}:

\begin{definition}
[$\nu$-freely-simulable channel]A quantum channel $\mathcal{N}$ is $\nu
$-freely-simulable if there exists a resourceful state $\nu$ and a free
channel $\mathcal{F}$ such that the following equality holds for all input
states $\rho$:%
\begin{equation}
\mathcal{N}(\rho)=\mathcal{F}(\rho\otimes\nu).
\end{equation}

\end{definition}

\begin{figure}[ptb]
\begin{center}
\includegraphics[
width=6.5638in
]{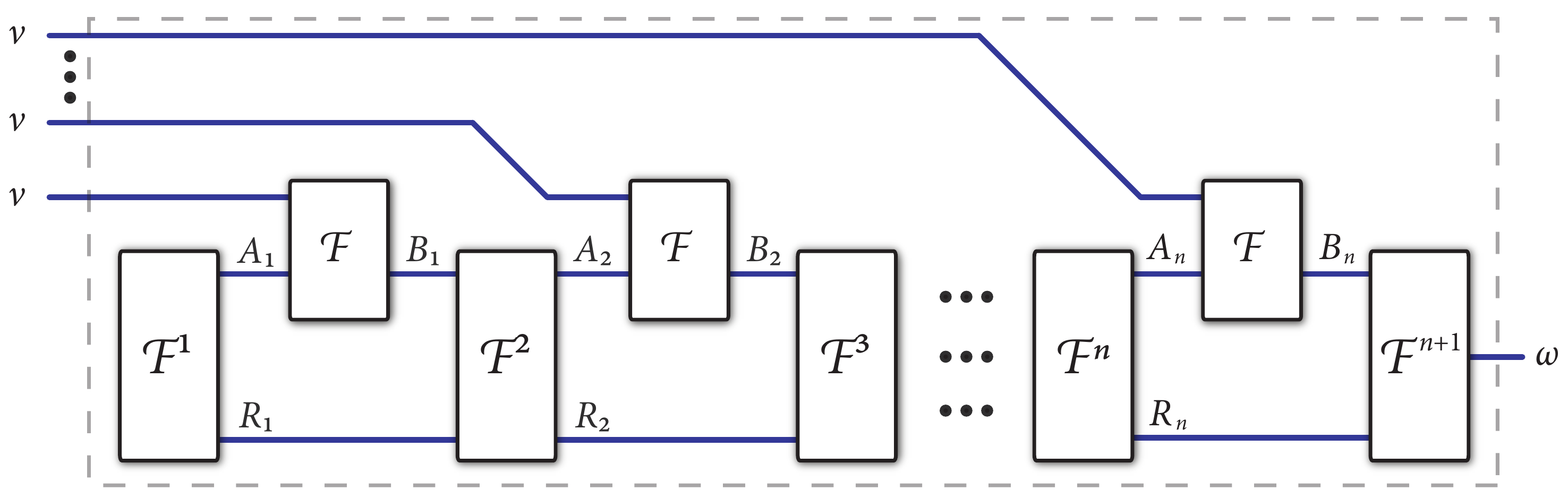}
\end{center}
\caption{A protocol for extracting resourcefulness from a $\nu$-freely
simulable channel $\mathcal{N}$ by accessing it with free operations between
each channel use. Any such protocol can be understood as the action of a
single free channel (everything within the dotted box) acting on the resource
state $\nu^{\otimes n}$.}%
\label{fig:resource-theory-protocol-freely-sim}%
\end{figure}

For $\nu$-freely simulable channels, protocols of the form discussed
previously simplify significantly, as depicted in
Figure~\ref{fig:resource-theory-protocol-freely-sim}. The reduction depicted
in Figure~\ref{fig:resource-theory-protocol-freely-sim} generalizes reduction
by teleportation \cite{BDSW96,Mul12}\ reviewed in
Section~\ref{sec:TP-sim-reduct}\ as well as the more general approach of
\textquotedblleft quantum simulation\textquotedblright\ put forward in
\cite{DM14}, as the reduction applies in the context of any resource theory.
By employing property~1 above and inspecting
Figure~\ref{fig:resource-theory-protocol-freely-sim}, it is immediate that the
following bound holds%
\begin{equation}
V(\omega)\leq V(\nu^{\otimes n}),
\end{equation}
which is just the statement that the amount of resourcefulness that can be
extracted from the channel is limited by the resourcefulness of the underlying
state $\nu$. If the resourcefulness measure is subadditive with respect to
quantum states, then we arrive at the following bound for any protocol of the
above form:%
\begin{equation}
\frac{1}{n}V(\omega)\leq V(\nu).
\end{equation}

We think that it would be very interesting to work out some applications or
consequences of the above observations in the context of several resource
theories, such as thermodynamics \cite{BHORS13}, asymmetry \cite{MS14}, or
non-Gaussianity. We could certainly also consider approximately $\nu
$-freely-simulable channels in order to find bounds on the extraction rates
that are possible from protocols that use resourceful channels that are close
to $\nu$-freely-simulable ones.

\section{Conclusion}

In this paper, we introduced the amortized entanglement of a channel as the
largest difference in entanglement between the output and input of a quantum
channel. We proved several properties of amortized entanglement and considered
special cases of the measures such as amortized relative entropy of
entanglement and amortized Rains relative entropy. One property of especial
interest is the uniform continuity of the latter two special cases, in which
an upper bound on the deviation of the amortized entanglement of two channels
is given in terms of the output dimension of the channels and the diamond norm
of their difference. This uniform continuity bound and the notion of
approximately teleportation- and PPT-simulable channels then immediately leads
to an upper bound on the secret-key-agreement and LOCC-assisted quantum
capacities of such channels. We applied these notions to an example channel,
which consists of a convex mixture of an amplitude damping channel and a
depolarizing channel, and we found that the upper bound is reasonably close to
lower bounds on the capacities whenever the noise in the channel is
sufficiently low. Finally, we discussed how to generalize many of the notions
in the paper to more general resource theories, introducing concepts such as
amortized resourcefulness of a channel and $\nu$-freely-simulable channels.

For future work, we think it would be interesting to explore the
aforementioned generalization further, in the context of other resource
theories such as thermodynamics, asymmetry, or non-Gaussianity.

\bigskip

\textbf{Acknowledgements.} We thank Stefan B\"auml, Siddhartha Das, Nilanjana
Datta, Felix Leditzky, Iman Marvian, and Andreas Winter for discussions
related to this paper. We also acknowledge support from the Office of Naval Research.

\appendix

\section{Supplementary lemmas for uniform continuity}

\label{sec:supp-lemmas}The last inequality in the following lemma was
established in \cite{HHHO05lock} for relative entropy of entanglement, but we
give a different proof in what follows:

\begin{lemma}
\label{lem:dim-bound}Let $E$ refer to either the relative entropy of
entanglement or the Rains relative entropy. For $\rho_{ABC}$ a state, the
following inequality holds%
\begin{align}
E(A;BC)_{\rho}  &  \leq E(A;B)_{\rho}+I(AB;C)_{\rho}\\
&  \leq E(A;B)_{\rho}+2\log_{2}\left\vert C\right\vert .
\end{align}

\end{lemma}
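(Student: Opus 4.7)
The plan is to handle the two inequalities in sequence. For the first inequality $E(A;BC)_\rho \leq E(A;B)_\rho + I(AB;C)_\rho$, I would pick any $\sigma_{AB}$ in the admissible set (SEP$(A\!:\!B)$ in the case of relative entropy of entanglement, or PPT$'(A\!:\!B)$ in the case of Rains) and use the product extension $\sigma_{AB} \otimes \rho_C$ as a feasible point for the divergence optimization defining $E(A;BC)_\rho$. The first sub-step is to check feasibility of this extension across the cut $A\!:\!BC$. In the separable case this is immediate, since writing $\sigma_{AB} = \sum_i p_i \alpha^A_i \otimes \beta^B_i$ gives $\sigma_{AB}\otimes\rho_C = \sum_i p_i \alpha^A_i \otimes (\beta^B_i \otimes \rho_C)$, a separable decomposition across $A\!:\!BC$. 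In the Rains case, feasibility follows from the multiplicativity of the trace norm on tensor products:
\begin{equation}
\|T_{BC}(\sigma_{AB}\otimes \rho_C)\|_1 = \|T_B(\sigma_{AB})\|_1 \cdot \|T_C(\rho_C)\|_1 \leq 1 \cdot 1 = 1.
\end{equation}

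The second sub-step is an algebraic identity. Expanding the divergence with $\log(\sigma_{AB}\otimes\rho_C) = \log\sigma_{AB}\otimes I_C + I_{AB}\otimes \log\rho_C$ and grouping terms, one obtains the chain-rule-type decomposition
\begin{equation}
D(\rho_{ABC}\,\|\,\sigma_{AB}\otimes\rho_C) = D(\rho_{AB}\,\|\,\sigma_{AB}) + I(AB;C)_\rho.
\end{equation}
Infimizing the left-hand side over admissible $\sigma_{AB}$ (which, as argued above, yields only a restricted class of feasible extensions for the larger optimization defining $E(A;BC)$, hence an upper bound) gives the first desired inequality.

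For the second inequality, I would simply invoke the well-known dimension bound $I(X;Y)_\tau \leq 2\log_2\min\{|X|,|Y|\}$ on quantum mutual information (a consequence of subadditivity of von Neumann entropy together with $H(Y)_\tau \leq \log_2|Y|$ and the purification-based bound $H(X|Y)_\tau \geq -\log_2|Y|$), applied with the bipartition $AB\!:\!C$.

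The only step that requires any care is verifying feasibility of the product state $\sigma_{AB}\otimes\rho_C$ in the Rains case across the $A\!:\!BC$ cut; everything else is bookkeeping. Note also that if the infimum in $E(A;B)_\rho$ is not attained, one should run the argument along a minimizing sequence, but since the chain-rule identity is linear in the divergence term being infimized, the limit passes through without difficulty.
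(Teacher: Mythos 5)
Your proposal is correct and follows essentially the same route as the paper's proof: restrict the optimization defining $E(A;BC)_\rho$ to product operators $\sigma_{AB}\otimes\rho_C$, use the chain-rule identity $D(\rho_{ABC}\Vert\sigma_{AB}\otimes\rho_C)=D(\rho_{AB}\Vert\sigma_{AB})+I(AB;C)_\rho$, and then bound the mutual information by $2\log_2|C|$. Your explicit verification that $\sigma_{AB}\otimes\rho_C$ remains feasible in the Rains case (via multiplicativity of the trace norm under partial transpose of a tensor product) is a detail the paper leaves implicit, and it is correct.
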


\begin{proof}
Let $\mathcal{S}$ refer to either $\operatorname{SEP}$ or $\operatorname{PPT}%
^{\prime}$. Consider that%
\begin{align}
E(A;BC)_{\rho}  &  =\min_{\sigma_{A:BC}\in\mathcal{S}(A:BC)}D(\rho_{ABC}%
\Vert\sigma_{A:BC})\\
&  \leq\min_{\sigma_{A:B}\in\mathcal{S}(A:B)}D(\rho_{ABC}\Vert\sigma
_{A:B}\otimes\rho_{C})\\
&  =\min_{\sigma_{A:B}\in\mathcal{S}(A:B)}\left[  -H(ABC)_{\rho}%
-\operatorname{Tr}\{\rho_{ABC}\log_{2}(\sigma_{A:B}\otimes\rho_{C})\}\right]
\\
&  =\min_{\sigma_{A:B}\in\mathcal{S}(A:B)}\left[  -H(ABC)_{\rho}%
-\operatorname{Tr}\{\rho_{ABC}\log_{2}(\sigma_{A:B})\}-\operatorname{Tr}%
\{\rho_{ABC}\log_{2}(\rho_{C})\}\right] \\
&  =\min_{\sigma_{A:B}\in\mathcal{S}(A:B)}\left[  -H(ABC)_{\rho}%
-\operatorname{Tr}\{\rho_{AB}\log_{2}(\sigma_{A:B})\}+H(C)_{\rho}\right] \\
&  =\min_{\sigma_{A:B}\in\mathcal{S}(A:B)}\left[  -H(C|AB)_{\rho}-H(AB)_{\rho
}-\operatorname{Tr}\{\rho_{AB}\log_{2}(\sigma_{A:B})\}+H(C)_{\rho}\right] \\
&  =-H(C|AB)_{\rho}+\min_{\sigma_{A:B}\in\mathcal{S}(A:B)}\left[
-H(AB)_{\rho}-\operatorname{Tr}\{\rho_{AB}\log_{2}(\sigma_{A:B})\}\right]
+H(C)_{\rho}\\
&  =-H(C|AB)_{\rho}+E(A;B)_{\rho}+H(C)_{\rho}\\
&  =E(A;B)_{\rho}+I(AB;C)_{\rho}.
\end{align}
This concludes the proof.
\end{proof}

\begin{lemma}
\label{eq:cq-rel-ent}Let $E$ refer to either the relative entropy of
entanglement or the Rains relative entropy. For a classical--quantum state%
\begin{equation}
\rho_{XAB}\equiv\sum_{x}p_{X}(x)|x\rangle\langle x|_{X}\otimes\rho_{AB}^{x},
\end{equation}
the following equality holds%
\begin{equation}
E(A;BX)_{\rho}=\sum_{x}p_{X}(x)E(A;B)_{\rho^{x}}.
\end{equation}

\end{lemma}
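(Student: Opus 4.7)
The plan is to prove the two inequalities separately. Throughout, let $\mathcal{S}(C\!:\!D)$ denote either $\operatorname{SEP}(C\!:\!D)$ or $\operatorname{PPT}'(C\!:\!D)$, so that $E(C;D)_\tau = \inf_{\sigma_{CD}\in\mathcal{S}(C:D)} D(\tau_{CD}\Vert\sigma_{CD})$ uniformly covers both cases.

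For the $\leq$ direction, choose (up to $\epsilon$) an optimal witness $\sigma^{x}_{AB}\in\mathcal{S}(A\!:\!B)$ for each $E(A;B)_{\rho^x}$, and form the block-diagonal operator
\begin{equation}
\sigma_{XAB}^{\star}\equiv \sum_{x} p_{X}(x)\,|x\rangle\langle x|_{X}\otimes \sigma^{x}_{AB}.
\end{equation}
The first step is to verify that $\sigma_{XAB}^{\star}\in\mathcal{S}(A\!:\!BX)$: for $\operatorname{SEP}$, expand each $\sigma^{x}_{AB}$ as a convex combination of product states and observe that $|x\rangle\langle x|_{X}$ can be absorbed into the $B$-side factor; for $\operatorname{PPT}'$, use that the partial transpose on $X$ fixes $|x\rangle\langle x|_{X}$, so $\Vert T_{BX}(\sigma^{\star})\Vert_{1}=\sum_{x} p_{X}(x)\Vert T_{B}(\sigma^{x}_{AB})\Vert_{1}\leq 1$. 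Then, since $\rho_{XAB}$ and $\sigma^{\star}_{XAB}$ are simultaneously block-diagonal in $X$, a direct computation of $\log\rho$ and $\log\sigma^{\star}$ block-by-block collapses $D(\rho_{XAB}\Vert\sigma^{\star}_{XAB})$ to $\sum_{x}p_{X}(x)\, D(\rho^{x}_{AB}\Vert\sigma^{x}_{AB})$, and taking $\sigma^{x}_{AB}$ toward optimality yields $E(A;BX)_{\rho}\leq \sum_{x}p_{X}(x)\,E(A;B)_{\rho^{x}}$.

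For the $\geq$ direction, the strategy is to exploit monotonicity of quantum relative entropy under the completely dephasing channel $\Pi_{X}(\cdot)=\sum_{x}|x\rangle\langle x|_{X}(\cdot)|x\rangle\langle x|_{X}$. Given any candidate $\sigma_{XAB}\in\mathcal{S}(A\!:\!BX)$, let $\bar{\sigma}_{XAB}=\Pi_{X}(\sigma_{XAB})=\sum_{x}|x\rangle\langle x|_{X}\otimes \bar{\sigma}^{x}_{AB}$. Since $\Pi_{X}$ is CPTP and preserves both separability and the $\operatorname{PPT}'$ property across $A\!:\!BX$ (the partial transpose commutes with $\Pi_{X}$, and pinching is trace-norm non-increasing), $\bar{\sigma}_{XAB}$ remains in $\mathcal{S}(A\!:\!BX)$, and monotonicity of relative entropy gives $D(\rho_{XAB}\Vert\sigma_{XAB})\geq D(\rho_{XAB}\Vert\bar{\sigma}_{XAB})$. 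A block-diagonal computation then yields
\begin{equation}
D(\rho_{XAB}\Vert\bar{\sigma}_{XAB})=\sum_{x}p_{X}(x)\log p_{X}(x)+\sum_{x}p_{X}(x)\,D(\rho^{x}_{AB}\Vert \bar{\sigma}^{x}_{AB}).
\end{equation}
Rescaling each $\bar{\sigma}^{x}_{AB}$ to a valid element of $\mathcal{S}(A\!:\!B)$ produces a classical divergence between $p$ and a subnormalized distribution obtained from the trace (for $\operatorname{SEP}$) or Rains norm (for $\operatorname{PPT}'$) of each block; this classical divergence is nonnegative by the log-sum inequality because the relevant subnormalized totals are $\leq 1$ (using $\Vert T_{BX}(\bar{\sigma})\Vert_{1}=\sum_{x}\Vert T_{B}(\bar{\sigma}^{x})\Vert_{1}\leq 1$ in the $\operatorname{PPT}'$ case), so dropping it gives $D(\rho_{XAB}\Vert\sigma_{XAB})\geq \sum_{x}p_{X}(x)\,E(A;B)_{\rho^{x}}$. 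Taking the infimum over $\sigma_{XAB}$ completes the $\geq$ direction.

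The main obstacle is the $\operatorname{PPT}'$ bookkeeping in the $\geq$ direction, where the conditional blocks $\bar{\sigma}^{x}_{AB}$ are generally unnormalized and need not individually satisfy $\Vert T_{B}(\bar{\sigma}^{x}_{AB})\Vert_{1}\leq 1$; the identity $\Vert T_{BX}(\bar{\sigma})\Vert_{1}=\sum_{x}\Vert T_{B}(\bar{\sigma}^{x})\Vert_{1}$ (valid because partial transpose on $X$ preserves the $X$-block structure) is what makes the normalization factors combine into a nonnegative classical relative entropy, so the rescaling goes through cleanly rather than costing an additive term.
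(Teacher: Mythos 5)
Your proof is correct and follows essentially the same route as the paper: the block-diagonal witness $\sum_x p_X(x)|x\rangle\langle x|_X\otimes\sigma^x_{AB}$ for the $\leq$ direction, and dephasing on $X$ plus monotonicity of relative entropy, with the residual classical divergence $D(p_X\Vert q_X)\geq 0$ handled via subnormalization, for the $\geq$ direction. Your treatment is in fact slightly more explicit than the paper's about verifying membership of the dephased operator in $\operatorname{SEP}$ or $\operatorname{PPT}'$ and about the trace-norm bookkeeping for the $\operatorname{PPT}'$ blocks, but the underlying argument is the same.
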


\begin{proof}
Let $\mathcal{S}$ refer to either $\operatorname{SEP}$ or $\operatorname{PPT}%
^{\prime}$. Let $\sigma_{A:B}^{x}$ be the positive semi-definite operator that
achieves the minimum for $\rho_{AB}^{x}$ in $E(A;B)_{\rho^{x}}$ and define%
\begin{equation}
\theta_{XAB}\equiv\sum_{x}p_{X}(x)|x\rangle\langle x|_{X}\otimes\sigma
_{A:B}^{x}.
\end{equation}
Then consider that%
\begin{align}
E(A;BX)_{\rho}  &  =\min_{\sigma_{A:BX}\in\mathcal{S}(A:BX)}D(\rho_{ABX}%
\Vert\sigma_{A:BX})\\
&  \leq D(\rho_{ABX}\Vert\theta_{A:BX})\\
&  =\sum_{x}p_{X}(x)D(\rho_{AB}^{x}\Vert\sigma_{A:B}^{x})\\
&  =\sum_{x}p_{X}(x)E(A;B)_{\rho^{x}}.
\end{align}
To see the other inequality, let $\overline{\Delta}_{X}$ be a completely
dephasing channel on system $X$ and consider for any positive semi-definite
operator $\xi_{A:BX}$ in $\mathcal{S}$\ that%
\begin{align}
D(\rho_{ABX}\Vert\xi_{A:BX})  &  \geq D(\overline{\Delta}_{X}(\rho_{ABX}%
)\Vert\overline{\Delta}_{X}(\xi_{A:BX}))\\
&  =D(\rho_{ABX}\Vert\xi_{A:BX}^{\prime})\\
&  =\sum_{x}p_{X}(x)D(\rho_{AB}^{x}\Vert\xi_{A:B}^{x})+D(p_{X}\Vert q_{X})\\
&  \geq\sum_{x}p_{X}(x)E(A;B)_{\rho^{x}},
\end{align}
where%
\begin{equation}
\xi_{A:BX}^{\prime}=\overline{\Delta}_{X}(\xi_{A:BX})=\sum_{x}q_{X}%
(x)|x\rangle\langle x|_{X}\otimes\xi_{A:B}^{x}.
\end{equation}
In the case that $\mathcal{S=}\operatorname{PPT}^{\prime}$, the positive
semi-definite operators are subnormalized \cite{TWW14}, which implies that
$q_{X}(x)$ is a subnormalized probability distribution, and which in turn
implies that $D(p_{X}\Vert q_{X})\geq0$ \cite{W15book}.
\end{proof}

The following lemma, of which the first inequality was proved in
\cite{Linden2005} for relative entropy of entanglement, can be understood as a
direct consequence of Lemmas~\ref{lem:dim-bound}\ and \ref{eq:cq-rel-ent}\ and
the fact that $I(X;AB)_{\rho}\leq H(X)_{\rho}$ for a classical system$~X$:

\begin{lemma}
\label{lem:rel-ent-mixture-ent-bound}Let $E$ refer to either the relative
entropy of entanglement or the Rains relative entropy. Let $\{p_{X}%
(x),\rho_{AB}^{x}\}$ be an ensemble of states and let $\overline{\rho}%
_{AB}\equiv\sum_{x}p_{X}(x)\rho_{AB}^{x}$.\ Then we have that%
\begin{align}
\sum_{x}p_{X}(x)E(A;B)_{\rho^{x}}  &  \leq E(A;B)_{\overline{\rho}%
}+I(X;AB)_{\rho}\\
&  \leq E(A;B)_{\overline{\rho}}+H(X),
\end{align}
where $H(X)$ is the Shannon entropy of the distribution $p_{X}$.
\end{lemma}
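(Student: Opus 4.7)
The plan is to exploit the preceding two lemmas almost directly, using the standard trick of embedding the ensemble into a classical--quantum (cq) state and then peeling off the classical register on the $B$ side.

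First, I would form the cq state
\begin{equation}
\rho_{XAB}\equiv\sum_{x}p_{X}(x)|x\rangle\langle x|_{X}\otimes\rho_{AB}^{x},
\end{equation}
whose marginal on $AB$ is precisely $\overline{\rho}_{AB}$. By Lemma~\ref{eq:cq-rel-ent} applied to $\rho_{XAB}$, the weighted average on the left-hand side equals a single entanglement quantity on an enlarged system:
\begin{equation}
\sum_{x}p_{X}(x)E(A;B)_{\rho^{x}} = E(A;BX)_{\rho}.
\end{equation}
Next, I would apply Lemma~\ref{lem:dim-bound} with the identification $C\leftrightarrow X$, which gives
\begin{equation}
E(A;BX)_{\rho} \leq E(A;B)_{\rho} + I(AB;X)_{\rho} = E(A;B)_{\overline{\rho}} + I(X;AB)_{\rho},
\end{equation}
establishing the first inequality of the lemma.

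For the second inequality, I would use the elementary fact that for any state with a classical register $X$ one has $I(X;AB)_{\rho}\leq H(X)_{\rho}$; this follows because $I(X;AB)_{\rho}=H(X)_{\rho}-H(X|AB)_{\rho}$ and the conditional entropy $H(X|AB)_{\rho}$ is non-negative whenever the conditioning system contains a classical register (equivalently, $\rho_{XAB}$ arises from measuring a purification in some basis on $X$). Substituting this bound into the first inequality yields the second.

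There is essentially no obstacle here beyond properly invoking the two lemmas: the only thing to check carefully is that the marginal of $\rho_{XAB}$ on $AB$ is indeed $\overline{\rho}_{AB}$, so that the right-hand side of the applied Lemma~\ref{lem:dim-bound} coincides with $E(A;B)_{\overline{\rho}}$ rather than with some other state. Since $\operatorname{Tr}_{X}\rho_{XAB}=\sum_{x}p_{X}(x)\rho_{AB}^{x}=\overline{\rho}_{AB}$, this identification is immediate, and the proof reduces to the two-line chain sketched above.
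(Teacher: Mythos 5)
Your proposal is correct and follows exactly the route the paper indicates: the paper states that the lemma "can be understood as a direct consequence of Lemmas~\ref{lem:dim-bound} and \ref{eq:cq-rel-ent} and the fact that $I(X;AB)_{\rho}\leq H(X)_{\rho}$ for a classical system $X$," which is precisely your chain of reasoning. The only small detail worth keeping is your check that $\operatorname{Tr}_{X}\rho_{XAB}=\overline{\rho}_{AB}$, which you handle correctly.
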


\section{Approximately teleportation-simulable channels, approximate
covariance, and channel twirling via teleportation}

\label{sec:cov-TP-sim}

In this appendix, we discuss how the approximately
covariant channels from \cite{LKDW17}\ are approximately teleportation
simulable in the sense of Definition~\ref{def:approx-TP-sim}. We also mention
how the well known protocol of channel twirling \cite{BDSW96} can be
implemented via teleportation over the Choi state of the channel. This latter
result might have applications in other domains, such as randomized
benchmarking \cite{KLRBBJLOSW08}.

We begin with a brief review of some background material.  Let $G$ be a group with unitary representations $g\mapsto U_{A}^{g}$ on
$\mathcal{H}_{A}$ and $g\mapsto V_{B}^{g}$ on $\mathcal{H}_{B}$, respectively.
A quantum channel $\mathcal{N}_{A\rightarrow B}$ is \emph{covariant with
respect to $\{(U_{A}^{g},V_{B}^{g})\}_{g\in G}$} \cite{H02}, if
\[
V_{B}^{g}\,\mathcal{N}(\cdot)V_{B}^{g\dagger}=\mathcal{N}(U_{A}^{g}%
(\cdot)U_{A}^{g\dagger})\qquad\text{for all $g\in G$.}%
\]
A group $G$ is said to form a \emph{unitary one-design}, if there is a unitary
representation $g\mapsto U_{A}^{g}$ of $G$ on $\mathcal{H}_{A}$ such that
\[
\frac{1}{|G|}\sum_{g\in G}U_{A}^{g}\rho_{A}U_{A}^{g\dagger}=\pi_{A}%
\qquad\text{for all states $\rho_{A}$,}%
\]
where $\pi_{A}=\frac{1}{|A|}I_{A}$ denotes the maximally mixed state on
$\mathcal{H}_{A}$.

For a group $G$ with unitary representations $g\mapsto U_{A}^{g}$ on
$\mathcal{H}_{A}$ and $g\mapsto V_{B}^{g}$ on $\mathcal{H}_{B}$, respectively,
and an arbitrary quantum channel $\mathcal{N}_{A\rightarrow B}$, the
\emph{twirled channel} $\mathcal{N}_{G}$ of $\mathcal{N}$ is defined as%
\[
\mathcal{N}_{G}(\cdot)\equiv\frac{1}{|G|}\sum_{g\in G}V_{B}^{g\dagger
}\mathcal{N}(U_{A}^{g}(\cdot)U_{A}^{g\dagger})V_{B}^{g}.
\]
This twirled channel $\mathcal{N}_{G}$ is covariant with respect to
$\{(U_{A}^{g},V_{B}^{g})\}_{g\in G}$ by construction.

The typical way of realizing the twirled channel $\mathcal{N}_{G}$\ is by
means of LOCC. The sender picks $g$ uniformly at random, applies $U_{A}^{g}$
on the input state, sends the state through the channel $\mathcal{N}$,
transmits $g$ to the receiver, who then applies $V_{g}^{\dag}$ at the output.

A different LOCC\ simulation of the twirled channel $\mathcal{N}_{G}$ is
realized by means of a generalized teleportation protocol, as stated in the
following proposition:

\begin{proposition}
The twirled channel $\mathcal{N}_{G}$ can be simulated from $\mathcal{N}%
:\mathcal{L}(\mathcal{H}_{A})\rightarrow\mathcal{L}(\mathcal{H}_{B})$ by means
of a generalized teleportation protocol with a resource state equal to the
Choi state $\omega_{AB}\equiv\mathcal{N}_{A^{\prime\prime}\rightarrow B}%
(\Phi_{AA^{\prime\prime}})$\ where $\mathcal{H}_{A}\simeq\mathcal{H}%
_{A^{\prime}}\simeq\mathcal{H}_{A^{\prime\prime}}$, the POVM\ elements as%
\begin{equation}
\left\{  E_{A^{\prime}A}^{g}\equiv\frac{\left\vert A\right\vert ^{2}%
}{\left\vert G\right\vert }\left(  U_{A^{\prime}}^{g}\right)  ^{\dag}%
\Phi_{A^{\prime}A}U_{A^{\prime}}^{g}\right\}  _{g},
\end{equation}
with $\{U_{A}^{g}\}_{g\in G}$ a one-design, and teleportation correction
operations given by $V_{g}^{\dag}$ acting on the $B$ system. That is, the
following equality holds%
\begin{equation}
\sum_{g}V_{B}^{g\dag}\operatorname{Tr}_{AA^{\prime}}\{E_{A^{\prime}A}^{g}%
(\rho_{A^{\prime}}\otimes\omega_{AB})\}V_{B}^{g}=\frac{1}{\left\vert
G\right\vert }\sum_{g}V_{B}^{g\dag}\mathcal{N}_{A\rightarrow B}(U_{A^{\prime}%
}^{g}\rho_{A}U_{A^{\prime}}^{g\dag})V_{B}^{g}.\label{eq:generalized-TP-twirl}%
\end{equation}

\end{proposition}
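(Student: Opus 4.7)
My plan is to prove the equality in \eqref{eq:generalized-TP-twirl} term by term: I will show that for each fixed $g \in G$, the single summand on the left-hand side equals the corresponding summand on the right-hand side. The whole computation reduces to the standard ``Choi-state teleportation identity'' combined with cyclicity of the trace over the $A'$ register.

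First, using cyclicity of the trace on the $A'$ system, I would rewrite
\begin{equation}
\operatorname{Tr}_{AA'}\!\left\{(U^g_{A'})^\dag \Phi_{A'A} U^g_{A'}\,(\rho_{A'}\otimes\omega_{AB})\right\}
= \operatorname{Tr}_{AA'}\!\left\{\Phi_{A'A}\,(U^g_{A'}\rho_{A'}(U^g_{A'})^\dag\otimes\omega_{AB})\right\},
\end{equation}
so that the group element $g$ is absorbed into a twirled input state $\sigma_{A'}^g \equiv U^g_{A'}\rho_{A'}(U^g_{A'})^\dag$.

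The core step is the identity
\begin{equation}
|A|^{2}\operatorname{Tr}_{AA'}\!\left\{\Phi_{A'A}\,(\sigma_{A'}\otimes\omega_{AB})\right\} = \mathcal{N}_{A\to B}(\sigma),
\label{eq:choi-id}
\end{equation}
valid for all operators $\sigma_{A'}$, where $\omega_{AB}=\mathcal{N}_{A''\to B}(\Phi_{AA''})$. I would verify \eqref{eq:choi-id} by direct computation: writing $\Phi_{A'A}=\frac{1}{|A|}\sum_{i,j}|i\rangle\langle j|_{A'}\otimes|i\rangle\langle j|_A$ and $\omega_{AB}=\frac{1}{|A|}\sum_{k,l}|k\rangle\langle l|_A\otimes \mathcal{N}(|k\rangle\langle l|)_B$, the partial traces contract the $A'$ and $A$ indices so that one obtains $\frac{1}{|A|^2}\sum_{i,j}\langle j|\sigma|i\rangle\,\mathcal{N}(|j\rangle\langle i|) = \frac{1}{|A|^2}\mathcal{N}(\sigma)$. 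Plugging into the previous display and applying the factor $|A|^2/|G|$ in the POVM element $E^g_{A'A}$, each summand on the LHS becomes exactly $\frac{1}{|G|}V^{g\dag}_B\,\mathcal{N}(U^g\rho U^{g\dag})\,V^g_B$, which is the corresponding summand on the RHS.

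Finally, for the statement to make sense as a generalized teleportation protocol, I should also verify that $\{E^g_{A'A}\}_{g\in G}$ is indeed a POVM. This follows from the one-design hypothesis: for any bipartite operator $Y_{A'A}$, one has $\frac{1}{|G|}\sum_g (U^g_{A'})^\dag Y_{A'A} U^g_{A'} = \pi_{A'}\otimes \operatorname{Tr}_{A'}\{Y_{A'A}\}$, and applying this to $Y=\Phi_{A'A}$ (with $\operatorname{Tr}_{A'}\Phi_{A'A}=\pi_A$) yields $\sum_g E^g_{A'A}=I_{A'A}$. Positivity of each $E^g_{A'A}$ is immediate since $\Phi_{A'A}\geq 0$ and $E^g_{A'A}$ is its unitary conjugate up to a scalar.

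The only mildly delicate step is \eqref{eq:choi-id}; the risk is a misplaced transpose arising from index ordering conventions. I would therefore do the index chase explicitly to confirm there is no hidden transpose, after which the rest of the proof is a matter of bookkeeping.
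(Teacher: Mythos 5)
Your proof is correct and follows essentially the same route as the paper: both reduce the claim, term by term in $g$, to evaluating $\operatorname{Tr}_{AA'}\{E^{g}_{A'A}(\rho_{A'}\otimes\omega_{AB})\}$ and showing it equals $\frac{1}{|G|}\mathcal{N}(U^{g}\rho U^{g\dag})$; you do this by absorbing $U^{g}$ via cyclicity under the partial trace and then an explicit index chase confirming $|A|^{2}\operatorname{Tr}_{AA'}\{\Phi_{A'A}(\sigma_{A'}\otimes\omega_{AB})\}=\mathcal{N}(\sigma)$ (with the two would-be transposes cancelling), whereas the paper reaches the same conclusion via the ricochet identity $\langle\Phi|_{A'A}M_{A'}=\langle\Phi|_{A'A}M_{A}^{\ast}$ applied twice. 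Your explicit verification that $\{E^{g}_{A'A}\}_{g}$ is a POVM via the one-design property is a nice addition that the paper only asserts briefly.
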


\begin{proof}
We follow the proof of \cite[Appendix~A]{WTB16} closely. Let $\mathcal{N}%
:\mathcal{L}(\mathcal{H}_{A})\rightarrow\mathcal{L}(\mathcal{H}_{B})$ be a
quantum channel, and let $G$ be a group with unitary representations
$U_{A}^{g}$ and $V_{B}^{g}$ for $g\in G$, such that%
\begin{equation}
\frac{1}{\left\vert G\right\vert }\sum_{g}U_{A}^{g}X_{A}\left(  U_{A}%
^{g}\right)  ^{\dag}=\operatorname{Tr}\{X_{A}\}\pi_{A}%
,\label{eq:one-design-cond}%
\end{equation}
where $X_{A}\in\mathcal{L}(\mathcal{H}_{A})$ and $\pi$ denotes the maximally
mixed state. (For notational convenience, we are placing the index $g$ as a
superscript.) Consider that%
\begin{equation}
\frac{1}{\left\vert G\right\vert }\sum_{g}U_{A^{\prime}}^{g}\Phi_{A^{\prime}%
A}\left(  U_{A^{\prime}}^{g}\right)  ^{\dag}=\pi_{A^{\prime}}\otimes\pi
_{A},\label{eq:max-ent-cov-action}%
\end{equation}
where $\Phi$ denotes a maximally entangled state and $A^{\prime}$ is a system
isomorphic to $A$. Note that in order for $\{U_{A}^{g}\}$ to satisfy
\eqref{eq:one-design-cond}, it is necessary that $\left\vert A\right\vert
^{2}\leq\left\vert G\right\vert $ \cite{AMTW00}. Consider the POVM
$\{E_{A^{\prime}A}^{g}\}_{g}$, with $A^{\prime}$ a system isomorphic to $A$
and each element $E_{A^{\prime}A}^{g}$ defined as%
\begin{equation}
E_{A^{\prime}A}^{g}\equiv\frac{\left\vert A\right\vert ^{2}}{\left\vert
G\right\vert }\left(  U_{A^{\prime}}^{g}\right)  ^{\dag}\Phi_{A^{\prime}%
A}U_{A^{\prime}}^{g}.
\end{equation}
It follows from the fact that $\left\vert A\right\vert ^{2}\leq\left\vert
G\right\vert $, \eqref{eq:max-ent-cov-action}, and the group property\ that
$\{E_{AA^{\prime}}^{g}\}_{g}$ is a valid POVM.

The simulation of the channel $\mathcal{N}_{G}$ via teleportation begins with
a state $\rho_{A^{\prime}}$ and a shared resource $\omega_{AB}\equiv
\mathcal{N}_{A^{\prime\prime}\rightarrow B}(\Phi_{AA^{\prime\prime}})$. The
desired outcome is for Bob to receive the state $V_{B}^{g\dag}\mathcal{N}%
_{A\rightarrow B}(U_{A^{\prime}}^{g}\rho_{A}U_{A^{\prime}}^{g\dag})V_{B}^{g}$
with probability $1/\left\vert G\right\vert $ and for the protocol to work
independently of the input state $\rho_{A}$. The first step is for Alice to
perform the measurement $\{E_{A^{\prime}A}^{g}\}_{g}$ on systems $A^{\prime}A$
and then send the outcome $g$\ to Bob. Based on the outcome $g$, Bob then
performs $V_{B}^{g\dag}$. The channel realized by such a generalized
teleportation protocol is as follows:%
\begin{equation}
\sum_{g}V_{B}^{g\dag}\operatorname{Tr}_{AA^{\prime}}\{E_{A^{\prime}A}^{g}%
(\rho_{A^{\prime}}\otimes\omega_{AB})\}V_{B}^{g}.\label{eq:generalized-TP}%
\end{equation}

The following analysis demonstrates that this protocol works (i.e., it
simulates $\mathcal{N}_{G}$), by simplifying the form of the post-measurement
state:%
\begin{align}
\left\vert G\right\vert \operatorname{Tr}_{AA^{\prime}}\{E_{A^{\prime}A}%
^{g}(\rho_{A^{\prime}}\otimes\omega_{AB})\} &  =\left\vert A\right\vert
^{2}\operatorname{Tr}_{AA^{\prime}}\{\left(  U_{A^{\prime}}^{g}\right)
^{\dag}|\Phi\rangle_{A^{\prime}A}\langle\Phi|_{A^{\prime}A}U_{A^{\prime}}%
^{g}(\rho_{A^{\prime}}\otimes\omega_{AB})\}\\
&  =\left\vert A\right\vert ^{2}\langle\Phi|_{A^{\prime}A}U_{A^{\prime}}%
^{g}(\rho_{A^{\prime}}\otimes\omega_{AB})\left(  U_{A^{\prime}}^{g}\right)
^{\dag}|\Phi\rangle_{A^{\prime}A}\\
&  =\left\vert A\right\vert ^{2}\langle\Phi|_{A^{\prime}A}U_{A^{\prime}}%
^{g}\rho_{A^{\prime}}\left(  U_{A^{\prime}}^{g}\right)  ^{\dag}\otimes
\mathcal{N}_{A^{\prime\prime}\rightarrow B}(\Phi_{AA^{\prime\prime}}%
))|\Phi\rangle_{A^{\prime}A}\\
&  =\left\vert A\right\vert ^{2}\langle\Phi|_{A^{\prime}A}\left[  U_{A}%
^{g}\rho_{A}\left(  U_{A}^{g}\right)  ^{\dag}\right]  ^{\ast}\mathcal{N}%
_{A^{\prime\prime}\rightarrow B}\left(  \Phi_{AA^{\prime\prime}}\right)
|\Phi\rangle_{A^{\prime}A}.\label{eq:cov-tp-simul-block-1}%
\end{align}
The first three equalities follow by substitution and some rewriting. The
fourth equality follows from the fact that%
\begin{equation}
\langle\Phi|_{A^{\prime}A}M_{A^{\prime}}=\langle\Phi|_{A^{\prime}A}M_{A}%
^{\ast}\label{eq:ricochet-prop}%
\end{equation}
for any operator $M$ and where $\ast$ denotes the complex conjugate, taken
with respect to the basis in which $|\Phi\rangle_{A^{\prime}A}$ is defined.
Continuing, we have that%
\begin{align}
\eqref{eq:cov-tp-simul-block-1} &  =\left\vert A\right\vert \operatorname{Tr}%
_{A}\left\{  \left[  U_{A}^{g}\rho_{A}\left(  U_{A}^{g}\right)  ^{\dag
}\right]  ^{\ast}\mathcal{N}_{A^{\prime\prime}\rightarrow B}\left(
\Phi_{AA^{\prime\prime}}\right)  \right\}  \\
&  =\left\vert A\right\vert \operatorname{Tr}_{A}\left\{  \mathcal{N}%
_{A^{\prime\prime}\rightarrow B}\left(  \left[  U_{A^{\prime\prime}}^{g}%
\rho_{A^{\prime\prime}}\left(  U_{A^{\prime\prime}}^{g}\right)  ^{\dag
}\right]  ^{\dag}\Phi_{AA^{\prime\prime}}\right)  \right\}  \\
&  =\mathcal{N}_{A^{\prime\prime}\rightarrow B}\left(  \left[  U_{A^{\prime
\prime}}^{g}\rho_{A^{\prime\prime}}\left(  U_{A^{\prime\prime}}^{g}\right)
^{\dag}\right]  ^{\dag}\right)  \\
&  =\mathcal{N}_{A^{\prime\prime}\rightarrow B}\left(  U_{A^{\prime\prime}%
}^{g}\rho_{A^{\prime\prime}}\left(  U_{A^{\prime\prime}}^{g}\right)  ^{\dag
}\right)  .
\end{align}
The first equality follows because $\left\vert A\right\vert \langle
\Phi|_{A^{\prime}A}\left(  I_{A^{\prime}}\otimes M_{AB}\right)  |\Phi
\rangle_{A^{\prime}A}=\operatorname{Tr}_{A}\{M_{AB}\}$ for any operator
$M_{AB}$. The second equality follows by applying the conjugate transpose of
\eqref{eq:ricochet-prop}. The above development then implies the following
equality:%
\begin{equation}
\operatorname{Tr}_{AA^{\prime}}\{E_{A^{\prime}A}^{g}(\rho_{A^{\prime}}%
\otimes\omega_{AB})\}=\frac{1}{\left\vert G\right\vert }\mathcal{N}%
_{A^{\prime\prime}\rightarrow B}\left(  U_{A^{\prime\prime}}^{g}%
\rho_{A^{\prime\prime}}\left(  U_{A^{\prime\prime}}^{g}\right)  ^{\dag
}\right)  ,
\end{equation}
which after insertion into \eqref{eq:generalized-TP}, establishes the claim in \eqref{eq:generalized-TP-twirl}.
\end{proof}

The notion of approximate covariance of a quantum channel from \cite{LKDW17} is
based on how close the channel is in diamond norm to its twirled channel:

\begin{definition}
[Approximate covariance \cite{LKDW17}]\label{def:approximate-covariance} Fix a
group $G$ with unitary representations $g\mapsto U_{A}^{g}$ on $\mathcal{H}%
_{A}$ and $g\mapsto V_{B}^{g}$ on $\mathcal{H}_{B}$, respectively. For a given
$\varepsilon\in\lbrack0,1]$, a channel $\mathcal{N}$ $\varepsilon
$\emph{-covariant with respect to $\{(U_{A}^{g},V_{B}^{g})\}_{g\in G}$}, if
\[
\frac{1}{2}\Vert\mathcal{N}-\mathcal{N}_{G}\Vert_{\diamond}\leq\varepsilon.
\]

\end{definition}

It has been known for many years now that a quantum channel covariant with
respect to a one-design is teleportation simulable with resource state equal to the Choi state of the channel \cite[Section~7]{CDP09}.
Thus, an immediate consequence of definitions is that a channel is
$(\varepsilon,\mathcal{N}(\Phi)\mathcal{)}$-approximately
teleportation-simulable if it is $\varepsilon$-covariant, with $\{U_{A}^{g}\}_{g\in
G}$ a one-design.

\bibliographystyle{alpha}
\bibliography{Ref}

\newcommand{\etalchar}[1]{$^{#1}$}
\begin{thebibliography}{BDGDMW17}

\bibitem[ADMVW02]{AdMVW02}
Koenraad Audenaert, Bart De~Moor, Karl Gerd~H. Vollbrecht, and Reinhard~F.
  Werner.
\newblock Asymptotic relative entropy of entanglement for orthogonally
  invariant states.
\newblock {\em Physical Review~A}, 66(3):032310, September 2002.
\newblock arXiv:quant-ph/0204143.

\bibitem[AML16]{AML16}
Koji Azuma, Akihiro Mizutani, and Hoi-Kwong Lo.
\newblock Fundamental rate-loss trade-off for the quantum internet.
\newblock {\em Nature Communications}, 7:13523, November 2016.
\newblock arXiv:1601.02933.

\bibitem[AMTdW00]{AMTW00}
Andris Ambainis, Michele Mosca, Alain Tapp, and Ronald de~Wolf.
\newblock Private quantum channels.
\newblock {\em IEEE 41st Annual Symposium on Foundations of Computer Science},
  pages 547--553, November 2000.
\newblock arXiv:quant-ph/0003101.

\bibitem[BA17]{BA17}
Stefan B\"auml and Koji Azuma.
\newblock Fundamental limitation on quantum broadcast networks.
\newblock {\em Quantum Science and Technology}, 2(2):024004, June 2017.
\newblock arXiv:1609.03994.

\bibitem[BaHO{\etalchar{+}}13]{BHORS13}
Fernando G. S.~L. Brand\~ao, Micha\l{} Horodecki, Jonathan Oppenheim, Joseph~M.
  Renes, and Robert~W. Spekkens.
\newblock Resource theory of quantum states out of thermal equilibrium.
\newblock {\em Physical Review Letters}, 111:250404, Dec 2013.

\bibitem[BB84]{bb84}
Charles~H. Bennett and Gilles Brassard.
\newblock Quantum cryptography: Public key distribution and coin tossing.
\newblock In {\em Proceedings of IEEE International Conference on Computers
  Systems and Signal Processing}, pages 175--179, Bangalore, India, December
  1984.

\bibitem[BBC{\etalchar{+}}93]{PhysRevLett.70.1895}
Charles~H. Bennett, Gilles Brassard, Claude Cr\'epeau, Richard Jozsa, Asher
  Peres, and William~K. Wootters.
\newblock Teleporting an unknown quantum state via dual classical and
  {Einstein-Podolsky-Rosen} channels.
\newblock {\em Physical Review Letters}, 70(13):1895--1899, March 1993.

\bibitem[BBP{\etalchar{+}}96]{BBPSSW96EPP}
Charles~H. Bennett, Gilles Brassard, Sandu Popescu, Benjamin Schumacher,
  John~A. Smolin, and William~K. Wootters.
\newblock Purification of noisy entanglement and faithful teleportation via
  noisy channels.
\newblock {\em Physical Review Letters}, 76(5):722--725, January 1996.
\newblock arXiv:quant-ph/9511027.

\bibitem[BCY11]{BCY11}
Fernando~G.S.L. Brandao, Matthias Christandl, and Jon Yard.
\newblock Faithful squashed entanglement.
\newblock {\em Communications in Mathematical Physics}, 306(3):805--830,
  September 2011.
\newblock arXiv:1010.1750.

\bibitem[BD11]{BD11}
Fernando G. S.~L. Brandao and Nilanjana Datta.
\newblock One-shot rates for entanglement manipulation under non-entangling
  maps.
\newblock {\em IEEE Transactions on Information Theory}, 57(3):1754--1760,
  March 2011.
\newblock arXiv:0905.2673.

\bibitem[BDGDMW17]{BGMW17}
Khaled Ben~Dana, Mar\'{\i}a Garc\'{\i}a~D\'{\i}az, Mohamed Mejatty, and Andreas
  Winter.
\newblock Resource theory of coherence: Beyond states.
\newblock {\em Physical Review A}, 95(6):062327, June 2017.
\newblock arXiv:1704.03710.

\bibitem[BDSW96]{BDSW96}
Charles~H. Bennett, David~P. DiVincenzo, John~A. Smolin, and William~K.
  Wootters.
\newblock Mixed-state entanglement and quantum error correction.
\newblock {\em Physical Review A}, 54(5):3824--3851, November 1996.
\newblock arXiv:quant-ph/9604024.

\bibitem[BG15]{BG15}
Fernando G.~S.~L. Brandao and Gilad Gour.
\newblock Reversible framework for quantum resource theories.
\newblock {\em Physical Review Letters}, 115(7):070503, August 2015.
\newblock arXiv:1502.03149.

\bibitem[BHLS03]{BHLS03}
Charles~H. Bennett, Aram~W. Harrow, Debbie~W. Leung, and John~A. Smolin.
\newblock On the capacities of bipartite {Hamiltonians} and unitary gates.
\newblock {\em IEEE Transactions on Information Theory}, 49(8):1895--1911,
  August 2003.
\newblock arXiv:quant-ph/0205057.

\bibitem[BK98]{prl1998braunstein}
Samuel~L. Braunstein and H.~J. Kimble.
\newblock Teleportation of continuous quantum variables.
\newblock {\em Physical Review Letters}, 80(4):869--872, January 1998.

\bibitem[BSST99]{PhysRevLett.83.3081}
Charles~H. Bennett, Peter~W. Shor, John~A. Smolin, and Ashish~V. Thapliyal.
\newblock Entanglement-assisted classical capacity of noisy quantum channels.
\newblock {\em Physical Review Letters}, 83(15):3081--3084, October 1999.
\newblock arXiv:quant-ph/9904023.

\bibitem[BSST02]{ieee2002bennett}
Charles~H. Bennett, Peter~W. Shor, John~A. Smolin, and Ashish~V. Thapliyal.
\newblock Entanglement-assisted capacity of a quantum channel and the reverse
  {Shannon} theorem.
\newblock {\em IEEE Transactions on Information Theory}, 48(10):2637--2655,
  October 2002.
\newblock arXiv:quant-ph/0106052.

\bibitem[CDP09]{CDP09}
Giulio Chiribella, Giacomo~Mauro D’Ariano, and Paolo Perinotti.
\newblock Realization schemes for quantum instruments in finite dimensions.
\newblock {\em Journal of Mathematical Physics}, 50(4):042101, April 2009.
\newblock arXiv:0810.3211.

\bibitem[Chr06]{C06}
Matthias Christandl.
\newblock {\em The Structure of Bipartite Quantum States: Insights from Group
  Theory and Cryptography}.
\newblock PhD thesis, University of Cambridge, April 2006.
\newblock arXiv:quant-ph/0604183.

\bibitem[CMH17]{Christandl2017}
Matthias Christandl and Alexander M{\"u}ller-Hermes.
\newblock Relative entropy bounds on quantum, private and repeater capacities.
\newblock {\em Communications in Mathematical Physics}, 353(2):821--852, July
  2017.
\newblock arXiv:1604.03448.

\bibitem[CW04]{CW04}
Matthias Christandl and Andreas Winter.
\newblock ``{Squashed} entanglement'': An additive entanglement measure.
\newblock {\em Journal of Mathematical Physics}, 45(3):829--840, March 2004.
\newblock arXiv:quant-ph/0308088.

\bibitem[Dat09]{D09}
Nilanjana Datta.
\newblock Min- and max-relative entropies and a new entanglement monotone.
\newblock {\em IEEE Transactions on Information Theory}, 55(6):2816--2826, June
  2009.
\newblock arXiv:0803.2770.

\bibitem[DDanM14]{DM14}
Rafal Demkowicz-Dobrza\ifmmode~\acute{n}\else \'{n}\fi{}ski and Lorenzo
  Maccone.
\newblock Using entanglement against noise in quantum metrology.
\newblock {\em Physical Review Letters}, 113(25):250801, December 2014.
\newblock arXiv:1407.2934.

\bibitem[DJKR06]{DJKR06}
Igor Devetak, Marius Junge, Christopher King, and Mary~Beth Ruskai.
\newblock Multiplicativity of completely bounded p-norms implies a new
  additivity result.
\newblock {\em Communications in Mathematical Physics}, 266(1):37--63, August
  2006.
\newblock arXiv:quant-ph/0506196.

\bibitem[dRKR15]{RKR15}
Lidia del Rio, Lea Kraemer, and Renato Renner.
\newblock Resource theories of knowledge.
\newblock November 2015.
\newblock arXiv:1511.08818.

\bibitem[DW05]{DW05}
Igor Devetak and Andreas Winter.
\newblock Distillation of secret key and entanglement from quantum states.
\newblock {\em Proceedings of the Royal Society A}, 461(2053):207--235, January
  2005.
\newblock arXiv:quant-ph/0306078.

\bibitem[DW17]{DW17}
Siddhartha Das and Mark~M. Wilde.
\newblock Quantum reading capacity: General definition and bounds.
\newblock March 2017.
\newblock arXiv:1703.03706.

\bibitem[Eke91]{E91}
Artur~K. Ekert.
\newblock Quantum cryptography based on {Bell}'s theorem.
\newblock {\em Physical Review Letters}, 67(6):661--663, August 1991.

\bibitem[FF17]{FF17}
Hamza Fawzi and Omar Fawzi.
\newblock Relative entropy optimization in quantum information theory via
  semidefinite programming approximations.
\newblock May 2017.
\newblock arXiv:1705.06671.

\bibitem[Fri15]{fritz_2015}
Tobias Fritz.
\newblock Resource convertibility and ordered commutative monoids.
\newblock {\em Mathematical Structures in Computer Science}, page 1–89, 2015.
\newblock arXiv:1504.03661.

\bibitem[GEW16]{Goodenough2015}
Kenneth Goodenough, David Elkouss, and Stephanie Wehner.
\newblock Assessing the performance of quantum repeaters for all
  phase-insensitive {Gaussian} bosonic channels.
\newblock {\em New Journal of Physics}, 18(6):063005, June 2016.
\newblock arXiv:1511.08710.

\bibitem[HHH96]{Horodecki19961}
Michal Horodecki, Pawel Horodecki, and Ryszard Horodecki.
\newblock Separability of mixed states: necessary and sufficient conditions.
\newblock {\em Physics Letters A}, 223(1-2):1--8, November 1996.
\newblock arXiv:quant-ph/9605038.

\bibitem[HHH99]{HHH99}
Micha\l{} Horodecki, Pawe\l{} Horodecki, and Ryszard Horodecki.
\newblock General teleportation channel, singlet fraction, and
  quasidistillation.
\newblock {\em Physical Review A}, 60(3):1888--1898, September 1999.
\newblock arXiv:quant-ph/9807091.

\bibitem[HHH{\etalchar{+}}08a]{HHHLO08}
Karol Horodecki, Micha\l{} Horodecki, Pawe\l{} Horodecki, Debbie Leung, and
  Jonathan Oppenheim.
\newblock Quantum key distribution based on private states: Unconditional
  security over untrusted channels with zero quantum capacity.
\newblock {\em IEEE Transactions on Information Theory}, 54(6):2604--2620, June
  2008.
\newblock arXiv:quant-ph/0608195.

\bibitem[HHH{\etalchar{+}}08b]{PhysRevLett.100.110502}
Karol Horodecki, Micha\l{} Horodecki, Pawe\l{} Horodecki, Debbie Leung, and
  Jonathan Oppenheim.
\newblock Unconditional privacy over channels which cannot convey quantum
  information.
\newblock {\em Physical Review Letters}, 100(11):110502, March 2008.
\newblock arXiv:quant-ph/0702077.

\bibitem[HHHH09]{H42007}
Ryszard Horodecki, Pawel Horodecki, Michal Horodecki, and Karol Horodecki.
\newblock Quantum entanglement.
\newblock {\em Reviews of Modern Physics}, 81(2):865--942, June 2009.
\newblock arXiv:quant-ph/0702225.

\bibitem[HHHO05a]{HHHO05lock}
Karol Horodecki, Micha\l{} Horodecki, Pawe\l{} Horodecki, and Jonathan
  Oppenheim.
\newblock Locking entanglement with a single qubit.
\newblock {\em Physical Review Letters}, 94(20):200501, May 2005.
\newblock arXiv:quant-ph/0404096.

\bibitem[HHHO05b]{HHHO05}
Karol Horodecki, {Micha\l{}} Horodecki, {Pawe\l{}} Horodecki, and Jonathan
  Oppenheim.
\newblock Secure key from bound entanglement.
\newblock {\em Physical Review Letters}, 94(16):160502, April 2005.
\newblock arXiv:quant-ph/0309110.

\bibitem[HHHO09]{HHHO09}
Karol Horodecki, Michal Horodecki, Pawel Horodecki, and Jonathan Oppenheim.
\newblock General paradigm for distilling classical key from quantum states.
\newblock {\em IEEE Transactions on Information Theory}, 55(4):1898--1929,
  April 2009.
\newblock arXiv:quant-ph/0506189.

\bibitem[{Hol}02]{H02}
Alexander~S. {Holevo}.
\newblock Remarks on the classical capacity of quantum channel.
\newblock December 2002.
\newblock arXiv:quant-ph/0212025.

\bibitem[Hol06]{Hol06}
Alexander~S. Holevo.
\newblock Multiplicativity of p-norms of completely positive maps and the
  additivity problem in quantum information theory.
\newblock {\em Russian Mathematical Surveys}, 61(2):301--339, 2006.

\bibitem[Hol12]{H13book}
Alexander~S. Holevo.
\newblock {\em Quantum systems, channels, information: a mathematical
  introduction}, volume~16.
\newblock Walter de Gruyter, 2012.

\bibitem[HSR03]{HSR03}
Michal Horodecki, Peter~W. Shor, and Mary~Beth Ruskai.
\newblock Entanglement breaking channels.
\newblock {\em Reviews in Mathematical Physics}, 15(6):629--641, 2003.
\newblock arXiv:quant-ph/0302031.

\bibitem[KdR16]{KR16}
Lea Kraemer and Lidia del Rio.
\newblock Currencies in resource theories.
\newblock May 2016.
\newblock arXiv:1605.01064.

\bibitem[Kit97]{K97}
Alexei~Y. Kitaev.
\newblock Quantum computations: algorithms and error correction.
\newblock {\em Russian Mathematical Surveys}, 52(6):1191–--1249, December
  1997.

\bibitem[KLR{\etalchar{+}}08]{KLRBBJLOSW08}
E.~Knill, D.~Leibfried, R.~Reichle, J.~Britton, R.~B. Blakestad, J.~D. Jost,
  C.~Langer, R.~Ozeri, S.~Seidelin, and D.~J. Wineland.
\newblock Randomized benchmarking of quantum gates.
\newblock {\em Physical Review A}, 77(1):012307, January 2008.
\newblock arXiv:0707.0963.

\bibitem[KW04]{KW04}
Masato Koashi and Andreas Winter.
\newblock Monogamy of quantum entanglement and other correlations.
\newblock {\em Physical Review A}, 69:022309, February 2004.
\newblock arXiv:quant-ph/0310037.

\bibitem[KW17]{KW17}
Eneet Kaur and Mark~M. Wilde.
\newblock Upper bounds on secret key agreement over lossy thermal bosonic
  channels.
\newblock June 2017.
\newblock arXiv:1706.04590.

\bibitem[LG15]{L15}
Norbert L{\"u}tkenhaus and Saikat Guha.
\newblock Quantum repeaters: Objectives, definitions and architectures.
\newblock Available at \url{http://wqrn.pratt.duke.edu/presentations.html}, May
  2015.

\bibitem[LHL03]{LHL03}
Matthew~S. Leifer, Leah Henderson, and Noah Linden.
\newblock Optimal entanglement generation from quantum operations.
\newblock {\em Physical Review A}, 67(1):012306, January 2003.
\newblock arXiv:quant-ph/0205055.

\bibitem[Lin73]{Lindblad1973}
G{\"o}ran Lindblad.
\newblock Entropy, information and quantum measurements.
\newblock {\em Communications in Mathematical Physics}, 33(4):305--322,
  December 1973.

\bibitem[Lin75]{Lindblad1975}
G\"oran Lindblad.
\newblock Completely positive maps and entropy inequalities.
\newblock {\em Communications in Mathematical Physics}, 40(2):147--151, June
  1975.

\bibitem[LKDW17]{LKDW17}
Felix Leditzky, Eneet Kaur, Nilanjana Datta, and Mark~M. Wilde.
\newblock Approaches for approximate additivity of the {Holevo} information of
  quantum channels.
\newblock September 2017.
\newblock arXiv:1709.01111.

\bibitem[LLS17]{LLS17}
Felix Leditzky, Debbie Leung, and Graeme Smith.
\newblock Quantum and private capacities of low-noise channels.
\newblock May 2017.
\newblock arXiv:1705.04335.

\bibitem[LPSW05]{Linden2005}
Noah Linden, Sandu Popescu, Benjamin Schumacher, and Michael Westmoreland.
\newblock Reversibility of local transformations of multiparticle entanglement.
\newblock {\em Quantum Information Processing}, 4(3):241--250, August 2005.
\newblock arXiv:quant-ph/9912039.

\bibitem[LR73a]{PhysRevLett.30.434}
Elliott~H. Lieb and Mary~Beth Ruskai.
\newblock A fundamental property of quantum-mechanical entropy.
\newblock {\em Physical Review Letters}, 30(10):434--436, March 1973.

\bibitem[LR73b]{LR73}
Elliott~H. Lieb and Mary~Beth Ruskai.
\newblock Proof of the strong subadditivity of quantum-mechanical entropy.
\newblock {\em Journal of Mathematical Physics}, 14:1938--1941, 1973.

\bibitem[LW14]{LW14}
Ke~Li and Andreas Winter.
\newblock Squashed entanglement, $k$-extendibility, quantum {Markov} chains,
  and recovery maps.
\newblock October 2014.
\newblock arXiv:1410.4184.

\bibitem[MH12]{Mul12}
Alexander M{\"u}ller-Hermes.
\newblock Transposition in quantum information theory.
\newblock Master's thesis, Technical University of Munich, September 2012.

\bibitem[MLDS{\etalchar{+}}13]{MDSFT13}
Martin {M\"uller}-Lennert, Fr\'ed\'eric Dupuis, Oleg Szehr, Serge Fehr, and
  Marco Tomamichel.
\newblock On quantum {R\'enyi} entropies: a new definition and some properties.
\newblock {\em Journal of Mathematical Physics}, 54(12):122203, December 2013.
\newblock arXiv:1306.3142.

\bibitem[MS14]{MS14}
Iman Marvian and Robert~W. Spekkens.
\newblock Modes of asymmetry: The application of harmonic analysis to symmetric
  quantum dynamics and quantum reference frames.
\newblock {\em Physical Review A}, 90(6):062110, December 2014.
\newblock arXiv:1312.0680.

\bibitem[MW14]{MW12}
William Matthews and Stephanie Wehner.
\newblock Finite blocklength converse bounds for quantum channels.
\newblock {\em IEEE Transactions on Information Theory}, 60(11):7317--7329,
  November 2014.
\newblock arXiv:1210.4722.

\bibitem[NFC09]{NFC09}
Julien Niset, Jarom\'{\i}r Fiurasek, and Nicolas~J. Cerf.
\newblock No-go theorem for {Gaussian} quantum error correction.
\newblock {\em Physical Review Letters}, 102(12):120501, March 2009.
\newblock arXiv:0811.3128.

\bibitem[Per96]{P96}
Asher Peres.
\newblock Separability criterion for density matrices.
\newblock {\em Physical Review Letters}, 77(8):1413--1415, August 1996.
\newblock arXiv:quant-ph/9604005.

\bibitem[PLOB16]{PLOB15}
Stefano Pirandola, Riccardo Laurenza, Carlo Ottaviani, and Leonardo Banchi.
\newblock Fundamental limits of repeaterless quantum communications.
\newblock April 2016.
\newblock arXiv:1510.08863v5.

\bibitem[Rai99]{R99}
Eric~M. Rains.
\newblock Bound on distillable entanglement.
\newblock {\em Physical Review A}, 60(1):179--184, July 1999.
\newblock arXiv:quant-ph/9809082.

\bibitem[Rai01]{R01}
Eric~M. Rains.
\newblock A semidefinite program for distillable entanglement.
\newblock {\em IEEE Transactions on Information Theory}, 47(7):2921--2933,
  November 2001.
\newblock arXiv:quant-ph/0008047.

\bibitem[RGR{\etalchar{+}}17]{RGRKVRHWE17}
Filip Rozpedek, Kenneth Goodenough, Jeremy Ribeiro, Norbert Kalb,
  Valentina~Caprara Vivoli, Andreas Reiserer, Ronald Hanson, Stephanie Wehner,
  and David Elkouss.
\newblock Realistic parameter regimes for a single sequential quantum repeater.
\newblock May 2017.
\newblock arXiv:1705.00043.

\bibitem[RKB{\etalchar{+}}17]{RKBKMA17}
Luca Rigovacca, Go~Kato, Stefan Baeuml, M.~S. Kim, W.~J. Munro, and Koji Azuma.
\newblock Versatile relative entropy bounds for quantum networks.
\newblock July 2017.
\newblock arXiv:1707.05543.

\bibitem[Shi16]{S16cont}
Maksim~E. Shirokov.
\newblock Continuity bounds for information characteristics of quantum channels
  depending on input dimension and on input energy.
\newblock October 2016.
\newblock arXiv:1610.08870.

\bibitem[SN96]{PhysRevA.54.2629}
Benjamin Schumacher and Michael~A. Nielsen.
\newblock Quantum data processing and error correction.
\newblock {\em Physical Review A}, 54(4):2629--2635, October 1996.
\newblock arXiv:quant-ph/9604022.

\bibitem[SSW08]{SSW08}
Graeme Smith, John~A. Smolin, and Andreas Winter.
\newblock The quantum capacity with symmetric side channels.
\newblock {\em IEEE Transactions on Information Theory}, 54(9):4208--4217,
  September 2008.
\newblock arXiv:quant-ph/0607039.

\bibitem[SSWR14]{SSWR14}
David Sutter, Volkher~B. Scholz, Andreas Winter, and Renato Renner.
\newblock Approximate degradable quantum channels.
\newblock December 2014.
\newblock arXiv:1412.0980.

\bibitem[STW16]{STW16}
Kaushik~P. Seshadreesan, Masahiro Takeoka, and Mark~M. Wilde.
\newblock Bounds on entanglement distillation and secret key agreement for
  quantum broadcast channels.
\newblock {\em IEEE Transactions on Information Theory}, 62(5):2849--2866, May
  2016.
\newblock arXiv:1503.08139.

\bibitem[TGW14a]{TGW14Nat}
Masahiro Takeoka, Saikat Guha, and Mark~M. Wilde.
\newblock Fundamental rate-loss tradeoff for optical quantum key distribution.
\newblock {\em Nature Communications}, 5:5235, October 2014.
\newblock arXiv:1504.06390.

\bibitem[TGW14b]{TGW14IEEE}
Masahiro Takeoka, Saikat Guha, and Mark~M. Wilde.
\newblock The squashed entanglement of a quantum channel.
\newblock {\em IEEE Transactions on Information Theory}, 60(8):4987--4998,
  August 2014.
\newblock arXiv:1310.0129.

\bibitem[TSW16]{TSW16}
Masahiro Takeoka, Kaushik~P. Seshadreesan, and Mark~M. Wilde.
\newblock Unconstrained distillation capacities of a pure-loss bosonic
  broadcast channel.
\newblock In {\em 2016 IEEE International Symposium on Information Theory
  (ISIT)}, pages 2484--2488, July 2016.

\bibitem[TSW17]{TSW17}
Masahiro Takeoka, Kaushik~P. Seshadreesan, and Mark~M. Wilde.
\newblock Unconstrained capacities of quantum key distribution and entanglement
  distillation for pure-loss bosonic broadcast channels.
\newblock {\em Physical Review Letters}, 119(15):150501, October 2017.
\newblock arXiv:1706.06746.

\bibitem[Tuc99]{T99}
Robert~R. Tucci.
\newblock Quantum entanglement and conditional information transmission.
\newblock 1999.
\newblock arXiv:quant-ph/9909041.

\bibitem[Tuc02]{T02}
Robert~R. Tucci.
\newblock Entanglement of distillation and conditional mutual information.
\newblock 2002.
\newblock arXiv:quant-ph/0202144.

\bibitem[TWW17]{TWW14}
Marco Tomamichel, Mark~M. Wilde, and Andreas Winter.
\newblock Strong converse rates for quantum communication.
\newblock {\em IEEE Transactions on Information Theory}, 63(1):715--727,
  January 2017.
\newblock arXiv:1406.2946.

\bibitem[Uhl76]{U76}
Armin Uhlmann.
\newblock The ``transition probability'' in the state space of a *-algebra.
\newblock {\em Reports on Mathematical Physics}, 9(2):273--279, 1976.

\bibitem[Ume62]{U62}
Hisaharu Umegaki.
\newblock Conditional expectations in an operator algebra {IV} (entropy and
  information).
\newblock {\em Kodai Mathematical Seminar Reports}, 14(2):59--85, 1962.

\bibitem[VP98]{VP98}
Vlatko Vedral and Martin~B. Plenio.
\newblock Entanglement measures and purification procedures.
\newblock {\em Physical Review A}, 57(3):1619--1633, March 1998.
\newblock arXiv:quant-ph/9707035.

\bibitem[Wat15]{Wat15}
John Watrous.
\newblock {\em Theory of Quantum Information}.
\newblock November 2015.

\bibitem[Wer89]{W89}
Reinhard~F. Werner.
\newblock Quantum states with {Einstein-Podolsky-Rosen} correlations admitting
  a hidden-variable model.
\newblock {\em Physical Review A}, 40(8):4277--4281, October 1989.

\bibitem[Wer01]{Werner01}
Reinhard~F. Werner.
\newblock All teleportation and dense coding schemes.
\newblock {\em Journal of Physics A: Mathematical and General}, 34(35):7081,
  September 2001.
\newblock arXiv:quant-ph/0003070.

\bibitem[Wil16a]{W15book}
Mark~M. Wilde.
\newblock {\em From Classical to Quantum Shannon Theory}.
\newblock March 2016.
\newblock arXiv:1106.1445v7.

\bibitem[Wil16b]{Wilde2016}
Mark~M. Wilde.
\newblock Squashed entanglement and approximate private states.
\newblock {\em Quantum Information Processing}, 15(11):4563--4580, November
  2016.
\newblock arXiv:1606.08028.

\bibitem[Win16]{Winter15}
Andreas Winter.
\newblock Tight uniform continuity bounds for quantum entropies: conditional
  entropy, relative entropy distance and energy constraints.
\newblock {\em Communications in Mathematical Physics}, 347(1):291--313,
  October 2016.
\newblock arXiv:1507.07775.

\bibitem[WPGG07]{WPG07}
Michael~M. Wolf, David P\'erez-Garc\'\i{}a, and Geza Giedke.
\newblock Quantum capacities of bosonic channels.
\newblock {\em Physical Review Letters}, 98(13):130501, March 2007.
\newblock arXiv:quant-ph/0606132.

\bibitem[WR12]{WR12}
Ligong Wang and Renato Renner.
\newblock One-shot classical-quantum capacity and hypothesis testing.
\newblock {\em Physical Review Letters}, 108(20):200501, May 2012.
\newblock arXiv:1007.5456.

\bibitem[WTB17]{WTB16}
Mark~M. Wilde, Marco Tomamichel, and Mario Berta.
\newblock Converse bounds for private communication over quantum channels.
\newblock {\em IEEE Transactions on Information Theory}, 63(3):1792--1817,
  March 2017.
\newblock arXiv:1602.08898.

\bibitem[WWY14]{WWY13}
Mark~M. Wilde, Andreas Winter, and Dong Yang.
\newblock Strong converse for the classical capacity of entanglement-breaking
  and {Hadamard} channels via a sandwiched {R\'enyi} relative entropy.
\newblock {\em Communications in Mathematical Physics}, 331(2):593--622,
  October 2014.
\newblock arXiv:1306.1586.

\bibitem[WY16]{WY16}
Andreas Winter and Dong Yang.
\newblock Potential capacities of quantum channels.
\newblock {\em IEEE Transactions on Information Theory}, 62(3):1415--1424,
  March 2016.
\newblock arXiv:1505.00907.

\end{thebibliography}

\end{document}